\documentclass[11pt,USenglish]{article}
%\documentclass[letterpaper,USenglish,numberwithinsect,cleveref, autoref]{socg-lipics-v2019}
%This is a template for producing LIPIcs articles. 
%See lipics-manual.pdf for further information.
%for A4 paper format use option "a4paper", for US-letter use option "letterpaper"
%for british hyphenation rules use option "UKenglish", for american hyphenation rules use option "USenglish"
%for section-numbered lemmas etc., use "numberwithinsect"
%for enabling cleveref support, use "cleveref"
%for enabling cleveref support, use "autoref"

%\graphicspath{{./graphics/}}%helpful if your graphic files are in another directory

\bibliographystyle{plainurl}% the mandatory bibstyle

%\usepackage{microtype}%if unwanted, comment out or use option "draft"

%%%%%%%%%%%%%%%%%%%%%%%
%%%%%%%%%%%%%%%%%%%%%%%
\usepackage{geometry}
\newgeometry{vmargin={1in}, hmargin={1in,1in}}  

\usepackage{tikz}
\usetikzlibrary{shapes,arrows,arrows.meta,shadows,positioning,matrix,patterns,decorations.markings,decorations.pathreplacing,svg.path,calc,angles,quotes} 
\usepackage{color}
\usepackage{amssymb}
\usepackage[integrals]{wasysym}
\usepackage{bm}
\usepackage{wrapfig}
\usepackage{amsthm}
\usepackage{amsmath}
\usepackage{amsfonts}
\usepackage{tabu}
\usepackage{enumerate}
\usepackage[mathscr]{eucal}
\usepackage{caption}
\usepackage{authblk}
\usepackage{sectsty}

\usepackage{etoolbox}   % need for command \newtoggle
\newtoggle{abstract}    % flag to control whether output is abstract or full version
%\toggletrue{abstract}   % uncomment for abstract version
\togglefalse{abstract} % uncomment for full version

%\allsectionsfont{\normalfont\sffamily\bfseries}

\theoremstyle{plain}% default
\newtheorem{theorem}{Theorem}
\newtheorem{lemma}[theorem]{Lemma}
\newtheorem{proposition}[theorem]{Proposition}

\theoremstyle{definition}
\newtheorem{definition}[theorem]{Definition}

\def\cE{\mathscr{E}}
\newcommand{\sgn}{\operatorname{sgn}}

\def\eps{\varepsilon}
\def\cT{\frac{5}{\sqrt{3}}-1}

%%%%%%%%%%%%%%%%%%%%%%%%%%%%
%%%%%%%%%%%%%%%%%%%%%%%%%%%

\title{The Stretch Factor of Hexagon-Delaunay Triangulations}

\author[1]{Michael Dennis\thanks{michael\_dennis@cs.berkeley.edu}}

\author[2]{Ljubomir Perkovi\'{c}\thanks{lperkovic@cs.depaul.edu (corresponding author)}}

\author[2]{Duru T\"{u}rko\u{g}lu\thanks{duru@cs.uchicago.edu}}

\affil[1]{Computer Science Division, University of California at Berkeley, Berkeley, CA, USA}
\affil[2]{School of Computing, DePaul University, Chicago, IL, USA}

\date{}

\input{figures.tex}

\begin{document}

%\definecolor{qqwuqq}{rgb}{0.,0.39215686274509803,0.}
%\definecolor{wwzzff}{rgb}{0.4,0.6,1.}

\maketitle

\begin{abstract}
The problem of computing the exact stretch factor (i.e., the tight bound on the
worst case stretch factor) of a Delaunay triangulation is one of the
longstanding open problems in computational geometry. Over the
years, a series of upper and lower bounds on the exact stretch factor have
been obtained but the gap between them is still large. An alternative approach
to solving the problem is to develop techniques for computing the exact stretch
factor of ``easier'' types of Delaunay triangulations, in particular those
defined using regular-polygons instead
of a circle. Tight bounds exist for Delaunay triangulations defined using
an equilateral triangle~\cite{Chew89} and a square~\cite{BGHP15}. In this paper,
we determine the exact stretch factor of Delaunay triangulations defined using
a regular hexagon: It is $2$.

We think that the main contribution of this paper are the two techniques we
have developed to compute tight upper bounds for the stretch factor of
Hexagon-Delaunay triangulations.% The two techniques correspond to the two
%different lower bound construtions which are equivalent to the two tightest
%known lower bound for traditional Delaunay triangulations.  
 %We think that the techniques
%we
%have developed in this paper will prove useful in future work on
%computing the exact stretch factor of classical Delaunay triangulations and
%other plane spanners.
\end{abstract}

%\clearpage
%\setcounter{page}{1}

\section{Introduction}
\label{sec:intro}
In this paper we consider the problem of computing a tight bound
on the worst case stretch factor of a Delaunay triangulation. 

Given a set $P$ of points on the plane, the Delaunay triangulation $T$ on $P$
is a plane graph such that for every pair $u,v \in P$, $(u,v)$ is an edge of
$T$ if and only if there is a {\em circle} passing through $u$ and $v$ with no
point of $P$ in its interior\footnote{This definition assumes that points in
$P$ are in general position which we discuss in Section~\ref{sec:prelim}.}. 
In this paper, we refer to Delaunay triangulations defined using the
{\em circle} as
$\ocircle$-Delaunay triangulations. The $\ocircle$-Delaunay triangulation 
$T$ of $P$ is a plane subgraph of the complete, weighted Euclidean graph
$\cE^P$ on $P$ in which the weight of an edge is the Euclidean distance between
its endpoints. Graph $T$ is also a {\em spanner}, defined as a subgraph of
$\cE^P$ with
the property that the distance in the subgraph between any pair of points is
no more than a constant multiplicative ratio of the distance in $\cE^P$ between
the points. The constant ratio is referred to as the {\em stretch factor} (or
{\em spanning ratio}) of the spanner.

The problem of computing a tight bound on the worst case stretch factor of the
$\ocircle$-Delaunay triangulation has been open for more than three decades. 
In the 1980s, when $\ocircle$-Delaunay triangulations were not known to be
spanners, Chew considered related, ``easier'' structures. In
1986~\cite{Chew86}, Chew proved that a $\square$-Delaunay
triangulation---defined using a {\em fixed-orientation square}
%\footnote{\label{note1}Defined more
%precisely in Section~\ref{sec:prelim}}
instead of a circle---is a 
spanner with stretch factor at most $\sqrt{10}$. Following this, Chew
proved that the $\triangle$-Delaunay
triangulation---defined using a
{\em fixed-orientation equilateral triangle}
--- has a stretch factor of~$2$~\cite{Chew89}. Significantly, this bound is
tight: one can construct $\triangle$-Delaunay
triangulations with stretch factor arbitrarily close to 2.
Finally, Dobkin et al.~\cite{DFS90} showed that the $\ocircle$-Delaunay
triangulation is a spanner as well. The bound on the stretch factor they obtained was
subsequently improved by Keil and Gutwin~\cite{KG92} as shown in
Table~\ref{ta:related}.
The bound by Keil and Gutwin stood unchallenged
for many years until Xia recently improved the bound to below
2~\cite{Xia13}.

\begin{table}[!b]
\caption{Key stretch factor upper bounds (tight bounds are bold).}
\centering
\begin{tabular}{llr}
{\bf Paper} &  {\bf Graph} & {\bf Upper Bound} \\ \hline
\cite{DFS90} & $\ocircle$-Delaunay & $\pi(1+\sqrt{5})/2 \approx 5.08$ \\ \hline
\cite{KG92} & $\ocircle$-Delaunay & $4\pi/(3\sqrt{3}) \approx 2.41$ \\ \hline
\cite{Xia13} & $\ocircle$-Delaunay & $1.998$ \\ \hline \hline
\cite{Chew89} & $\triangle$-Delaunay & $\mathbf{2}$ \\ \hline \hline
\cite{Chew86} & $\square$-Delaunay & $\sqrt{10} \approx 3.16$ \\ \hline
\cite{BGHP15} & ${\square}$-Delaunay \hspace{3ex} &
$\mathbf{\sqrt{4+2\sqrt{2}}\approx 2.61}$ \\ \hline \hline
{\bf [This paper]} & {\Large\varhexagon}-Delaunay \hspace{3ex} &
$\mathbf{2}$\\ \hline
\end{tabular}
\label{ta:related}
\end{table}

On the lower bound side, some progress has been made on bounding the worst
case stretch factor of a $\ocircle$-Delaunay triangulation. The trivial lower
bound of $\pi/2 \approx 1.5707$ has been improved to
$1.5846$~\cite{BDLSV11} and then to $1.5932$~\cite{XZ11}. 

After three decades of research, we know that the worst case
stretch factor of $\ocircle$-Delaunay triangulations is somewhere between
$1.5932$ and $1.998$. Unfortunately, the techniques that have been
developed so far seem inadequate for proving a tight stretch factor bound.

Rather than attempting to improve further the bounds on
the stretch factor of $\ocircle$-Delaunay triangulations, we follow an
alternative approach. Just like Chew turned to $\triangle$- and
$\square$-Delaunay triangulations to develop insights useful for showing
that $\ocircle$-Delaunay triangulations are spanners, we make use of Delaunay
triangulations defined using regular polygons to develop techniques for
computing tight stretch factor bounds. Delaunay triangulations based on
regular polygons are known to be spanners (Bose et al.~\cite{BCCS08}).
Tight bounds are known for $\triangle$-Delaunay triangulations~\cite{Chew89}
and also for $\square$-Delaunay triangulations (Bonichon et al.~\cite{BGHP15})
as shown in Table~\ref{ta:related}. %This paper is part of a project that builds
%on these results to extend them to Delaunay triangulations defined using other
%regular polygons. The motivation is that the work results in the development
%of techniques that may be useful for proving a tight bound on the stretch factor
%of $\ocircle$-Delaunay triangulations.

In this paper, we show that the worst case stretch factor of
{\Large\varhexagon}-Delaunay triangulations is 2. We present an overview of
our proof in Section~\ref{sec:main}. 
\iftoggle{abstract} 
{The overview makes use of three lemmas 
whose detailed proofs are ommitted; the proofs (briefly discussed in 
Sections~\ref{sec:gentle}, \ref{sec:proofA}, and  \ref{sec:proofB}) are
in the full version of the paper (Appendix).}
{The overview makes use of three
lemmas whose detailed proofs are in Sections~\ref{sec:gentle}, \ref{sec:proofA}, and  \ref{sec:proofB}.} 
We think that our main
contribution consists of two techniques that we use to compute tight upper 
bounds  on the stretch factor of particular types of
{\Large\varhexagon}-Delaunay triangulations. In Section~\ref{sec:conclusion}
we review the role of the techniques in the paper and explore their potential
to be applied to other kinds of Delaunay triangulations.

\section{Preliminaries}
\label{sec:prelim}
We consider a finite set $P$ of points in the two-dimensional plane with an
orthogonal coordinate system. The $x$- and $y$-coordinates of a point $p$ will
be denoted by ${\tt x}(p)$ and ${\tt y}(p)$, respectively. The
Euclidean graph $\cE^P$ of $P$ is the complete weighted graph embedded in the
plane whose nodes are identified with the points of $P$. For
every pair of nodes $p$ and $q$, the edge $(p,q)$ represents the
segment $[pq]$ and the weight of $(p,q)$ is the Euclidean
distance between $p$ and $q$ which is 
$d_2(p,q) = \sqrt{({\tt x}(p)-{\tt x}(q))^2+({\tt y}(p)-{\tt y}(q))^2}$.
Our arguments also use the $x$-coordinate distance between 
$p$ and $q$ which we denote as $d_x(p,q) = |{\tt x}(p)-{\tt x}(q)|$.

Let $T$ be a subgraph of $\cE^P$. The length of a path in $T$ is the sum of the
weights of the edges of the path and the distance $d_T(p,q)$ in $T$
between two points $p$ and $q$ is the length of the shortest
path in $T$ between them. $T$ is a $t$-spanner for some constant $t>0$ 
if for every pair of points $p,q$ of $P$, $d_T(p,q) \leq t \cdot d_2(p,q)$. The
constant $t$ is referred to as the {\em stretch factor} of $T$.

We define a {\em family of spanners} to be a set of graphs $T^P,$
one for every finite set $P$ of points in the plane, such that for some
constant $t>0$, every $T^P$ is a $t$-spanner of $\cE^P$. 
We say that the stretch factor $t$ is exact (tight) for the family
(or that the worst case stretch factor is $t$) if for every $\epsilon > 0$
there exists a set of points $P$ such that $T^P$ is {\em not} a
$(t-\epsilon)$-spanner of $\cE^P$. 

The families of spanners we consider are various types of Delaunay
triangulations on a set $P$ of points in the plane. Given a set $P$ of
points on the plane, we say that a convex, closed, simple curve in the plane is
empty if it contains no point of $P$ in its interior. The $\ocircle$-Delaunay
triangulation $T$ on $P$ is defined as follows: For every pair $u,v \in P$,
$(u,v)$ is an edge of $T$ if and only if there is an empty {\em circle} passing through
$u$ and $v$. (This definition assumes that
points are in general position which in the case of $\ocircle$-Delaunay
triangulations means that no four points of $P$ are co-circular.) If,
in the definition, 
{\em circle} is replaced by {\em fixed-orientation square} (e.g., a square
whose sides are axis-parallel) or by {\em fixed-orientation equilateral
triangle} then different triangulations are obtained: the $\square$-
and the $\triangle$-Delaunay triangulations.

%\begin{wrapfigure}{r}{3.8cm}
%\begin{center}
%\begin{tikzpicture}[scale=1]
%\draw [dashed] (0,0) rectangle (3, -3);
%\node (h) [draw, color=gray, shape border rotate=30, minimum size=1.5in, regular polygon, regular polygon sides=6] at (0,0) {};

%\foreach \anchor/\placement/\label in
%    {corner 1/above/N, corner 2/above left/NW, corner 3/below left/SW, corner 4/below/S, corner 5/below right/SE, corner 6/above right/NE}
%\draw[shift=(h.\anchor)] plot[mark=*,mark size=1pt] coordinates{(0,0)} 
%node [\placement] {{\small\label}};

%\foreach \anchor/\placement/\label in
%    {side 1/above/NW, side 2/left/W, side 3/below/SW, side 4/below/SE, side 5/right/E, side 6/above/NE}
%\draw[shift=(h.\anchor)] plot coordinates{(0,0)} 
%node [\placement] {{\small\label}};

%\end{tikzpicture}
%\end{center}
%\caption{The hexagon orientation and the side and vertex labels that we use}
%\label{fig:lower}
%\end{wrapfigure}

If, in the definition of the $\ocircle$-Delaunay triangulation, we change
{\em circle} to {\em fixed-orientation regular hexagon}, then a
{\Large\varhexagon}-Delaunay triangulation is obtained. In this paper
we focus on such triangulations. While any fixed orientation of the
hexagon is possible, we choose w.l.o.g. the orientation that has two sides of
the hexagon parallel to the $y$-axis as shown in Fig.~\ref{fig:lower}-(a).
In the remainder of the paper, {\em hexagon} will always refer to a regular
hexagon with such an orientation. We find it useful to label the vertices of
the hexagon $N$, $E_N$, $E_S$, $S$, $W_S$, and $W_N$, in clockwise order and
starting with the
top one. We also label the sides $n_e$, $e$, $s_e$, $s_w$, $w$, and $n_w$ as
shown in  Fig.~\ref{fig:lower}-(a); we will sometimes refer to the $s_e$
and $s_w$ sides as the $s$ sides and to the $n_e$ and $n_w$ sides as the $n$
sides.

\iftoggle{abstract}
{\begin{figure}}
{\begin{figure}[!b]}
\center{\lowerbound

\hspace{-0.4cm} (a) \hspace{4.6cm} (b) \hspace{4.9cm} (c)
}
\caption{(a) The hexagon orientation and the side and vertex labels that we
use (b) A {\Large\varhexagon}-Delaunay triangulation with points
$p$, $q$, $p_k$, and $q_0$ having coordinates $(0,0)$,
$(1, \frac{1}{\sqrt{3}})$, $(\delta, \frac{2}{\sqrt{3}}-\sqrt{3}\delta)$,
and $(1-\delta, -\frac{1}{\sqrt{3}}+\sqrt{3}\delta)$, respectively.
\iftoggle{abstract}{For $\delta$ small enough, $d_T(p,q) \geq (2-\epsilon)d_2(p,q)$.}{} (c) A closer look at the bottom faces of this triangulation.}

\label{fig:lower}
\end{figure}

The definition of the {\Large\varhexagon}-Delaunay triangulation 
assumes that no four points lie on the boundary of an empty
hexagon. Our arguments also assume that no two points lie on a line whose
slope matches the slope of a side of the hexagon (i.e. slopes 
$\infty, \frac{1}{\sqrt{3}}, -\frac{1}{\sqrt{3}}$). 
%; note that this implies that no two points
%lie on a side of an empty hexagon.
The {\em general position} assumption we therefore make in this paper consists
of the above two restrictions. This assumption is made solely for the purpose of
simplifying the presentation; the arguments in the paper could be extended so
the results apply to all {\Large\varhexagon}-Delaunay triangulations.
Finally, we need to be aware that unlike the $\ocircle$-Delaunay
triangulation on $P$, the {\Large\varhexagon}-Delaunay (and also the
$\square$- and  $\triangle$-Delaunay) triangulation on $P$ may not contain
all edges on the convex hull of $P$. To handle this and simplify our
arguments, we add to $P$ six additional points, very close to but not exactly 
(in order to satisfy the above assumptions) at coordinates $(0,\pm M)$ and 
$(\pm M \cos(\pi/6), \pm M \sin(\pi/6))$ where $M > 50\max_{s,t \in P}d_2(s,t)$.
The {\Large\varhexagon}-Delaunay triangulation on this modified set of points
$P$, consisting of the original triangulation plus additional
edges between the new points and original points and also between the new
points themselves,  
includes the edges on the convex hull of $P$. Also, any path in
this triangulation between two points $s$ and $t$ from the original set $P$
with length bounded by $2d_2(s,t)$ cannot possibly use the added points. Thus
a proof of our main result for the modified triangulation will also be a proof
for the original one and so we assume that the {\Large\varhexagon}-Delaunay
triangulation on $P$ includes the edges on the convex hull of $P$.

We end this section with a lower bound, by Bonichon~\cite{Bonichon}, on the
worst case stretch factor of {\Large\varhexagon}-Delaunay triangulations.
\iftoggle{abstract}
{The
lower bound construction is illustrated in Fig.~\ref{fig:lower}-(b) and
Fig.~\ref{fig:lower}-(c). The proof is ommited but appears in the Appendix.
}{}
\begin{lemma}
\label{le:lower_bound}
For every $\eps > 0 $, there exists a set $P$ of points in the
plane such that the {\Large\varhexagon}-Delaunay triangulation on $P$
has stretch factor at least  $2 - \eps$.
\end{lemma}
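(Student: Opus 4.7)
The plan is to construct, for any given $\epsilon > 0$, a finite point set $P$ in the plane in which the $\varhexagon$-Delaunay graph-distance between some pair of points exceeds $(2-\epsilon)$ times their Euclidean distance. Guided by the construction sketched in Figure~\ref{fig:lower}, I would place $P$ along two parallel ``columns'': points $p_0, p_1, \ldots, p_k$ evenly spaced with step $s$ along one line at angle $\alpha$, and points $q_0, q_1, \ldots, q_k$ similarly placed along a parallel line offset perpendicularly by $L$ and offset along the column direction by $\delta<0$ (so that $q_0$ sits ``behind'' $p_0$ along the column). With the right choice of $\alpha$ relative to the hexagon's axes, the triangulation of $P$ has a very constrained structure: between the two columns only the ``horizontal'' cross edges $(p_i, q_i)$ and the ``diagonal'' cross edges $(p_{i+1}, q_i)$ are present, in addition to the in-column edges $(p_i, p_{i+1})$ and $(q_i, q_{i+1})$.

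The first step is to verify this triangulation structure. For each asserted edge I would exhibit an empty hexagon witnessing it; two such witnessing hexagons, $H_1$ for $(p_0,q_0)$ and $H_k$ for $(p_k,q_k)$, appear in Figure~\ref{fig:lower}. Then I would rule out every ``long'' edge $(p_i, q_j)$ with $|i-j|\ge 2$ by producing, for any candidate hexagon through such a pair, an intermediate $p_m$ or $q_m$ (with $\min(i,j) < m < \max(i,j)$) lying in its interior. This is the main technical obstacle: the angle $\alpha$ and the ratio $s/L$ must be tuned so that all the short edges still admit empty hexagons while no empty hexagon can span more than one column step.

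Once the edge structure is established, the shortest-path analysis is essentially forced. Since every available edge changes the column-index by at most $1$, any path from $p_0$ to $q_k$ contains at least $k$ column edges plus at least one cross edge, giving total length at least $ks + \sqrt{\delta^2 + L^2}$, while the Euclidean distance is $|p_0 q_k| = \sqrt{L^2 + (ks+\delta)^2}$. Writing $c = \sqrt{\delta^2 + L^2}$ and treating $t = ks$ as a continuous parameter, the ratio $(t + c)/\sqrt{(t+\delta)^2 + L^2}$ is maximized at $t = c$, where a short algebraic manipulation yields the value $\sqrt{2c/(c+\delta)}$. This equals $2$ precisely when $\delta = -L/\sqrt{3}$, i.e. when $|p_0 q_0|$ makes a $30^\circ$ angle with the perpendicular to the columns.

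For any $\epsilon > 0$ I would therefore select the parameters $(k, s, L, \delta, \alpha)$ in a small neighbourhood of this optimum, with $\delta/L$ slightly smaller in magnitude than $1/\sqrt{3}$ so that the configuration is in general position (no four points cohexagonal), and with $k$ large and $s$ correspondingly small so that $ks$ approaches $c$. This produces a sequence of valid point sets whose stretch factor approaches $2$ from below, establishing the bound $2-\epsilon$.
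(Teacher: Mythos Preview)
Your construction is essentially the paper's: two parallel columns $p_0,\ldots,p_k$ and $q_0,\ldots,q_k$ whose $\varhexagon$-Delaunay triangulation contains only the column edges and the short cross edges $(p_i,q_i)$, $(p_{i+1},q_i)$, with the stretch between $p_0$ and $q_k$ tending to~$2$. The paper fixes explicit coordinates governed by a single perturbation parameter $\delta=\frac{1}{k+2}$ rather than optimizing over $(s,L,\delta,\alpha)$, but the limiting geometry---including the $30^\circ$ angle you derive from $\delta=-L/\sqrt{3}$---is identical.

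One small imprecision: the claim that every $p_0$--$q_k$ path contains ``at least $k$ column edges plus at least one cross edge'' is not literally true, because a diagonal $(q_i,p_{i+1})$ also increments the index and can substitute for a column edge. The length lower bound $ks+\sqrt{\delta^2+L^2}$ is nevertheless correct, but the counting needs a line or two more care (e.g.\ account separately for net index change and net column change). The paper is equally informal here---it simply asserts that $p_0,p_1,\ldots,p_k,q_k$ is shortest---so on this point you are no less rigorous than the original.
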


\iftoggle{abstract}
{}{\begin{proof}
Let $k$ be some positive integer and let points $p=p_0$, $q=q_k$, $p_k$, and
$q_0$ have coordinates $(0,0)$, $(1, \frac{1}{\sqrt{3}})$,
$(\delta, \frac{2}{\sqrt{3}}-\sqrt{3}\delta)$,
and $(1-\delta, -\frac{1}{\sqrt{3}}+\sqrt{3}\delta)$, respectively, where 
$\delta = \frac{1}{k+2}$ (see Fig.~\ref{fig:lower}-(b)). Additional $k-1$ 
points $p_1,\dots,p_{k-1}$ are placed on line segment $[p_0p_k]$ and another
$k-1$ points $q_1,\dots,q_{k-1}$ on line segment $[q_0q_k]$ so that all segments
$[p_{i-1}p_i]$ and $[q_{i-1}q_i]$, for $i=1,\dots,k$,  have equal length. 

For every $i = 1, 2, \dots, k$, if $H_i$ is the hexagon of minimum width
$1-\delta$ with $p_{i-1}$ and $p_{i}$ on its $w$ and $n_w$ sides,
then $q_{i-1}$ is exactly the $E_S$ vertex of $H_i$
(e.g., refer to $p_0$, $p_1$, $q_0$, and $H_1$ in Fig.~\ref{fig:lower}-(c)).
This means that all points $q_j$ with
$j \not= i-1$ as well as all points $p_j$ with $j \not= i-1, i$ lie outside
of $H_i$. Therefore, for every $i = 1, 2, \dots, k$, points $p_{i-1}$, $p_i$,
and $q_{i-1}$ define a triangle in the {\Large\varhexagon}-Delaunay
triangulation $T$ on $P$. A similar argument shows that for every
$i = 1, 2, \dots, k$, points $q_{i-1}$, $q_i$, and $p_i$ define a triangle
in $T$ and so the triangulation is as shown in Fig.~\ref{fig:lower}-(b).

A shortest path from $p$ to $q$ in $T$ is, for example,
$p=p_{0}, p_1, \dots, p_k, q$ and so $d_T(p,q) = d_2(p_0,p_k) + d_2(p_k,q_k)$
which tends to $\frac{4}{\sqrt{3}}$ from below as $\delta \rightarrow 0$ or
$k \rightarrow \infty$.
The distance between $p$ and $q$, on the other hand is $\frac{2}{\sqrt{3}}$.
Therefore, for any $\epsilon > 0$, it is possible to choose $k$, $\delta$,
and a set $P$ of points such that the {\Large\varhexagon}-Delaunay
triangulation on $P$ has stretch factor at least $2 - \eps$.
\end{proof}
}

\section{Main result}
\label{sec:main}
In this section we state our main result and provide an overview of our proof.
\iftoggle{abstract}
{We start with a technical lemma that is used to prove the two
key lemmas needed for the main result.}
{We start by introducing a technical lemma that will be used to prove the two
key lemmas needed for the main result.}

\subsection{Technical lemma}
\label{sub:keylemma}

Let $T$ be the {\Large\varhexagon}-Delaunay triangulation on a set of
points $P$ in the plane. 
\begin{definition}
Let $T_1, T_2, \dots, T_n$ be a sequence of triangles of $T$ that a line $st$
of finite slope intersects. This sequence of triangles is said to be
{\em linear} w.r.t. line $st$ if for every $i=1,\dots,n-1$:
\begin{itemize}
\item triangles $T_i$ and $T_{i+1}$ share an edge, and %which we refer to as a
%{\em cross edge}, and
\item line $st$ intersects the interior of that shared edge (not an endpoint).
\end{itemize}
\end{definition}

Our goal is to prove an upper bound on the length of the shortest path from the 
``leftmost'' point of $T_1$ to the ``rightmost'' point of $T_n$, when certain conditions hold. We 
introduce some notation and definitions, illustrated in 
Fig.~\ref{fig:regular}, to make this more precise.

%Thanks to the hexagon's rotational and reflective symmetries, we can rotate
%the plane around the origin and/or reflect the plane with respect to
%the $x$-axis to ensure that line $st$ has slope in the range 
%$[-\sqrt{3}, \sqrt{3})$
%and we assume that in order to define our notation unambiguously. 
%Thanks to the hexagon's rotational symmetries, we can rotate
%the plane around the origin to ensure that line $st$ has finite slope
%and we assume that in order to define our notation unambiguously. 
We consider the $n-1$ shared triangle edges intersected by line $st$ from left
to right (where left and right are defined with respect to $x$-coordinates)
and label the endpoints of the $i$-th edge $u_i$ and $l_i$, with
$u_i$ being above line $st$ and $l_i$ below. We note that points typically get
multiple labels and identify a point with its label(s). 
%Recall that each endpoint is a point in $P$ as well as
%the vertex of two or more triangles in the linear sequence, and that it may
%receive multiple labels (e.g., Fig ???). 
If line $st$ goes through the vertex of $T_1$ other than $u_1$ and $l_1$,
we assign that vertex both labels $u_0$ and $l_0$ (as shown in
Fig.~\ref{fig:definitions}); otherwise, we assign labels
$u_0$ and $l_0$ to the endpoints of the edge of $T_1$ intersected by line $st$
other than $(u_1,l_1)$, with $u_0$ being above line $st$ (as shown in
Fig.~\ref{fig:regular}). Similarly, if line 
$st$ goes through the vertex of $T_n$ other than $u_{n-1}$ and $l_{n-1}$, 
we assign it both labels $u_n$ and $l_n$ (as shown in
Fig.~\ref{fig:definitions}); otherwise, we assign labels 
$u_n$ and $l_n$ to the endpoints of the other edge of $T_n$ intersected by
line $st$, with $u_n$ being above line $st$ (as shown in
Fig.~\ref{fig:regular}). Note that for $1 \leq i \leq n$:
\begin{itemize}
\item either $T_i = \triangle(u_i, l_i, l_{i-1})$, in which case we call
$l_{i-1}$ and $l_i$ the {\em left} and {\em right} vertices of $T_i$ 
%(e.g., in
%Fig.~\ref{fig:definitions} $u_6$ is the base and
%$l_5$ and $l_6$ are the left and right vertices of $T_6$), 
\item or $T_i = \triangle(u_{i-1}, u_i, l_i)$, in which case we call $u_{i-1}$
and $u_i$ the {\em left} and {\em right} vertices of $T_i$.
\end{itemize}
Note that only one of the above holds, except for $T_1$ if $u_0=l_0$
(in which case both hold) or for $T_n$ if $u_n=l_n$ (in which case again both
hold). For every $i = 1, \dots, n$,
when $u_i = u_{i-1}$ or $l_i = l_{i-1}$ we call the corresponding vertex of $T_i$
the {\em base} vertex of $T_i$. Note that $T_1$ has no base vertex if $u_0=l_0$
and $T_n$ has no base vertex if $u_n=l_n$ (as is the case in 
Fig.~\ref{fig:definitions}). Let $U$ and $L$ be the sets of all point labels
$u_i$ and $l_i$, respectively, and let $T_{1n}$ be the union of
$T_1,T_2,\dotsc,T_{n}$ which we will refer to as a linear sequence of triangles
as well.

\iftoggle{abstract}
{\begin{figure}[!b]}
{\begin{figure}[!b]}
\center{\regular}
\caption{The dotted line ($st$) intersects the linear sequence of triangles
$T_1, T_2, \dots, T_5$. The vertices of each triangle $T_i$
($u_{i-1}$, $u_i$, $l_{i-1}$, $l_i$, two of which are equal) lie on the boundary
of the hexagon $H_i$. Note that $l_2$ is the left, $l_3$ is the right,
and $u_2=u_3$ is the base vertex of $T_3$, for example. The linear sequence is
regular since $T_1$ has a left induction vertex, $T_5$ has a right induction
vertex, and, for $i = 1, \dots, 4$, no $u_i$ is a $s$ vertex of $T_i$ or
$T_{i+1}$, no $l_i$ is a $n$ vertex of $T_i$ or $T_{i+1}$, and no $(u_i,l_i)$
is gentle.%  Edge $(u_4,l_5)$ is gentle since its slope is within 
%$[-\frac{1}{\sqrt{3}}, \frac{1}{\sqrt{3}}]$.
}
\label{fig:regular}
\end{figure}

Let $H_i$, for $1 \leq i \leq n$, be the (empty) hexagon passing through the
vertices of $T_i$; note that $H_i$ is unique due to the general position
assumption. A vertex of $T_i$ is said to be a $w$, $e$, $n$,
or $s$ vertex of $T_i$ if it lies on the $w$ side, $e$ side, one of
the $n$ sides, or one of the $s$ sides, respectively, of $H_i$  
(see Fig.~\ref{fig:regular}).
A left vertex of $T_i$ that is a $w$ vertex of
$T_i$ is referred to as a {\em left} {\em induction vertex}
of $T_i$; similarly, a right vertex that is a $e$ vertex is referred to as a {\em right} {\em induction vertex}
of $T_i$. 

Note that a base vertex cannot be an induction vertex.

%A cross edge $(u_i,l_i)$ is
%{\em forbidden} if $u_i$ lies on the south ($s_w$ and $s_e$) sides of $H_i$ or
%$l_i$ lies and
%$l_{i-1}$ and $l_i$ do not lie on the north ($n_w$ and $n_e$) sides of $H_i$.

%{\bf ?????????}

\begin{definition}
We call an edge $(u_i,l_j)$ {\em gentle} if its slope is between
$-\frac{1}{\sqrt{3}}$ and $\frac{1}{\sqrt{3}}$.
\end{definition}
In Fig.~\ref{fig:regular} no edge $(u_i,l_j)$ is gentle while in 
Fig.~\ref{fig:definitions} $(u_0,l_1)$ and $(u_8,l_8)$ are gentle.

\begin{definition}
%The left (resp., right) vertex of $T_i$ is a {\em startpoint}
%(resp. {\em endpoint}) of $T_i$ if it lies on the $w$ (resp., $e$) side of
%$H_i$. 
%{\bf A linear sequence $T_1, \dots, T_n$ is {\em standard} if $T_1$ has a 
%startpoint and $T_n$ has an endpoint.
The linear sequence of triangles $T_{1n}$ is {\em regular} if $T_1$ has a left
induction vertex, $T_n$ has a right induction vertex, and if, for
every $i = 1, \dots, n-1$:
\begin{itemize}
\item $u_i$ is {\em not} a $s$ vertex of $T_i$ and $T_{i+1}$,
\item $l_i$ is {\em not} a $n$ vertex of $T_i$ and $T_{i+1}$, and 
\item $(u_i,l_i)$ is not gentle.
\end{itemize} 
\end{definition}
The linear sequence in Fig.~\ref{fig:regular} is regular while the one in
Fig.~\ref{fig:definitions} is not (because $u_8$ lies on the $s_w$ side of
$H_9$---the red hexagon passing through the vertices of
$T_9 = \triangle(u_8, u_9, l_9)$---and also because edge $(u_8,l_8)$ is gentle). 

\iftoggle{abstract}
{The proof of the following technical lemma is discussed in Section~\ref{sec:proofA}.}
{The following technical lemma is proven in Section~\ref{sec:proofA}.}
\begin{lemma}[The Technical Lemma]
\label{le:mainlemmaA}
If $T_{1n}$ is a regular linear sequence of triangles then there is a path in
$T_{1n}$ from
the left induction vertex $p$ of $T_1$ to the right induction vertex $q$ of
$T_n$ of length at most  $\frac{4}{\sqrt{3}} d_x(p,q)$.
%Let $0 \leq m \leq \sqrt{3}$ and $0 \leq i \leq j \leq n$. 
%If $T_{ij}$ contains no gentle edge, $T_i$  has a vertex $p$ on side $w$
%of $H_i$, and $T_j$ has vertex $q$ on side $e$ of $H_j$ then there
%exists a path in $T_{ij}$ from $p$ to $q$ of length at most
%$\frac{4}{\sqrt{3}} d_x(p,q)$.
\end{lemma}

Actually, what we show in Section~\ref{sec:proofA} implies something stronger: 
If $T_{1n}$ is regular then the lengths of the {\em upper} path 
$p,u_0,\dots,u_n,q$ and of the {\em lower} path $p,l_0,\dots,l_n,q$ 
add up to at most $\frac{8}{\sqrt{3}} d_x(p,q)$. 
It is useful to informally describe now the techniques we use to do this. For
that purpose we introduce, for a point $o$ on a side of $H_i$, functions
$p_N(o,i)$ and $p_S(o,i)$ as the {\em signed} shortest distances around the
perimeter of $H_i$ from $o$ to the $N$ vertex and $S$ vertex, respectively;
the sign is positive if $o$ lies on sides
$n_w$, $w$, or $s_w$ of $H_i$ and negative otherwise 
(see Fig.~\ref{fig:discretedNdS}-(a) and Fig.~\ref{fig:discretedNdS}-(b)).

\iftoggle{abstract}
{\begin{figure}[!b]}
{\begin{figure}[!b]}
\discretedNdS
\begin{center}
\vspace{-8pt}\hspace{-1.3cm} (a) \hspace{4.35cm} (b) \hspace{4.55cm} (c)
\end{center}
\caption{(a) The values of $p_N(o,i)$ are illustrated, for various points $o$
lying on the boundary of $H_i$, as signed hexagon arc lengths. (b) The 
values of $p_S(o,i)$ are illustrated similarly. (c) The length of edge
$(u_{i-1},u_i)$ is bounded by $p_N(u_{i-1}, i) - p_N(u_i, i)$.}
\label{fig:discretedNdS}
\end{figure}
%We also introduce {\em discrete} functions $U(i)$ and $L(i)$ that are defined
%as follows for
%$i = 1, \dots, n+1$:
%\[U(i) = d_{T_{1i}}(p,u_i) + p_N(i,u_i) \hspace{2cm} L(i) = d_{T_{1i}}(p,l_i) + p_S(i,l_i)\]
%(where $T_{n+1}$ is a triangle of radius $0$ centered at point
%$u_{n+1}=l_{n+1}=q$).

%Finally, we introduce the {\em discrete} function $P(i)$ to be $U(i) + L(i)$
%and note that $P(n+1)$ is exactly twice the distance in $T_{1n}$ from $p$ to
%$q$. 

Note that the length of each edge 
$(u_{i-1},u_i)$ (assuming $u_{i-1} \not= u_i$) can be
bounded by the distance
from $u_{i-1}$ to $u_i$ when traveling clockwise along the sides of $H_i$.
This distance is exactly $p_N(u_{i-1}, i) - p_N(u_i, i)$ as
illustrated in Fig.~\ref{fig:discretedNdS}-(c).
This motivates the following {\em discrete} function, defined for
$i=0,1,\dots,n$ and, for convenience's sake, 1) assuming that $p=u_0$ and
$q=u_n$ and 2) using an additional hexagon $H_{n+1}$ of radius $0$ centered
at point $q$:
\[\bar{U}(i) = \sum_{j=1}^{i} (p_N(u_{j-1}, j) - p_N(u_{j}, j)) + p_N(u_{i},i+1).\]

Function $\bar{U}(i)$ can be used to bound the length of upper path fragments;
in particular, $\bar{U}(n)$ bounds the length of the upper path from 
$p$ to $q$. A function $\bar{L}(i)$ bounding the length of the lower
path can be defined similarly. In Section~\ref{sec:proofA}, we will compute
an upper bound for $\bar{U}+\bar{L}$ by
1) switching the analysis from a discrete one to a continuous one, with
functions $p_N$ and $p_S$ defined not in terms of index $i$ but
in terms of coordinate $x$ for every $x$ between ${\tt x}(p)$ and ${\tt x}(q)$
and 2) analyzing the growth rates, with respect to $x$, of the continuous
functions $p_N$, $p_S$, and $\bar{U}+\bar{L}$. We
will show that (the continuous versions of) $p_N$ and $p_S$ are piecewise linear functions with growth rates
$\frac{2}{\sqrt{3}}$, $\frac{4}{\sqrt{3}}$, or $\frac{6}{\sqrt{3}}$, and that
$\bar{U}+\bar{L}$ is also piecewise linear with growth rate equal to the growth
rate of $p_N+p_S$ which can be $\frac{4}{\sqrt{3}}$, $\frac{6}{\sqrt{3}}$, or
$\frac{8}{\sqrt{3}}$. 
Lemma~\ref{le:mainlemmaA} will follow from the last (largest) growth rate.

%With the technical lemma in hand, we can now state the main
%result of this paper and provide a high level overview of its proof. 

With the technical lemma in hand, we can now state the first of the two
key lemmas that we need to prove our main result.

\subsection{The Amortization Lemma}

The first of our two key lemmas is a strengthening of the (Technical) Lemma~\ref{le:mainlemmaA}
under two restrictions. The first restriction is that $T_{1n}$ is %a linear
%sequence 
defined with respect to a line $st$ whose slope $m_{st}$ is restricted to
$0 < m_{st} < \frac{1}{\sqrt{3}}$. With that restriction we get the following
properties:
%we refer to the edge as a {\em gentle edge}. %Note that
%an edge is gentle if and only if the slope of the line going through its
%endpoints is between $-\frac{1}{\sqrt{3}}$ and  $\frac{1}{\sqrt{3}}$. 

%The following lemma is proven in Section~\ref{sec:proofA}.
%\begin{lemma}
%\label{le:mainlemmaA}
%If $T_{1n}$ is regular and contains no
%gentle edge then there is a path in $T_{1n}$ from
%the left induction vertex $p$ of $T_1$ to the right induction vertex $q$ of
%$T_n$ of length at most  $\frac{4}{\sqrt{3}} d_x(p,q)$.
%Let $0 \leq m \leq \sqrt{3}$ and $0 \leq i \leq j \leq n$. 
%If $T_{ij}$ contains no gentle edge, $T_i$  has a vertex $p$ on side $w$
%of $H_i$, and $T_j$ has vertex $q$ on side $e$ of $H_j$ then there
%exists a path in $T_{ij}$ from $p$ to $q$ of length at most
%$\frac{4}{\sqrt{3}} d_x(p,q)$.
%\end{lemma}

\begin{lemma}
\label{lem:props}
Let $T_{1n}$ be a linear sequence with respect to line $st$ with slope $m_{st}$
such that $0 < m_{st} < \frac{1}{\sqrt{3}}$. For every $i$ s.t. 
$1 \leq i \leq n$:
\begin{itemize}
\item If $u_{i-1}$ lies on side $s_w$ of $H_{i}$ or $l_{i}$ lies on 
side $n_e$ of $H_{i}$ then $(u_{i-1},l_{i})$ is gentle.
\item If $l_{i-1}$ lies on side $n_w$ of $H_{i}$ or $u_{i}$ lies on the $s_e$
side of $H_{i}$ then $(l_{i-1},u_{i})$ is gentle.
\item None of the following can occur: $u_{i-1}$ lies on side $s_e$ of $H_{i}$,
$l_{i}$ lies on side $n_w$ of $H_{i}$, $l_{i-1}$ lies on side $n_e$ of $H_{i}$,
and $u_{i}$ lies on the $s_w$ side of $H_{i}$. 
\end{itemize}
\end{lemma}
\noindent Note, for example, that $u_8$ lies on side $s_w$ of hexagon
$H_9$ in Fig.~\ref{fig:definitions} and that edge $(u_8,l_9)$ is gentle. 
A gentle edge that satisfies one of the four cases in the first two bullet points of
Lemma~\ref{lem:props} will
be called {\em irregular}.

%Note that, since the line with slope $m$ must
%intesect the
%interior of $H_1$, $s$ can only lie on the $n_w$, $w$, or $s_w$ sides of $H_1$;
%by Lemma~\ref{lem:props}, if $s$ lies on the $n_w$ side of $H_1$ then
%$(s,u_1) = (l_0,u_1)$ is gentle, and
%if $s$ lies on the $s_w$ side of $H_1$ then $(s,l_1) = (u_0,l_1)$ is gentle. 
%Similarly, $t$ can only lie on the $n_e$, $e$, or $s_e$ sides of $H_n$; if
%$t$ lies on the $s_w$ side of $H_n$ then $(t,l_{n-1}) = (u_n,l_{n-1})$ is 
%gentle, and if $t$ lies on the $n_w$ side of $H_n$ then 
%$(t,u_{n-1}) = (l_n,u_{n-1})$ is gentle. 
\begin{proof}
If $u_{i-1}$ lies on side $s_w$ of some hexagon $H_{i}$ then, since
$0 < m_{st} < \frac{1}{\sqrt{3}}$ and by general position assumptions, 
either $u_{i-1}=u_{i}$ and $l_{i-1}$ and $l_{i}$
must lie on sides $s_e$ and $e$ of $H_{i}$, respectively, or $l_{i-1}=l_{i}$ must
lie on side $s_e$ or $e$ of $H_{i}$. Either way, the slope of
the line going through $u_{i-1}$ and $l_{i}$ must be between 
$-\frac{1}{\sqrt{3}}$ and  $\frac{1}{\sqrt{3}}$. Similar arguments can be used
to handle the remaining three cases in the first two bullet points.

Let the left and right intersection points of line $st$ with hexagon $H_{i}$
be  $h_{i-1}$ and $h_{i}$. Note that when traveling clockwise along the sides of
$H_{i}$ the points will be visited in this order: 
$h_{i-1},u_{i-1},u_{i},h_{i}, l_{i}, l_{i-1}$.
If $u_{i-1}$ lies on side $s_e$ of $H_{i}$ then $i>1$ and, because
$0 < m_{st} < \frac{1}{\sqrt{3}}$, either $u_{i}$ (if $u_{i-1} \not= u_{i}$)
or $l_{i}$ (if $l_{i-1} \not= l_{i}$) would have to lie on side $s_e$
of $H_{i}$ as well, which violates our general position assumption for the
set of  points P. The remaining three cases are handled similarly.
\end{proof}

%A left (resp., right) vertex of $T_i$ lying on side $w$ (resp. $e$) of $H_i$
%will be referred to as the {\em left (resp., right) induction point} of $T_i$
%or $H_i$. In Fig.~\ref{fig:definitions}, for example, $u_0=l_0$, $l_2$, and
%$u_9$ are left induction points of $T_1$, $T_3$, and $T_{10}$, respectively.
%We will also say that $s$ is a left induction point of $T_1$
%and $t$ is a right induction point of $T_n$ regardless of the side of $H_1$
%and $H_n$ they lie on. 
%Note that, by Lemma~\ref{lem:props}, if $s$ lies on side $n_w$ or $s_w$ of
%$H_1$ then $(s=l_0,u_1)$ or $(s=u_0,l_1)$, respectively, is a short gentle edge;
%similarly,
%if $t$ lies on side $n_e$ or $s_e$ of $H_n$ then $(t=l_n,u_{n-1})$ or
%$(t=u_n,l_{n-1})$, respectively, is a short gentle edge.

By the above lemma, under the restriction $0 < m_{st} < \frac{1}{\sqrt{3}}$,
if $T_{1n}$ has no gentle edge then it is regular and (Technical) 
Lemma~\ref{le:mainlemmaA} applies. A narrower but much stronger
version of (Technical) Lemma~\ref{le:mainlemmaA} applies as well if another
restriction is made. To state the second restriction we need
some additional terminology. 

Let $l_i \in L$ and $u_j \in U$. If $i \leq j$ and ${\tt x}(l_i) < {\tt x}(u_j)$
then we say that $l_i$ {\em occurs before} $u_j$, and if $j \leq i$ and
${\tt x}(u_j) < {\tt x}(l_i)$ then we say that $u_j$ {\em occurs before}
$l_i$.
\begin{definition}
Given points $l_i \in L$ and $u_j \in U$ such that one occurs before the other,
a path between them is \emph{gentle} if the length of the path is not greater
than $\sqrt{3}d_x(u_j,l_i)-({\tt y}(u_j)-{\tt y}(l_i))$.
\end{definition}
%A gentle path is a low cost path that, under the right conditions, will
%be used as a subpath of a path from $s$ to $t$ that satisfies
%Equation~(\ref{eq:main}).
See Fig.~\ref{fig:definitions} for an illustration of a gentle path. Note that
a gentle edge is a gentle path (e.g., $(u_0,l_1)$ and 
$(u_8,l_8=l_9)$ in Fig.~\ref{fig:definitions}).

The following is the key to our proof of the main result of this paper:
\begin{lemma}[The Amortization Lemma]
\label{le:mainlemmaB}
Let $T_{1n}$ be a regular linear sequence with respect to line $st$ with slope
$m_{st}$. If $0 < m_{st} < \frac{1}{\sqrt{3}}$ and if $T_{1n}$ contains no
gentle path then there is a path in $T_{1n}$ from 
the left induction vertex $p$ of $T_1$ to the right induction vertex $q$ of
$T_n$ of length at most  $(\cT) d_x(p,q)$.
\end{lemma}
\iftoggle{abstract}{We will discuss the proof of the Amortization Lemma in 
Section~\ref{sec:proofB}; the proof builds 
on the analysis done in Section~\ref{sec:proofA} to prove (Technical) 
Lemma~\ref{le:mainlemmaA}.}{We will prove the Amortization Lemma in Section~\ref{sec:proofB} by building 
on the analysis done in Section~\ref{sec:proofA} to prove (Technical) 
Lemma~\ref{le:mainlemmaA}.} 
Instead of using function $\bar{U}$, however, we consider the discrete 
function
\begin{equation*}
U(i) = d_{T_{1i}}(p,u_i) + p_N(u_{i},i+1)
\end{equation*}
defined for $i=0,1,\dots,n$ and, for convenience's sake, 1) assuming that 
$p=u_0$ and $q=u_n$ and 2) using additional hexagon $H_{n+1}$ of radius $0$
centered at point $q$. An equivalent discrete function $L(i)$ using points 
$l_i$ instead of $u_i$ can be defined. Note that $U(n) + L(n)$ is exactly twice
the distance in
$T_{1n}$ from $p$ to $q$. To bound $U + L$, we will switch the analysis to
a continuous one just as we did for ${\bar U} + {\bar L}$. We will see that,
except for a finite number of discontinuities, the continuous version of $U+L$
has the same growth rate as ${\bar U} + {\bar L}$, which is the growth
rate of $p_N+p_S$. 
We will consider the intervals when the growth rate of (the continuous version of)
$U+L$ is higher than $2(\cT)$ (i.e., when its growth rate is $\frac{8}{\sqrt{3}}$) 
and we will amortize the extra $2 - \frac{2}{\sqrt{3}}$ growth over 
intervals when its growth rate is smaller than $2(\cT)$ (i.e., when its
growth rate is $\frac{4}{\sqrt{3}}$ or $\frac{6}{\sqrt{3}}$).
The amortization can usually be done because the intervals when the growth rate
of $U+L$ is large must be relatively short compared to intervals when
its growth is smaller, otherwise a gentle path can be shown to exist. To get our
tight bound however, we will need to do more and show that at certain points
(which are points of discontinuity)  we need to use ``cross-edges'' ($l_i,u_i)$.
This is because when the amortization is not
possible there is a long enough interval, say from hexagon
$H_i$ to hexagon $H_j$, when the growth rate of $U+L$ is mostly
$\frac{8}{\sqrt{3}}$.
It turns out that in that case one of $U$ or $L$ has growth rate
bounded by $\frac{2}{\sqrt{3}}$ (say, $U$) and the other ($L$) by 
$\frac{6}{\sqrt{3}}$. This means that path $l_i, l_{i+1}, \dots, l_j$ has
relatively large length with respect to $\Delta(x)$ 
and that $u_i, u_{i+1}, \dots, u_{j}$ is a relatively
short path that can be used to replace the long subpath $l_i, l_{i+1}, \dots, l_j$
with the shorter subpath $l_i, u_i, u_{i+1}, \dots, u_{j}, l_j$ in a path from $p$
to $q$. The $\cT$
stretch factor bound is the result of a min-max optimization between the two
subpaths from $l_i$ to $l_j$, and it is tight as we show in
Section~\ref{sec:conclusion}.

\iftoggle{abstract}
{\begin{figure}}
{\begin{figure}}
\center{\definitions}
\caption{%Triangles $T_1$, $T_2$, \dots, $T_{12}$ are visited in that order when
%moving from $s$ to $t$ along the dotted segment $[st]$. The vertices of each
%triangle $T_i$ ($u_{i-1}$, $u_i$, $l_{i-1}$, $l_i$, two of which are equal)
%lie on the boundary of the hexagon $H_i$.
A gentle path from $u_2$
to $l_{11}$ is one whose length is at most 
$\sqrt{3}d_x(u_2,l_{11})-({\tt y}(u_2)-{\tt y}(l_{11}))$, i.e. the length of the
red dashed piecewise linear curve from $u_2$ to $l_{11}$ (consisting of two 
vertical segments and a third with slope $-\frac{1}{\sqrt{3}}$). The path 
$u_2=u_3,u_4,u_5=u_6,u_7=u_8,l_8=l_9=l_{10},l_{11}$, easily seen to be
bounded--in length--by the red dotted piecewise linear curve, is gentle.
This path can be extended with edge $(l_{11},t)$ to a canonical gentle path
from $u_2$ to $t$; the proof of (Main) Lemma~\ref{le:divide}, in this particular
case, combines the bound on the
length of this path together with the bound on the length of a path from $s$
to $u_2$ obtained via induction.}
\label{fig:definitions}
\end{figure}

Next we turn to the case when the sequence of triangles $T_{1n}$ contains
a gentle path.

\subsection{The Gentle Path Lemma}

Just as in the previous subsection, we consider a linear sequence of triangles
$T_{1n}$ defined with respect to a line $st$ with slope $m_{st}$ satisfying
$0 < m_{st} < \frac{1}{\sqrt{3}}$. We now consider the case when $T_{1n}$
contains a gentle path and state the other of our two key lemmas. We start 
with two definitions:
%We say that $T_{ij}$ is {\em standard} if the left
%vertex of $T_i$ is a left induction point of $T_i$ and the right vertex of
%$T_j$ is a right induction point of $T_j$. Note that $T_{1n}$ is standard.
%A standard $T_{ij}$ is said to be {\em super-standard} if neither the base
%vertex of $T_i$ nor the base vertex of $T_j$ is the endpoint of a gentle path
%in $T_{ij}$. Note that $T_{1n}$ is
%trivially super-standard because $T_1$ and $T_n$ have no base vertices.
\begin{definition}
We say that linear sequence $T_{1n}$ is {\em standard}
if $T_1$ has a left induction vertex or $u_0=l_0$, $T_n$ has a right induction
vertex or $u_n=l_n$, and neither the base vertex of $T_1$ (if any) nor the base
vertex of $T_n$ (if any) is the endpoint of a gentle path in $T_{1n}$. 
\end{definition}
Note that if $u_0=l_0$ and $u_n=l_n$ both hold (i.e., line $st$ goes through
those points) then $T_{1n}$ is trivially standard because $T_1$ and $T_n$
cannot  have base vertices. % Note also that $T_1$ and $T_n$ are the only triangles in $T_{1n}$
%without base vertices.

\begin{definition}
Let $T_{1n}$ be a standard linear sequence.
A gentle path in $T_{1n}$ from $p$ to $q$, where $p$ occurs before $q$, is
{\em canonical in $T_{1n}$} (or simply {\em canonical} if $T_{1n}$ is clear
from the context) if $p$ is a right induction vertex of $T_{i}$ for some
$i \geq 1$  or $p$ is the left vertex of $T_1$ and if $q$ is a left induction
vertex of $T_{j}$ for some $j \leq n$ or $q$ is the right vertex of $T_n$.
\end{definition}
For example, the gentle path 
$u_2=u_3,u_4,u_5=u_6,u_7=u_8,l_8=l_9=l_{10},l_{11},l_{12}$
in Fig.~\ref{fig:definitions} is canonical.

%\begin{lemma}
%\label{le:standard}
%If $T_{ij}$ contains no gentle path then $p_{i-1}$ lies on the W side of $T_i$
%and $p_{j}$ lies on the E side of $T_j$.
%\end{lemma}
%\begin{proof}
%Note that $s$ or $t$ being on the upper side of the $H_1$ or $H_n$ implies the existence of a gentle edges since the high point would be forced above them at an angle of less than $30$ degrees.  Similarly,  $s$ or $t$ being on the lower side of the $H_1$ or $H_n$ implies the existence of a gentle edges since the low point would be forced below them at an angle of less than $30$ degrees.  Since there are no gentle edges in $T_{ij}$, $p_{i-1}$ must be on the west side of $H_i$ and $p_j$ must be on the east side of $H_j$.  
%\end{proof}
%\begin{lemma}
%If $T_{ij}$, for some $i,j$ such that $1 \leq i \leq j \leq n$, contains no
%gentle edge then $T_{ij}$ is standard.
%\end{lemma}
%\begin{proof}
%TO DO.
%\end{proof}
The second key lemma, which we will use alongside (Amortization) 
Lemma~\ref{le:mainlemmaB} to prove our main result, is stated
\iftoggle{abstract}{next; its proof is discussed in Section~\ref{sec:gentle}.}{next and proven in Section~\ref{sec:gentle}.}
%\begin{lemma}[The Key Lemma]
%\label{le:mainlemmaB}
%If $T_{ij}$ contains no gentle path, $T_i$  has a vertex $p$ on side $w$
%of $H_i$, and $T_j$ has vertex $q$ on side $e$ of $H_j$ then there
%a path in $T_{ij}$ from $p$ to $q$ of length at most $(\cT) d_x(p,q)$.
%\end{lemma}
%A more general but weaker version of the Key Lemma is our second technical
%lemma:
%\begin{lemma}
%\label{le:mainlemmaA}
%If $T_{ij}$ contains no gentle edge, $T_i$  has a vertex $p$ on side $w$
%of $H_i$, and $T_j$ has vertex $q$ on side $e$ of $H_j$ then there
%exists a path in $T_{ij}$ from $p$ to $q$ of length at most
%$\frac{4}{\sqrt{3}} d_x(p,q)$.
%\end{lemma}
%We prove this lemma in the next section. %Another technical lemma that we
%require and that we prove in the next section is:
%\begin{lemma}
%\label{le:mainlemma}
%\end{lemma}
%The remainder of this section is devoted to proving Lemma~\ref{le:divide} using
%the Lemma~\ref{le:mainlemmaB} and the following insight (whose proof makes use
%of Lemma~\ref{le:mainlemmaA}):
\begin{lemma}[The Gentle Path Lemma] 
\label{le:boundedGentelSections}
Let $T_{1n}$ be a linear sequence of triangles with respect to a line $st$ with
slope $m_{st}$ such that $0 < m_{st} < \frac{1}{\sqrt{3}}$. If $T_{1n}$ is
standard and contains a gentle path then the path can be extended
to a canonical gentle path in $T_{1n}$.
\end{lemma}
The main idea behind the proof of this lemma is that a gentle path between
$u_r \in U$ and $l_s \in L$ (where, say, $r \leq s$ and 
${\tt x}(u_r) < {\tt x}(u_s)$) in $T_{1n}$ can 
be extended using edge $(u_{r-1},u_r)$, unless $r=0$ or $u_r$ is a right
induction vertex of $T_{r}$, or using edge $(l_{s},l_{s+1})$, unless $s=n$ or
$l_s$ is a left induction vertex of $T_{s+1}$. In other words, a gentle path
can be extended unless it is canonical.

We are now ready to state our main result and provide a proof that uses
the two key lemmas.

\subsection{The main result and the Main Lemma}

\begin{theorem}
\label{th:main}
The stretch factor of a {\Large\varhexagon}-Delaunay triangulation is at most $2$.
\end{theorem}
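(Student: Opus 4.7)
The plan is to bound, for any two points $p, q \in P$, the length of a suitable $p$-to-$q$ path in $T$ by $2 \cdot d_2(p,q)$ via a continuous sweep argument along the segment $[pq]$. Exploiting the $12$-fold symmetry of the hexagon together with reflection, first rotate the plane so that the direction from $p$ to $q$ lies in a canonical angular sector; this reduces the analysis to one representative orientation and lets us meaningfully speak of an ``upper'' and a ``lower'' chain of triangulation vertices straddling $[pq]$.

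For each sweep coordinate $x$ in the horizontal projection of $[pq]$, introduce a canonical empty hexagon $H(x)$ chosen so that its center has $x$-coordinate $x$ and its size is maximal subject to containing no point of $P$ in its interior, while its boundary touches two points of $P$: an upper anchor $u(x)$ on one of its N, NW, or NE sides and a lower anchor $\ell(x)$ on one of its S, SW, or SE sides. By the definition of the {\Large\varhexagon}-Delaunay triangulation, the edge $\{u(x),\ell(x)\}$ belongs to $T$. As $x$ increases from $x(p)$ to $x(q)$, the maps $u$ and $\ell$ change at finitely many events and trace two sequences $p = u_0, u_1, \ldots, u_m = q$ and $p = \ell_0, \ldots, \ell_n = q$ whose successive edges, together with cross-edges $\{u_i,\ell_j\}$, triangulate the slab of faces of $T$ crossed by $[pq]$. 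These are the \emph{upper} and \emph{lower} paths.

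The core of the proof is an amortized analysis of the upper path; the lower is symmetric. Introduce the auxiliary quantities $d_N(x) = d_2(u(x), N(x))$ and $d_S(x) = d_2(\ell(x), S(x))$, where $N(x)$ and $S(x)$ are the north and south vertices of $H(x)$, and the widths $w(x), e(x)$ from the center of $H(x)$ to its westernmost and easternmost points. At each event either an anchor switches or $H(x)$ grows smoothly, and in every such step the length of the newly added upper edge can be bounded by an affine combination of the increments $\Delta w(x), \Delta d_N(x)$, and $\Delta x$. The key geometric input is that the upper sides of the hexagon meet the vertical at $\pm 30^\circ$, so a horizontal displacement $\Delta x$ along such a side contributes exactly $\tfrac{2}{\sqrt{3}}\,\Delta x$ to $d_N$. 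Summing these inequalities telescopically, the contributions of $d_N$ cancel at the endpoints (where $u(x)=p$ and $u(x)=q$ pin the anchor), and the accumulated contributions of $w$ and $e$ combine to give a total upper-path length of at most $2 \cdot d_2(p,q)$, matching the lower bound from Lemma~\ref{le:lower_bound}.

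The hard part is the event case analysis: as the growth figure illustrates, consecutive canonical hexagons $H_i$ and $H_{i+1}$ may relate in qualitatively different ways (the next hexagon can be only slightly larger, substantially larger with the anchor shifting to the N vertex, or much larger so that the anchor slides across two upper sides), each producing its own combination of $\Delta d_N, \Delta d_S, \Delta w, \Delta e$. Verifying the amortized inequality uniformly in every case, and in particular handling the non-monotonic ``gentle path'' pieces in which a single upper-path edge spans more horizontal extent than its own sweep interval (so that cumulative turning must be charged against the potential), is the substantive work. A secondary technical point is that the initial and final hexagons degenerate, requiring a limiting argument that reduces $H$ near $p$ and $q$ to the correct boundary configurations before invoking the telescoping sum. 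Once all cases are dispatched, the factor $2$ follows.
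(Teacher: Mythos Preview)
Your sketch has the right flavor---the continuous hexagon $H(x)$, the anchors $u(x),\ell(x)$, the perimeter potentials $d_N,d_S$, and an amortized growth analysis are all in the paper---but there is a structural gap that prevents the telescoping from closing at the constant~$2$.

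First, the sweep does \emph{not} yield a bound of the form $2\,d_2(p,q)$ directly. What the potential argument actually delivers is a bound on $d_T(p,q)$ in terms of the \emph{horizontal} span $d_x(p,q)$, and the constant in front of $d_x$ depends on whether the slab of triangles contains a so-called \emph{gentle path} (a $U$-to-$L$ path of length at most $\sqrt{3}\,d_x-\Delta y$). When there is no gentle path one proves $d_T(p,q)\le(\tfrac{5}{\sqrt3}-1)\,d_x(p,q)$; when there is one, the gentle piece contributes $(\sqrt3+m)\,d_x$ where $m$ is the slope of $st$. The constant $2$ only appears \emph{after} combining $\max\{\tfrac{5}{\sqrt3}-1,\sqrt3+m\}\,d_x$ with the slope restriction $0\le m\le\tfrac1{\sqrt3}$ and the identity $d_2=\sqrt{d_x^2+d_y^2}$, maximizing at $m=\tfrac1{\sqrt3}$. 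Your proposal skips this two-regime structure entirely.

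Second, your description treats gentle paths as one more case in the event analysis (``non-monotonic pieces \ldots charged against the potential''). In the paper they are handled by a completely different mechanism: one shows that any gentle path can be extended to a \emph{canonical} gentle path whose endpoints are induction points (on the W or E side of some $H_i$), bounds that piece by $(\sqrt3+m)\,d_x$, and then recurses on the two remaining standard sub-slabs. The sweep/potential argument (your ``core'') is invoked only on sub-slabs that are certified to contain \emph{no} gentle path; this hypothesis is what makes the bad-transition amortization (the $t_{15},t_{24},t_{31},t_{40}$ cases with growth rate $\tfrac8{\sqrt3}$) go through at the rate $\tfrac{10}{\sqrt3}-2$ rather than $\tfrac8{\sqrt3}$. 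Without isolating and excising gentle paths first, the growth-rate table alone gives only $\tfrac4{\sqrt3}\,d_x$, which is tight only at the extremal slope.

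A smaller point: $d_N(x)$ must be the \emph{signed perimeter} distance from $u(x)$ to the N vertex of $H(x)$ (positive on the west sides, negative on the east), not the Euclidean distance $d_2(u(x),N(x))$; the sign is what makes the edge bound $d_2(u_{i-1},u_i)\le d_N(u_{i-1},x)-d_N(u_i,x)$ telescope correctly across discontinuities.
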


To prove Theorem~\ref{th:main} we need to show that between any two points
$s$ and
$t$ of a set of points $P$ there is, in the {\Large\varhexagon}-Delaunay
triangulation $T$ on $P$, a path from $s$ to $t$ of length
at most $2d_2(s,t)$. Let $m_{st}$ be the slope of the line $st$ 
passing through $s$ and $t$.
%W.l.o.g., we assume that $s$ has coordinates $(0, 0)$.
Thanks to the hexagon's rotational and reflective symmetries as well as our general 
position assumptions, we can rotate
the plane around $s$ and possibly reflect the plane with respect to the $x$-axis
to ensure that %$t$ is in the positive quadrant of the coordinate system and
$0 < m_{st} < \frac{1}{\sqrt{3}}$. Given this assumption, our main theorem
will follow from:
\begin{lemma}[The Main Lemma] \label{le:divide}
For every pair of points $s,t \in P$ with 
$0 < m_{st} < \frac{1}{\sqrt{3}}$:
\begin{equation}
\label{eq:main}
d_T(s,t) \leq \max\Bigl\{\cT, \sqrt{3}+m_{st}\Bigr\}d_x(s,t).
\end{equation}
\end{lemma}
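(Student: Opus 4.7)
The plan is to construct an explicit path in $T$ from $s$ to $t$ by tracing along the segment $[s,t]$ and to bound its length. I would parameterize the segment by horizontal coordinate $x \in [0, d_x(s,t)]$, and at each such $x$ define $H(x)$ to be the minimum-width empty hexagon (in the fixed orientation) whose interior contains the point $(x, mx)$. As $x$ grows, $H(x)$ changes at finitely many events through a discrete sequence of distinct empty hexagons $H_1, H_2, \ldots, H_k$, with $s$ on the boundary of $H_1$ and $t$ on the boundary of $H_k$. For each $H_i$, I would single out the extremal points $u_i, l_i \in P$ on its boundary lying above and below the line through $s$ and $t$ respectively; these yield the upper sequence $s = u_0, u_1, \ldots, u_k = t$ and the lower sequence $s = l_0, l_1, \ldots, l_k = t$.

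Next, I would verify the structural facts that make these sequences paths in $T$: consecutive empty hexagons $H_i, H_{i+1}$ share one of the two extremal points (either $u_{i+1}=u_i$ or $l_{i+1}=l_i$), and each triangle $\triangle u_{i-1} u_i l_i$ or $\triangle l_{i-1} l_i u_i$ is inscribed in an empty hexagon and therefore is a face of $T$. Consequently both the upper and lower sequences form paths in $T$, and every $u_i$ is connected by an edge of $T$ to each $l_j$ with which it shares a common hexagon along the sequence.

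With the path skeleton in place, I would bound its length using a telescoping argument. Each step along the upper sequence contributes an increment $\Delta d_N(x)$ whose value is determined by which side of the hexagon the transition occurs on; comparing $\Delta d_N(x)$ with the horizontal advance $\Delta(x) = 1$ and the hexagon-growth quantities $\Delta w(x), \Delta e(x)$ gives explicit ratios governed by the fixed hexagon-edge slopes ($0, \pm 1/\sqrt{3}, \pm\sqrt{3}$, vertical). A symmetric bound holds for the lower path. When either the upper or the lower path alone stays close enough to $[s,t]$, summing the increments yields the bound $(\sqrt{3}+m) d_x(s,t)$.

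The main obstacle is the case where neither the all-upper nor the all-lower path attains the bound—this occurs when the upper sequence has pronounced northward detours early on while the lower sequence has southward detours late on, or vice versa. Here I would form a hybrid path that follows $u_0, u_1, \ldots, u_r$ along the upper sequence, then crosses to the lower sequence via the edge $u_r l_s$ of $T$ (which exists because $u_r$ and $l_s$ share a common $H_i$), then continues along $l_s, l_{s+1}, \ldots, l_k$. Analyzing the geometric configurations depicted in the case figures, I would bound the length of such a hybrid path by $\cT \cdot d_x(s,t)$, where the extra slack absorbs the crossing edge. Taking the maximum of the two regime-dependent bounds completes the proof of~\eqref{eq:main}; the bulk of the work is the careful quantification, via the hexagon geometry, of the worst-case contributions that force the $\cT$ term.
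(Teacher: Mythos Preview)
Your outline captures the right scaffolding (the sequence of hexagons, the upper/lower vertex chains, the continuous parameter $x$), but it misidentifies where the two terms in the $\max$ come from, and this leads to a genuine gap in the hard case.

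In the paper, the dichotomy is not ``all-upper / all-lower versus a single hybrid crossing.'' It is instead driven by the notion of a \emph{gentle path}: a path between some $u_r$ and some $l_s$ of length at most $\sqrt{3}\,d_x(u_r,l_s)-(y(u_r)-y(l_s))$. The proof of Lemma~\ref{le:divide} is an induction on standard subchains $T_{ij}$. If $T_{ij}$ contains a gentle path, one extends it to a \emph{canonical} gentle path (Lemma~3.6), which costs at most $(\sqrt{3}+m)\,d_x$ on its span, and then recurses on the two flanking standard pieces. If $T_{ij}$ contains \emph{no} gentle path at all, one invokes the Key Lemma to get the $(\frac{5}{\sqrt{3}}-1)\,d_x$ bound. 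So the $(\sqrt{3}+m)$ term comes from the gentle-path pieces and the $\frac{5}{\sqrt{3}}-1$ term from the pieces with no gentle path---precisely the opposite of your picture, in which a crossing edge is what produces the $\frac{5}{\sqrt{3}}-1$ term.

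The more serious issue is your treatment of the $\frac{5}{\sqrt{3}}-1$ bound. This is the entire content of Section~4 and is not obtained by picking a single hybrid path and bounding a crossing edge. The argument defines a potential $P(x)=U(x)+L(x)$ with $U(x)=d_{T_{1i}}(p,u_i)+d_N(x)$ and $L(x)=d_{T_{1i}}(p,l_i)+d_S(x)$, so that $P(x(q))=2\,d_T(p,q)$; one then classifies the local growth of $P$ by which sides of $H(x)$ the points $u(x),\ell(x)$ sit on (twelve transition types, Lemma~4.2), identifies four ``bad'' transitions with excess growth $\frac{2}{\sqrt{3}}$, and amortizes that excess over carefully defined strict intervals using the no-gentle-path hypothesis (Lemmas~4.4--4.7). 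A telescoping bound on one concrete path, or on one path with a single crossover, does not yield $\frac{5}{\sqrt{3}}-1$; without the amortization the naive rate is $\frac{8}{\sqrt{3}}$ on bad stretches. Your proposal does not supply any mechanism to absorb this excess, so as written it would not close.
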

Before we prove this lemma, we show that it implies the main theorem.
\begin{proof}[Proof of Theorem~\ref{th:main}]
W.l.o.g., we assume that $s$ has coordinates $(0,0)$, $t$ lies in the
positive quadrant, $m_{st} < \frac{1}{\sqrt{3}}$, and $d_2(s,t)=1$.
With these assumptions it follows that 
$\frac{\sqrt{3}}{2} < {\tt x}(t) = d_x(s,t) < 1$
and we need to show that $d_T(s,t) \leq 2$.

By Lemma \ref{le:divide}, either 
$d_T(s,t)\leq (\cT) d_x(s,t) \leq (\cT) < 2$ or
% \[\begin{split}d_T(s,t) &\leq (\sqrt{3}+m) d_x(s,t) \\
%                         &= \sqrt{3}d_x(s,t) + d_y(s,t) \\
%                         &=\sqrt{3}d_x(s,t)+ \sqrt{1-d_x(s,t)^2} \end{split}\]
 \[d_T(s,t) \leq (\sqrt{3}+m_{st}) d_x(s,t) = \sqrt{3}d_x(s,t) + d_y(s,t) = \sqrt{3}d_x(s,t)+ \sqrt{1-d_x(s,t)^2}\]
 which attains its maximum, over the interval $[\frac{\sqrt{3}}{2},1]$, at 
$d_x(s,t)=\frac{\sqrt{3}}{2}$ giving $d_T(s,t) \leq 2$.
\end{proof}

We now turn to the proof of (Main) Lemma~\ref{le:divide}. We start by noting that if
there is a point $p$ of $P$ on the segment $[st]$
then~(\ref{eq:main}) would follow if~(\ref{eq:main}) holds for the pairs
of points $s,p$ and $p,t$; we can therefore assume that no point of $P$ other
than $s$ and $t$ lies on the segment $[st]$. We can also assume, as argued in
Section~\ref{sec:prelim}, that segment
$[st]$ does not intersect the outer face of the triangulation $T$.
We assume w.l.o.g. that
$s$ has coordinates $(0,0)$ and thus $t$ lies in the positive quadrant.%, and
%the slope $m_{st}$ of the line through $s$ and $t$ is at most
%$\frac{1}{\sqrt{3}}$. %The line containing segment $[st]$ divides
%the plane into two half-planes; a point in the same half-plane as point
%$(0, 1)$ is said to be \emph{above} segment $[st]$, otherwise it is
%\emph{below}. 

Let $T_1, T_2, T_3, \dots, T_n$ be the sequence of
triangles of the %modified 
triangulation $T$ that line segment $[st]$ intersects
when moving from $s$ to $t$ (refer to Fig.~\ref{fig:definitions}). (Recall
that we assume that segment $[st]$ does not intersect the outer face of
$T$.)
Clearly, $T_{1n}$ is a linear
sequence of triangles and we assign labels $u_i$ and $l_i$ to the points
and define sets $U$ and $L$ as described in Subsection~\ref{sub:keylemma}.
We note that all arguments in the rest of this paper use only points
and edges in $T_{1n}$.

Note that, since $st$ must intersect the
interior of $H_1$, $s$ can only lie on the $n_w$, $w$, or $s_w$ sides of $H_1$;
by Lemma~\ref{lem:props}, if $s$ lies on the $n_w$ side of $H_1$ then
$(s,u_1) = (l_0,u_1)$ is gentle, and
if $s$ lies on the $s_w$ side of $H_1$ then $(s,l_1) = (u_0,l_1)$ is gentle. 
Similarly, $t$ can only lie on the $n_e$, $e$, or $s_e$ sides of $H_n$; if
$t$ lies on the $s_w$ side of $H_n$ then $(t,l_{n-1}) = (u_n,l_{n-1})$ is 
gentle, and if $t$ lies on the $n_w$ side of $H_n$ then 
$(t,u_{n-1}) = (l_n,u_{n-1})$ is gentle. Note that this means that if $T_{1n}$ has
no gentle edge then it is regular.

We now informally describe the approach we use to prove (Main) 
Lemma~\ref{le:divide}. We first note that (Amortization) Lemma~\ref{le:mainlemmaB}
and (Gentle Path) Lemma~\ref{le:boundedGentelSections}
rely on (Technical) Lemma~\ref{le:mainlemmaA}. We will prove (Main) 
Lemma~\ref{le:divide} that
bounds the length of the shortest path in $T_{1n}$ from $s$ to $t$ as follows. 
If $T_{1n}$ does not contain a gentle path then it is regular and the proof follows
from (Amortization) Lemma~\ref{le:mainlemmaB}. If $T_{1n}$ contains a gentle path
then by (Gentle Path) Lemma~\ref{le:boundedGentelSections} it must contain a
canonical gentle path $\mathcal{G}$ from, in general, a right induction vertex of $T_i$ to
a left induction vertex of $T_j$, where $0 \leq i < j \leq n$. We can assume, 
using (Gentle Path) Lemma~\ref{le:boundedGentelSections}, that $\mathcal{G}$ is maximal in
the sense that it is not a subpath of any other gentle path in $T_{1n}$.  
The maximality of $\mathcal{G}$ will guarantee that neither $T_{1i}$ nor $T_{jn}$ contains
a gentle path whose endpoint is the base vertex of $T_i$ or the base
vertex of $T_j$, respectively. Therefore $T_{1i}$ and $T_{jn}$ are standard and
we then proceed by induction to prove a ``more general'' version of (Main)
Lemma~\ref{le:divide} for $T_{1i}$ and $T_{jn}$. The obtained bounds on the
lengths of shortest paths from $s$ to the right induction vertex of $T_i$ and 
from the left induction vertex of $T_j$ to $t$ are combined
with the bound on the length of gentle path $\mathcal{G}$ to complete the proof of (Main)
Lemma~\ref{le:divide}. Our reliance on induction means that we need to restate
the Main Lemma so it is amenable to an inductive proof:
\begin{lemma}[The Generalized Main Lemma]
Let $s,t \in P$ such that $0 < m_{st} < \frac{1}{\sqrt{3}}$ and let $T_{1n}$
be the linear sequence of triangles that segment $[st]$ intersects. If
$T_{ij}$, for some $i,j$
such that $1 \leq i \leq j \leq n$, is standard, $p$ is the left vertex
of $T_i$, and $q$ is the right vertex of $T_j$ then
\[d_{T_{ij}}(p,q) \leq  \max\{\cT,\sqrt{3}+m_{st}\}d_x(p,q).\]
\end{lemma}
Note that (Main) Lemma \ref{le:divide} is a special case of this
statement when $i=1$ and $j=n$ since $T_{1n}$ is (trivially) standard, 
$s$ is the left vertex of $T_1$, and $t$
is the right vertex of $T_n$.

\begin{proof}
%As indicated in the above paragraph, we use induction to prove a more general
%statement: 
%\begin{quote}
%if $T_{ij}$, for some $i,j$
%such that $1 \leq i \leq j \leq n$, is standard, $p$ is the left vertex
%of $T_i$, and $q$ is the right vertex of $T_j$ then
%$$d_{T_{ij}}(p,q) \leq  \max\{\cT,\sqrt{3}+m_{st}\}d_x(p,q)$$
%where $m_{st}$ is the slope of $[st]$. 
%\end{quote}
%The claim in Lemma \ref{le:divide} is a
%special case of this statement when $i=1$ and $j=n$.

We proceed by induction on $j-i$. If $T_{ij}$ is standard and there
is no gentle path in $T_{ij}$ (the base case) then, by Lemma~\ref{lem:props},
the linear sequence of triangles in $T_{ij}$ is regular and thus, by (Amortization)
Lemma~\ref{le:mainlemmaB}, we have $d_T(p,q) \leq (\cT) d_x(p,q)$.

If $T_{ij}$ is standard and there is a gentle path in $T_{ij}$,
then, by Lemma~\ref{le:boundedGentelSections}, there
exist points $u_{i'}$ and $l_{j'}$ in $T_{ij}$ such that there is a canonical
gentle path between $u_{i'}$ and $l_{j'}$ in $T_{ij}$. We also
assume that the canonical path between $u_{i'}$ and $l_{j'}$ is maximal in the
sense that it is not a proper subpath of a gentle path in $T_{ij}$. W.l.o.g.,
we assume that $u_{i'}$ occurs before $l_{j'}$, and so 
$i-1 \leq i' \leq j' \leq j$, ${\tt x}(u_{i'}) < {\tt x}(l_{j'})$, and
$d_T(u_{i'},l_{j'}) \leq \sqrt{3}d_x(u_{i'},l_{j'})-({\tt y}(u_{i'})-{\tt y}(l_{j'}))$.
Since $u_{i'}$ is either $s$
or above $st$ and $l_{j'}$ is either $t$ or below $st$, it follows that
$-({\tt y}(u_{i'})-{\tt y}(l_{j'}))\leq m_{st}d_x(u_{i'},l_{j'})$. Therefore,
$d_T(u_{i'},l_{j'}) \leq (\sqrt{3}+m_{st})d_x(u_{i'},l_{j'})$.

Since the gentle path from $u_{i'}$ to $l_{j'}$ is canonical, either $u_{i'}$ 
is a right induction vertex of $T_{i'}$ and $i' \geq i$ or $u_{i'}=u_{i-1}$.
In the first case, because $u_{i'}$ is on side $e$ of $H_{i'}$ the base
vertex $l_{i'-1}=l_{i'}$ of $T_{i'}$ must satisfy 
${\tt x}(l_{i'}) < {\tt x}(u_{i'})$. Suppose that
$l_{i'}$ is the endpoint of a gentle path in $T_{ii'}$ from, say, point
$u_{i''}$ then we would have
\begin{align*}
d_{T_{ij}} (u_{i''},l_{j'}) & \leq d_{T_{ij}}(u_{i''},l_{i'}) + d_2(l_{i'},u_{i'}) + d_{T_{ij}}(u_{i'},l_{j'}) \\
             & \leq \sqrt{3}d_x(u_{i''},l_{i'})-({\tt y}(u_{i''})-{\tt y}(l_{i'})) + \sqrt{3}d_x(l_{i'},u_{i'}) - ({\tt y}(l_{i'})-{\tt y}(u_{i'})) \\
             & \quad + \sqrt{3}d_x(u_{i'},l_{j'})) - ({\tt y}(u_{i'})-{\tt y}(l_{j'})) \\
             & \leq \sqrt{3}d_x(u_{i''},l_{j'})-({\tt y}(u_{i''})-{\tt y}(l_{j'}).
\end{align*}
This contradicts the maximality of the canonical gentle path from $u_{i'}$ to
$l_{j'}$. This means that $l_{i'}$ is not the endpoint of a gentle path in
$T_{ii'}$. Therefore $T_{ii'}$ is standard, the inductive hypothesis
applies, and $d_T(p,u_{i'}) \leq \max\{\cT, \sqrt{3}+m_{st}\}d_x(p,u_{i'})$. 
In the second case, because $T_{ij}$ is standard, $u_{i'}$
cannot be the base vertex of $T_i$ and so $u_{i'}=p$. The 
inequality $d_T(p,u_{i'}) \leq \max\{\cT, \sqrt{3}+m_{st}\}d_x(p,u_{i'})$ then
holds trivially. 

Similarly, we can show that
$d_T(l_{j'},q) \leq \max\{\cT, \sqrt{3}+m_{st}\}d_x(l_{j'},q)$. Thus: 
\begin{align*} d_T(p,q) & \leq d_T(p,u_{i'})+d_T(u_{i'},l_{j'})+d_T(l_{j'},q) \\
                        & \leq \max\{\cT, \sqrt{3}+m_{st}\}(d_x(p,u_{i'}) + d_x(l_{j'},q)) + (\sqrt{3}+m_{st})d_x(u_{i'},l_{j'}) \\
                        & \leq \max\{\cT , \sqrt{3}+m_{st}\}d_x(p,q)\end{align*}\end{proof}

\section{Proof of (Gentle Path) Lemma~\ref{le:boundedGentelSections}}
\label{sec:gentle}
\iftoggle{abstract}
{The main idea behind the proof of the Gentle Path lemma is that a gentle path
between $u_r \in U$ and $l_s \in L$ (where, say, $r \leq s$ and 
${\tt x}(u_r) < {\tt x}(u_s)$) in $T_{1n}$ can 
be extended using edge $(u_{r-1},u_r)$, unless $r=0$ or $u_r$ is a right
induction vertex of $T_{r}$, or using edge $(l_{s},l_{s+1})$, unless $s=n$ or
$l_s$ is a left induction vertex of $T_{s+1}$. 
In other words, a gentle path from $r$ to $s$ is either canonical
or can be extended to a canonical path from $u_{r'}$ to $l_{s'}$ as illustrated
in Fig.~\ref{fig:caseA}.

\begin{figure}[h]
\center{\caseA}
\caption{Illustration of the proof of 
Lemma~\ref{le:boundedGentelSections} in the case when the gentle path from $u_r$ to $l_s$ is just a gentle
edge. For every $i$ such that $r' < i \leq r$ and $u_i$ is
the right vertex of $H_i$, hexagon $H_i$ and the edge $(u_{i-1},u_i)$ are shown
in red. Each edge $(u_{i-1},u_i)$ has slope greater than $-\frac{1}{\sqrt{3}}$
and therefore has length bounded by
$\sqrt{3}d_x(u_{i-1},u_i) - ({\tt y}(u_{i-1})-{\tt y}(u_i))$, a value equal
to the total length of the two intersecting, red, dashed segments going north
from $u_{i-1}$ and north-west from $u_i$. The total length of the two dashed
blue segments is an upper bound on the length of the edge $(u_r,l_s)$ and
the total length of the dotted red line segments represent the upper bound
$\sqrt{3}d_x(p,q) - ({\tt y}(p)-{\tt y}(q))$ on the length of the path 
$p=u_{r'}, \dots, u_r, l_{s}, l_{s+1}, \dots, l_{s'}=q$.}
\label{fig:caseA}
\end{figure}

}
{%This section is devoted to the proof of the Gentle Path
%Lemma~\ref{le:boundedGentelSections}. 

\begingroup
\def\thetheorem{\ref{le:boundedGentelSections}}
\begin{lemma}[The Gentle Path Lemma] 
Let $T_{1n}$ be a linear sequence of triangles with respect to a line $st$ with
slope $m_{st}$ such that $0 < m_{st} < \frac{1}{\sqrt{3}}$. If $T_{1n}$ is
standard and contains a gentle path then the path can be extended
to a canonical gentle path in $T_{1n}$.
\end{lemma}
\addtocounter{theorem}{-1}
\endgroup

The main idea behind the proof of this lemma is that a gentle path between
$u_r \in U$ and $l_s \in L$ (where, say, $r \leq s$ and 
${\tt x}(u_r) < {\tt x}(u_s)$) in $T_{1n}$ can 
be extended using edge $(u_{r-1},u_r)$, unless $r=0$ or $u_r$ is a right
induction vertex of $T_{r}$, or using edge $(l_{s},l_{s+1})$, unless $s=n$ or
$l_s$ is a left induction vertex of $T_{s+1}$. In other words, a gentle path
can be extended unless it is canonical.

\begin{proof}%[Proof of Lemma~\ref{le:boundedGentelSections}]
%\label{lem:canonical}
We break the proof into several cases.

{\bf Case A.}
We first prove the claim in the case when $T_{1n}$ contains no irregular gentle
edge. In that case, because $T_{1n}$ is standard and by Lemma~\ref{lem:props},
the left vertex of $T_1$, whether it is $l_{0}$ or $u_{0}$ (or $l_0=u_0$),
lies on side $w$ of $H_1$ and the right vertex of $T_n$, whether it is $l_n$
or $u_n$ (or $u_n=l_n$), lies on side $e$ of $H_n$.

Let $T_{1n}$ contain a gentle path between $u_r \in U$ and
$l_s \in L$. W.l.o.g., we assume that $u_r$ occurs before $l_s$ 
(i.e., $r \leq s$ and ${\tt x}(u_r) < {\tt x}(l_r)$), 
as the case when $l_s$ occurs before $u_r$ can be argued
using a symmetric argument. 
Consider the sequence of points $u_{0}, u_1, \dots, u_r$ and
let $p=u_{r'}$ be the last point that is a right induction vertex (of $H_{r'}$)
or $p=u_{r'}=u_{0}$ if there are no right induction vertices in the sequence.
Note that in the second case, $p$ cannot be a base vertex of $T_1$ 
(since $T_{1n}$ is standard) and so $p$ must be the left vertex of
$T_1$.
%, if such a point
%exists; if such a point does not exist, we set $r'=i-1$ and $p$ to be the
%left vertex of $T_i$, which lies on the W side of $T_i$ and could be $u_{i-1}$
%or $l_{i-1}$. 
The edges of path $u_{r'}, \dots u_r$ are edges $(u_{i-1},u_i)$ for every $i$
such that $r' < i \leq r$ and $u_i$ is a right vertex of $H_i$. Because $T_{1n}$
contains no irregular edges and the fact that $u_i$ cannot lie on the $e$ side
of $H_i$ (because if it did $u_i$ would be a right induction vertex) or
the $s_e$ side of $H_i$ (by Lemma~\ref{lem:props}), the endpoints $u_{i-1}$ 
and $u_i$ lie on the $w$ or $n_w$
and $n_w$ or $n_e$ sides, respectively, of $H_i$. This implies that
${\tt x}(u_{i-1}) < {\tt x}(u_i)$ and that the length of $(u_{i-1},u_i)$ is
bounded by
$\sqrt{3}d_x(u_{i-1},u_i) - ({\tt y}(u_{i-1})-{\tt y}(u_i))$, as illustrated in 
Fig.~\ref{fig:caseA}. %In the case when $r'=i-1$,
%the length of the edge $(p,u_i)$
%can be bounded by $\sqrt{3}d_x(p,u_i)) - ({\tt y}(p)-{\tt y}(u_i))$. We use these
%inequalities to obtain that 
Therefore, ${\tt x}(p=u_{r'}) < {\tt x}(u_r)$ and the length of the path 
$p=u_{r'}, \dots, u_r$ (path $\mathcal{P}$)
in $T_{1n}$ is at most $\sqrt{3}d_x(p,u_{r})-({\tt y}(p)-{\tt y}(u_r))$. 

\begin{figure}[!b]
\center{\caseA}
\caption{Illustration of case A in the proof of Lemma~\ref{le:boundedGentelSections}.
Shown is the case when the gentle path from $u_r$ to $l_s$ is just a gentle
edge. For every $i$ such that $r' < i \leq r$ and $u_i$ is
the right vertex of $H_i$, hexagon $H_i$ and the edge $(u_{i-1},u_i)$ are shown
in red; the length of $(u_{i-1},u_i)$ is bounded by
$\sqrt{3}d_x(u_{i-1},u_i) - ({\tt y}(u_{i-1})-{\tt y}(u_i))$, a value equal
to the total length of the two intersecting, red, dashed segments going north
from $u_{i-1}$ and north-west from $u_i$. The total length of the two dashed
blue segments is an upper bound on the length of the edge $(u_r,l_s)$ and
the total length of the dotted red line segments represent the upper bound
$\sqrt{3}d_x(p,q) - ({\tt y}(p)-{\tt y}(q))$ on the length of the path 
$p=u_{r'}, \dots, u_r, l_{s}, l_{s+1}, \dots, l_{s'}=q$.}
\label{fig:caseA}
\end{figure}

Similarly, we consider the sequence of points $l_s, l_{s+1}, \dots, l_n$ and
set $q=l_{s'}$ to be the first point that is a left induction vertex (of $H_{s'+1}$)
or $q = l_{s'} = l_n$ if there are no left induction vertices in the sequence.
For every $i$ such that $s \leq i < s'$, if $l_i$ is a left vertex of $H_{i+1}$
then edge $(l_{i},l_{i+1})$ has endpoints $l_{i}$ and $l_{i+1}$ lying on the 
$s_w$ or $s_e$ and $s_e$ and $e$ sides, respectively, of $H_{i+1}$. This implies
that ${\tt x}(l_{i}) < {\tt x}(l_{i+1})$ and that the length of
$(l_{i},l_{i+1})$ is bounded by 
$\sqrt{3}d_x(l_{i},l_{i+1}) - ({\tt y}(l_{i})-{\tt y}(l_{i+1}))$. Therefore, 
${\tt x}(l_s) < {\tt x}(l_{s'}=q)$ and the length of the path 
$l_{s}, l_{s+1}, \dots, l_{s'}=q$ (path $\mathcal{Q}$) is at most
$\sqrt{3}d_x(l_{s},q)-({\tt y}(l_s)-{\tt y}(q))$, as shown in 
Fig.~\ref{fig:caseA}. 
We combine the paths $\mathcal{P}$ and $\mathcal{Q}$ with the gentle path from
$u_r$ to $l_s$ and this combined
path is gentle, and therefore canonical, since $r' < s'$,  
${\tt x}(p=u_{r'}) < {\tt x}(l_{s'}=q)$, and:
\begin{align*}
d_T (p, q) & \leq d_T(p, u_r) + d_T(u_r, l_s) + d_T(l_s, q) \\
& \leq \sqrt{3}d_x(p,u_r)-({\tt y}(p)-{\tt y}(u_r))+ \sqrt{3}d_x(u_r,l_s))-({\tt y}(u_r)-{\tt y}(l_s)) \\
 & \quad + \sqrt{3}d_x(l_s, q)) - ({\tt y}(l_s) - {\tt y}(q)) \\
& = \sqrt{3}d_x(p,q)-({\tt y}(p)-{\tt y}(q)).
\end{align*}
% since its length is
%\begin{align*} 
%d_T(p,q) & \leq  d_T(p,u_{r}) + d_T(u_{r},l_{s}) +  d_T(l_{s},q) \\
%& \leq \sqrt{3}d_x(p,u_{r})-({\tt y}(p)-{\tt y}(u_{r})) + \sqrt{3}d_x(u_{r},l_{s})-({\tt y}(u_{r})-{\tt y}(l_{s})) + \\
%& \quad \sqrt{3}d_x(l_{s},q)-({\tt y}(l_{s})-{\tt y}(q)) \\
%& =  \sqrt{3}d_x(p,q)-({\tt y}(p)-{\tt y}(q)).
%\end{align*}

{\bf Case B.} 
We now consider the case when $T_{1n}$ contains an irregular gentle edge. 
We assume
w.l.o.g. that the gentle edge has an endpoint $u_r$ that lies on side $s_w$ of
$H_{r+1}$ and also that the gentle edge is the first such edge (in the sense
that for all $i$ such that $0 \leq i < r$, $u_i$ does not lie on side $s_w$
of $H_{i+1}$). (For the case when $u_r$ lies on side $s_e$ of $H_r$ we would
consider the last such edge and use an argument that is symmetric to the one
we make below; the cases when the irregular gentle edge has an endpoint $l_r$
on side $n_w$ of $H_{r+1}$ or on side
$n_e$ of $H_r$ are symmetric to the cases when $u_r$ is on side $s_w$
of $H_{r+1}$ and side $s_e$ of $H_r$, respectively.)

By Lemma~\ref{lem:props}, the irregular gentle edge under consideration is 
$(u_r,l_{r+1})$, with $l_{r+1}$ lying on side $s_e$ or $e$ of $H_{r+1}$. Let $p$
and $q$ be as defined in Case A. Because of our assumption that $(u_r,l_{r+1})$
is first, the case A arguments can be applied to obtain 
${\tt x}(p) < {\tt x}(u_r)$ and bound the distance from $p$ to
$u_r$ by $\sqrt{3}d_x(p,u_r)-({\tt y}(p)-{\tt y}(u_r))$. We cannot do the same
to bound the distance from $l_{r+1}$ to $q = l_{s'}$ because it is possible that for
some $i$ such that $r+1 < i \leq s'$, $l_i$ lies on side $n_e$ of $H_i$ and
${\tt x}(l_{i-1}) > {\tt x}(l_i)$. So we proceed instead with induction and
prove that
\begin{equation*}
\label{eq:induction}
{\tt x}(u_r) < {\tt x}(l_i) \mbox{ and } d_T(u_r, l_i) \leq \sqrt{3}d_x(u_r, l_i) - ({\tt y}(u_r) - {\tt y}(l_i)) \mbox{ for every $i$ such that } r < i \leq s'
\end{equation*}
which will complete the proof.

The base case $i=r+1$ holds because $(u_r,l_{r+1})$ is gentle and 
${\tt x}(u_r) < {\tt x}(l_{r+1})$. For the induction step, we assume that 
${\tt x}(u_r) < {\tt x}(l_i)$ and 
$d_T(u_r, l_i) \leq \sqrt{3}d_x(u_r, l_i)-({\tt y}(u_r)-{\tt y}(l_i))$ 
for all $i$ such that $r < i < s \leq s'$ and show that the
inequality holds for $i = s$ as well. If $l_s = l_{s-1}$, that is trivially
true. Otherwise, $l_s$ is a right vertex of $H_s$. If
$l_s$ lies on the $e$ or $s_e$ side of $H_s$ then we use the arguments from
Case A to get that ${\tt x}(l_{s-1}) < {\tt x}(l_s)$, that the length of edge
$(l_{s-1}, l_s)$ is less than
$\sqrt{3}d_x(l_{s-1}, l_s) - ({\tt y}(l_{s-1}) - {\tt y}(l_s))$, 
and that therefore the inductive step again easily holds.

\begin{figure}
\center{\caseB}
\vspace{-1.2cm}
\center{(a) \hspace{6.35cm} (b)}

\caption{Illustration of {\em Case B.1} in the proof of 
Lemma~\ref{le:boundedGentelSections}. (a) $T_{(r+1)s}$. Point $u_r$ lies on side 
$s_w$ of $H_{r+1}$, point $l_s$ lies on side $n_e$ of $H_s$, and no
$(l_t, u_t)$, for $r < t < s$, has positive slope. (b) $\mathcal{R}(T_{(r+1)s})$
satisfies the conditions of Lemma~\ref{le:mainlemmaB} with $p = u_r$ and
$q = l_s$; because the slope of the line passing through $u_r$ and $l_s$ is less
than $-\frac{1}{\sqrt{3}}$, the length of the red, dashed segments (shown also in (a))
 is an upper bound on the length of a path in $T_{(r+1)s}$ from $u_r$ to $l_s$.}
\label{fig:caseB}
\end{figure}

If, however, $l_s$ lies on side $n_e$ of $H_s$ then we cannot use the Case A
argument because ${\tt x}(l_{s-1}) < {\tt x}(l_s)$ is not necessarily true.
In that case, by Lemma~\ref{lem:props} $(u_{s-1},l_s)$ must be an
(irregular) gentle edge with $u_{s-1}$ lying on the $w$ or $n_w$ side of $H_s$
(as illustrated in 
Fig.~\ref{fig:caseB}-(a)). First we note that ${\tt x}(u_r) < {\tt x}(l_s)$
holds because otherwise $l_s$ would appear below and to the right of $u_r$ and
then $(u_{s-1},l_s)$ would appear before $(u_r,l_{r+1})$
when traveling along segment $[st]$ from $s$ to $t$, a contradiction.
In order to bound $d_T(u_r,l_s)$, 
we consider indices $r+1,...,s-1$ and set $t$ to be
the last one such that edge $(l_t,u_t)$ has positive slope 
(i.e., ${\tt x}(l_t) < {\tt x}(u_t)$ and ${\tt y}(l_t) < {\tt y}(u_t))$, 
if one exists. We consider two subcases:

{\em Case B.1.} If no $(l_t, u_t)$, for $r < t < s$,
has positive slope  (as is the case in Fig.~\ref{fig:caseB}-(a)) we consider
the transformation of the linear sequence 
%$T'_{(r+1)s}$ of 
$T_{(r+1)s}$ obtained by rotating the plane clockwise by
an angle of $\pi/3$, % and then reflecting the plane with respect to the $x$-axis,
as illustrated in Fig.~\ref{fig:caseB}-(b). 

%Let $T'_{r+1}, \dots, T'_s$ be the
%corresponding transformations of triangles $T_{r+1}, \dots, T_s$
%and let $d'_x(u_r,l_s)$ be the difference between the abscissas of $l_s$ and
%$u_r$ in the transformed plane.

In the rotated $T_{(r+1)s}$, which we denote $\mathcal{R}(T_{(r+1)s})$, 
we have $-\sqrt{3} < m_{st} < -\frac{1}{\sqrt{3}}$, $u_r$ is a left induction 
vertex of $T_{r+1}$, and $l_s$ is a right induction vertex of $T_s$. We show
next that $\mathcal{R}(T_{(r+1)s})$ is regular. If,
in $\mathcal{R}(T_{(r+1)s})$, $u_t$ were to lie on a $s$ side of $H_t$ for some
$r < t < s$ then the slope of $(l_t,u_t)$ would have to be greater than
$m_{st}$ and less than the slope of the $s_e$ side of hexagon $H_t$. So,
the slope of $(l_t,u_t)$ in $\mathcal{R}(T_{(r+1)s})$ would have to be
between $-\sqrt{3}$ and $\frac{1}{\sqrt{3}}$ which would imply that
$(l_t,u_t)$ has positive slope in $T_{(r+1)s}$, a contradiction. Note
that if $u_{t-1}$ were to lie on a $s$ side of $H_t$ then so would $u_t$ so
we need not consider that case. We similarly show that no $l_{t-1}$ or $l_t$
lies on a $n$ side of $H_t$. Finally, we note that a gentle edge
$(l_t, u_t)$ in $\mathcal{R}(T_{(r+1)s})$ would have to have positive slope in
$T_{(r+1)s}$, a contradiction. So $\mathcal{R}(T_{(r+1)s})$ contains no gentle
edge and it is thus regular.

All the conditions of (Technical) Lemma~\ref{le:mainlemmaA} therefore apply to
$\mathcal{R}(T_{(r+1)s})$. By the lemma, the distance from $u_r$ to $l_s$
in $\mathcal{R}(T_{(r+1)s})$ (and thus in the original $T_{(r+1)s}$ as well) is
bounded by 
$\frac{4}{\sqrt{3}} d^{\mathcal{R}}_x(u_r,l_s)$, where $d^{\mathcal{R}}_x(u_r,l_s)$
is the difference
between the abscissas of $u_r$ and $l_s$ in $\mathcal{R}(T_{(r+1)s})$). Note that
$\frac{4}{\sqrt{3}} d^{\mathcal{R}}_x(u_r,l_s)$ is the sum of the lengths of
two sides of an equilateral triangle of height $d^{\mathcal{R}}_x(u_r,l_s)$.
Therefore, because
the slope of the line through $u_r$ and $l_s$ in $\mathcal{R}(T_{(r+1)s})$ is
less than $-\frac{1}{\sqrt{3}}$, 
$\frac{4}{\sqrt{3}} d^{\mathcal{R}}_x(u_r,l_s)$ is less than the
length of the piecewise linear curve, shown in Fig.~\ref{fig:caseB}-(b),
consisting of a (longer) vertical segment down from $u_r$ followed by a
(shorter) segment with slope $\frac{1}{\sqrt{3}}$ to $l_s$. The length of that
curve is exactly $\sqrt{3}d_x(u_r,l_s)-({\tt y}(u_r)-{\tt y}(l_s))$ in
$T_{(r+1)s}$ as illustrated in Fig.~\ref{fig:caseB}-(a), which completes the
proof in this case.

{\em Case B.2.} If $(l_t,u_t)$ exists, as illustrated in
Fig.~\ref{fig:othercase}, then ${\tt x}(l_t) < {\tt x}(u_t)$ and 
$d_T(l_t,u_t) \leq \sqrt{3}d_x(l_t,u_t)-({\tt y}(l_t)-{\tt y}(u_t))$. Also,
by induction, ${\tt x}(u_r) \leq {\tt x}(l_t)$ and 
$d_T(u_r,l_t) \leq \sqrt{3}d_x(u_r,l_t)-({\tt y}(u_r)-{\tt y}(l_t))$.
What remains to be done is to show that ${\tt x}(u_t) < {\tt x}(l_s)$ and 
that $d_T(u_t,l_s) \leq \sqrt{3}d_x(u_t,l_s)-({\tt y}(u_t)-{\tt y}(l_s))$. 
We now have two cases to consider. Note first that, because $t$ is last, for 
every $j$ such that $t < j < s$, $u_j$ cannot lie on side $e$ or $s_e$ of $H_j$
(because otherwise $(l_j,u_j)$ would have to have positive slope); this means
that $u_j$ cannot be a right induction point of $H_j$.

\begin{figure}
\center{\othercase}

\caption{Illustration of {\em Case B.2} in the proof of 
Lemma~\ref{le:boundedGentelSections}. If $(l_t,u_t)$ has positive slope then
${\tt x}(l_t) < {\tt x}(u_t)$ and 
$d_T(l_t,u_t) \leq \sqrt{3}d_x(l_t,u_t)-({\tt y}(l_t)-{\tt y}(u_t))$.
By induction, ${\tt x}(u_r) \leq {\tt x}(l_t)$ and 
$d_T(u_r,l_t) \leq \sqrt{3}d_x(u_r,l_t)-({\tt y}(u_r)-{\tt y}(l_t))$. In
{\em Case B.2.i}, no $u_{j-1}$ lies on side $s_w$ of $H_j$, where $t<j<s$, and so
${\tt x}(u_t) \leq \dots \leq {\tt x}(u_{s-1}) \leq {\tt x}(l_s)$ and 
$d_T(u_u,l_s) \leq \sqrt{3}d_x(u_t,l_s)-({\tt y}(u_t)-{\tt y}(l_s))$.}
\label{fig:othercase}
\end{figure}

{\em Case B.2.i.} If we also have that no $u_{j-1}$ lies on side $s_w$ of $H_j$,
where $t<j<s$ and as illustrated in Fig.~\ref{fig:othercase}, 
then, using the arguments from Case A, we can show that
${\tt x}(u_t) \leq \dots \leq {\tt x}(u_{s-1}) \leq {\tt x}(l_s)$ and 
can bound the length of the path $u_t, \dots, u_{s-1}, l_s$ with 
$\sqrt{3}d_x(u_t,l_s)-({\tt y}(u_t)-{\tt y}(l_s))$. 

%Putting
%everything together, we get ${\tt x}(u_r) \leq {\tt x}(l_s)$ and:
%\begin{align*}
%d_T (u_r, l_s) & \leq d_T(u_r, l_t) + d_T(l_t, u_t) + d_T(u_t, l_s) \\
%& \leq \sqrt{3}d_x(u_r,l_t)-({\tt y}(u_r)-{\tt y}(l_t))+ \sqrt{3}d_x(l_t,u_t))-({\tt y}(l_t)-{\tt y}(u_t)) \\
% & \quad + \sqrt{3}d_x(u_t, l_s)) - ({\tt y}(u_t) - {\tt y}(l_s)) \\
%& \leq \sqrt{3}d_x(u_r,l_s)-({\tt y}(u_r)-{\tt y}(l_s)).
%\end{align*}

{\em Case B.2.ii.} Finally, suppose that $u_{t'-1}$, for some $t'$ such that
$t < t' < s$, lies on side $s_w$ of $H_{t'}$ and let us assume that $t'$ is
leftmost (in the sense that no $u_{j-1}$ lies on side $s_w$ of $H_j$ for $j$
such that $t < j < t'$). Using arguments from Case A we obtain that 
${\tt x}(u_t) < {\tt x}(u_{t'})$ and 
$d_T(u_t, u_{t'}) \leq \sqrt{3}d_x(u_t, u_{t'})) - ({\tt y}(u_t) -
{\tt y}(u_{t'} ))$. Using the approach from {\em Case B.1} we get
${\tt x}(u_{t'}) < {\tt x}(l_s)$ and 
$d_T(u_{t'}, l_s) \leq \sqrt{3}d_x(u_{t'}, l_s) - ({\tt y}(u_{t'})-{\tt y}(l_s))$.
We complete the proof of the lemma by combining all these inequalities.
\end{proof}

}

\section{Proof of (Technical) Lemma \ref{le:mainlemmaA}}
\label{sec:proofA}
\iftoggle{abstract}
{We prove this lemma via a framework that uses continuous versions
of the discrete functions ($p_N$, $p_S$, etc.) informally introduced in
Subsection~\ref{sub:keylemma}. We start by defining functions $H(x)$, $T(x)$, $u(x)$,
$\ell(x)$, ${\tt r}(x)$, $w(x)$, and $e(x)$ for
${\tt x}(p) \leq x \leq {\tt x}(q)$ as illustrated in
Fig.~\ref{fig:notation}.

\begin{figure}[h]
\center{\notation}
\caption{Let point $c_i$ be the center of hexagon $H_i$, for $i=1,\dots,n$. For $x$ such that
${\tt x}(c_i) \leq x < {\tt x}(c_{i+1})$, $H(x)$ is the hexagon whose center has abscissa
$x$ and that has points $u_i=u(x)$ and $l_i=\ell(x)$ on its boundary.
Intuitively, function $H(x)$ from $x= {\tt x}(c_i)$ to $x= {\tt x}(c_{i+1})$
models the ``pushing''
of hexagon $H_i$ through $u_i$ and $l_i$ up until it becomes $H_{i+1}$. Function
${\tt r}(x)$ is the minimum radius of $H(x)$ and $w(x) = x-{\tt r}(x)$ and
$e(x) = x + {\tt r}(x)$ are the abscissas of the $w$ and $e$ sides, respectively,
of $H(x)$. Finally, we define $T(x) = T_{1i}$ when ${\tt x}(c_i) \leq x < {\tt x}(c_{i+1})$.} % $N(x)$ and $S(x)$ are the vertices $N$ and $S$ of $H(x)$.}
\label{fig:notation}
\end{figure}

For a point $o$ on a side of $H(x)$, we define functions $p_N(o, x)$ and 
$p_S(o, x)$ as the {\em signed} shortest distances around the perimeter of
$H(x)$ to the $N$ vertex and $S$ vertex, respectively, with sign 
$\sgn(x - x(o))$. As Fig.~\ref{fig:dNdS}-(a) and Fig.~\ref{fig:dNdS}-(b)
illustrate, these signs are positive for $o$ on the $n_w$, $w$, or $s_w$
sides of $H(x)$ and negative for $o$ on $n_e$, $e$, or $s_e$ sides. 
We omit $o$ and use the shorthand notation $p_N(x)$ if $o = u(x)$ and
$p_S(x)$ if $o = \ell(x)$. 

\begin{figure}
\dNdS

\hspace{1.45cm} (a) \hspace{4.35cm} (b) \hspace{4.55cm} (c)
\caption{(a) The values of $p_N(o, x)$ are shown, for various points $o$
lying on the boundary of $H(x)$, as signed hexagon arc lengths. (b) The 
values of $p_S(o, x)$ are shown similarly. (c) The length of edge
$(u_{i-1},u_i)$, with $u_{i-1},u_i$ lying on the boundary of $H(x)=H_i$, 
is bounded by $p_N(u_{i-1}, x) - p_N(u_i,x)$.}
\label{fig:dNdS}
\end{figure}

Functions $U(x)$ and $L(x)$, used to bound the length of the shortest path from $p$ to $q$ and illustrated in Fig.~\ref{fig:UandL}-(a), are defined as follows for ${\tt x}(p) \leq x \leq {\tt x}(q)$:
\begin{align*}
& U(x) = d_{T(x)}(p,u(x)) + p_N(x) \mbox{\hspace{2cm}} L(x) = d_{T(x)}(p,\ell(x)) + p_S(x)
\end{align*}
We note that $U({\tt x}(q))+L({\tt x}(q))$ is exactly twice the distance in 
$T_{1n}$ from $p$ to $q$. We will compute an upper bound for function $U+L$ 
by bounding its growth rate.

Functions ${\bar U}(x)$ and ${\bar L}(x)$, used to bound the lengths of the upper and lower paths in $T_{1n}$ and illustrated in Fig.~\ref{fig:UandL}-(b), are defined as follows for ${\tt x}(c_i) \leq x \leq {\tt x}(c_{i+1})$:

\begin{figure}[!b]
\begin{center}\UandL

(a) \hspace{6cm} (b)
\end{center}
\caption{(a) Definition of $U(x)$ and $L(x)$. For example, $U(x)$ for 
${\tt x}(c_i) \leq x < {\tt x}(c_{i+1})$ is the sum
of the length of the shortest path from $p$ to $u_i$ in $T_{1i}$ 
(illustrated as the red dashed path) and $p_N(x)$ (of negative value and 
represented as a red arrow). (b) Definition of $\bar{U}(x)$ and $\bar{L}(x)$. 
When ${\tt x}(c_i) \leq x < {\tt x}(c_{i+1})$ for example,
$\bar{U}(x) - p_N(x)$ is an upper bound (equal to the length of the 
sequence of red dashed
hexagon arcs going from $p$ to $u_i$) on the length
of the upper path $p, u_0, u_1, \dots, u_{i-1}, u_i$.}
\label{fig:UandL}
\end{figure}

\begin{align*}
\bar{U}(x) & = \sum_{j=1}^{i} (p_N(u({\tt x}(c_{j-1})), {\tt x}(c_{j})) - p_N(u({\tt x}(c_{j})),{\tt x}(c_{j}))) + p_N(x) \\
\bar{L}(x) & = \sum_{j=1}^{i} (p_S(\ell({\tt x}(c_{j-1})), {\tt x}(c_{j})) - p_S(\ell({\tt x}(c_{j})),{\tt x}(c_{j}))) + p_S(x)
\end{align*}

Functions ${\bar U}(x)$ and ${\bar L}(x)$, 
as well as $U(x)$ and $L(x)$, have rates of growth when
${\tt x}(c_i) < x < {\tt x}(c_{i+1})$ that depend solely on the
last term ($p_N(x)$ or $p_S(x)$).
%For ${\tt x}(c_i) < x < {\tt x}(c_{i+1})$, the terms 
%$d_{T_{1i}}(p,u_i)$ and $d_{T_{1i}}(p,l_i)$ in the definitions of 
%$U(x)$ and $L(x)$, respectively, are constant. 
%This means that for such $x$ the rate of growth of
%functions $U(x)$ and $L(x)$ is determined solely by the terms 
%$p_N(x)$ and $p_S(x)$, respectively. 
We show that functions $p_N$ and 
$p_S$ are monotonically increasing piecewise linear 
 and bound the rate of growth of $p_N$ and $p_S$ using elementary geometric 
arguments illustrated in Fig.~\ref{fig:growth}. Figure~\ref{fig:growth}-(c)
illustrates a case when the growth rate of $p_N+p_S$, and therefore also of 
${\bar U}+{\bar L}$ and of $U+L$, is $\frac{8}{\sqrt{3}}$.
%with respect to horizontal
%distance $\Delta x$
%, which is also the growth
%rate of ${\bar U} + {\bar L}$ and also $U+L$ when 
%${\tt x}(c_i) < x < {\tt x}(c_{i+1})$, is at most  
%Figure~\ref{fig:growth} illustrates the the proof.

\begin{figure}
\growth

\hspace{2cm} (a) \hspace{3.8cm} (b) \hspace{4.2cm} (c)
\caption{Constructions demonstrating growth rates, with respect to 
$\Delta x = 1$, of $p_N$, $p_S$ and other
functions for three different placements of $u(x)=u$ and $\ell(x)=l$ on the
boundary of $H(x)$. 
%Note that the growth
%$\Delta p_N(x)$ when $t(x)$ is a single transition can be represented using
%a piecewise linear curve of length $\Delta p_N(x)$.
}
\label{fig:growth}
\end{figure}

}
{\begingroup
\def\thetheorem{\ref{le:mainlemmaA}}
\begin{lemma}[The Technical Lemma]
%If $T_{1n}$ is regular, $p$ is the left induction vertex of $T_1$, and
%$q$ is the right induction vertex of $T_n$ then
%\begin{itemize}
%\item the slope of the line through $p$ and $q$ is between $-\sqrt{3}$ and 
%$\sqrt{3}$
%\item there is a path in $T_{1n}$ from $p$ to $q$ of length at most $\frac{4}{\%sqrt{3}} d_x(p,q)$.
%\end{itemize}
If $T_{1n}$ is a regular linear sequence of triangles then there is a path in
$T_{1n}$ from
the left induction vertex $p$ of $T_1$ to the right induction vertex $q$ of
$T_n$ of length at most  $\frac{4}{\sqrt{3}} d_x(p,q)$.
\end{lemma}
\addtocounter{theorem}{-1}
\endgroup

To prove this lemma we develop a framework that uses continuous versions
of the discrete functions ($p_N$, $p_S$, etc.) informally introduced in
Subsection~\ref{sub:keylemma}. We start by defining functions $H(x)$, $u(x)$,
and $\ell(x)$ for
${\tt x}(p) \leq x \leq {\tt x}(q)$:
\begin{itemize}
\item If $c_i$ is the center of hexagon $H_i$,
for $i = 1, \dots, n$, we define $H({\tt x}(c_i)) = H_i$. We also define
$u({\tt x}(c_i))$ and $\ell({\tt x}(c_i))$ to be $u_i$ and $l_i$,
respectively. 
\item Then, for every $i = 1, \dots, n-1$ and $x$ such that
${\tt x}(c_i) < x < {\tt x}(c_{i+1})$,
we define $H(x)$ to be the hexagon whose center has abscissa $x$ and that
has points $u_i$ and $l_i$ on its boundary; we also define $u(x)$ to be $u_i$
and $\ell(x)$ to be $l_i$ (see Fig.~\ref{fig:notation}).
\iftoggle{abstract}
{\begin{figure}}
{\begin{figure}}
\center{\notation}
\caption{For $x$ such that
${\tt x}(c_i) < x < {\tt x}(c_{i+1})$, $H(x)$ is the hexagon whose center has abscissa
$x$ and that has points $u_i=u(x)$ and $l_i=\ell(x)$ on its boundary.
Intuitively, function $H(x)$ from $x= {\tt x}(c_i)$ to $x= {\tt x}(c_{i+1})$
models the ``pushing''
of hexagon $H_i$ through $u_i$ and $l_i$ up until it becomes $H_{i+1}$. Function
${\tt r}(x)$ is the minimum radius of $H(x)$ and $w(x) = x-{\tt r}(x)$ and
$e(x) = x + {\tt r}(x)$ are the abscissas of the $w$ and $e$ sides, respectively,
of $H(x)$. $N(x)$ and $S(x)$ are the vertices $N$ and $S$ of $H(x)$.}
\label{fig:notation}
\end{figure}
We note that $H(x)$ is
uniquely defined because $T_{1n}$ contains no gentle edges and so $u(x)$ and
$\ell(x)$ cannot lie on sides $e$ and $w$, respectively, or on sides $w$ and
$e$, respectively, of $H(x)$.
\item
As we will soon see, function $H(x)$ has a specific growth pattern that depends
on what sides of $H(x)$ points $u(x)$ and $\ell(x)$ lie on. In order to
simplify our presentation, we define $H(x)$ when 
${\tt x}(p) \leq x < {\tt x}(c_1)$ and ${\tt x}(c_n) < x \leq {\tt x}(q)$ 
in a way that fits that pattern. Let $W_S^*$ be vertex
$W_S$ of $H_1=H({\tt x}(c_1))$ and let $H^*$ be the hexagon with $p$ and $W_S^*$
as its $W_N$ and $W_S$ vertices, respectively. Let $c^*$ be the center of $H^*$.
When ${\tt x}(p) \leq x \leq {\tt x}(c^*)$ we define $H(x)$ to be the hexagon 
whose center has abscissa $x$ and that has point $p$ as its $W_N$ vertex; 
we also define $u(x) = \ell(x) = p$. When 
${\tt x}(c^*) \leq x < {\tt x}(c_1)$ we define $H(x)$ to be the hexagon whose
center has abscissa $x$ and that has point $W_S^*$ as its $W_S$ vertex; we also
define $u(x) = \ell(x) = p$. We define
$H(x)$ when ${\tt x}(c_n) < x \leq {\tt x}(q)$ in a symmetric fashion
with $u(x) = \ell(x) = q$ in that case.
\end{itemize}

Next, we define ${\tt r}(x)$ to be the minimum radius of $H(x)$ (i.e., the
distance between the center of $H(x)$ to its $w$ side). Note that hexagons 
$H({\tt x}(p))$ and
$H({\tt x}(q))$ both have radius $0$ and define their centers to be $c_0 = p$
and $c_{n+1}=q$. We also extend the notation $T_{ij}$ to include $T_{10} = c_0$
and define $T(x) = T_{1i}$ when ${\tt x}(c_i) \leq x < {\tt x}(c_{i+1})$ and
$T({\tt x}(c_{n+1}))=T_{1n}$.
We define $N(x)$ and $S(x)$ to be the $N$ and $S$ vertex, respectively, of
$H(x)$. Finally, we define functions $w(x) = x-{\tt r}(x)$ and
$e(x) = x + {\tt r}(x)$ that keep track of the abscissa of the $w$ and $e$
sides, respectively, of $H(x)$ (refer to Fig.~\ref{fig:notation}). Note that
functions ${\tt r}(x), w(x), e(x)$ as well as functions ${\tt y}(N(x))$ and
${\tt y}(S(x))$ (the ordinates of $N(x)$ and $N(y)$, resp.) are continuous.

\begin{figure}[!b]
\dNdS

\vspace{0.3cm}
\hspace{1.45cm} (a) \hspace{4.35cm} (b) \hspace{4.55cm} (c)
\caption{(a) The values of $p_N(o, x)$ are shown, for various points $o$
lying on the boundary of $H(x)$, as signed hexagon arc lengths. (b) The 
values of $p_S(o, x)$ are shown similarly. (c) The length of edge
$(u_{i-1},u_i)$, with $u_{i-1},u_i$ lying on the boundary of $H(x)=H_i$, 
is bounded by $p_N(u_{i-1}, x) - p_N(u_i,x)$.}
\label{fig:dNdS}
\end{figure}

For a point $o$ on a side of $H(x)$, we define functions $p_N(o, x)$ and 
$p_S(o, x)$ as the {\em signed} shortest distances around the perimeter of
$H(x)$ to the $N$ vertex and $S$ vertex, respectively, with sign 
$\sgn(x - x(o))$. As Fig.~\ref{fig:dNdS}-(a) and Fig.~\ref{fig:dNdS}-(b)
illustrate, these signs are positive for $o$ on the $n_w$, $w$, or $s_w$
sides of $H(x)$ and negative for $o$ on $n_e$, $e$, or $s_e$ sides. 
We omit $o$ and use the shorthand notation $p_N(x)$ if $o = u(x)$ and
$p_S(x)$ if $o = \ell(x)$. 

\begin{figure}
\begin{center}\UandL

\vspace{0.6cm}
(a) \hspace{6cm} (b)
\end{center}
\caption{(a) Definition of $U(x)$ and $L(x)$. For example, $U(x)$ for 
${\tt x}(c_i) \leq x < {\tt x}(c_{i+1})$ is the sum
of the length of the shortest path from $p$ to $u_i$ in $T_{1i}$ 
(illustrated as the red dashed path) and $p_N(x)$ (of negative value and 
represented as a red arrow). (b) Definition of $\bar{U}(x)$ and $\bar{L}(x)$. $\bar{U}(x) - p_N(x)$,
for example, is an upper bound (equal to the length of the 
sequence of red dashed
hexagon arcs going from $p$ to $u_i$) on the length
of the upper path $p, u_0, u_1, \dots, u_{i-1}, u_i$.}
\label{fig:UandL}
\end{figure}

The key functions $U(x)$ and $L(x)$, illustrated in Fig.~\ref{fig:UandL}-(a), 
are defined as follows for
${\tt x}(p) = {\tt x}(c_0) \leq x \leq {\tt x}(c_{n+1}) = {\tt x}(q)$:
%\begin{align*}
%& U(x) = d_{T_{1i}(p,u_i) + p_N(x) \mbox{ when } {\tt x}(c_i) \leq x < {\tt x}(c_{i+1}) & U({\tt x}(q)) = d_{T_{1n}}(p,q)) \\
%& L(x) = d_{T_{1i}}(p,l_i) + p_S(x) \mbox{ when } {\tt x}(c_i) \leq x < {\tt x}(c_{i+1}) & L({\tt x}(q)) = d_{T_{1n}}(p,q))
%\end{align*}
\begin{equation*}
U(x) = d_{T(x)}(p,u(x)) + p_N(x) \mbox{\hspace{2cm}} L(x) = d_{T(x)}(p,\ell(x)) + p_S(x)
\end{equation*}
Finally, we define the {\em potential function} $P(x)$ to be $U(x) + L(x)$. 
We note that
$P({\tt x}(q))$ is exactly twice the distance in $T_{1n}$ from $p$ to $q$. The 
main goal of this paper is to compute an upper bound for function $P(x)$.
We will do this by bounding its growth rate.

For ${\tt x}(c_i) < x < {\tt x}(c_{i+1})$, the terms 
$d_{T(x)}(p,u(x))$ and $d_{T(x)}(p,\ell(x))$ in the definitions of 
$U(x)$ and $L(x)$, respectively, are constant. 
This means that for such $x$ the rate of growth of
functions $U(x)$ and $L(x)$ is determined solely by the terms 
$p_N(x)$ and $p_S(x)$, respectively.

We note that the length of each edge $(u_{i-1},u_i)$ 
(assuming $u_{i-1} \not= u_i$) can be bounded by the distance
from $u_{i-1}$ to $u_i$ when traveling clockwise along the sides of $H_i$.
This distance is exactly $p_N(u_{i-1}, {\tt x}(c_i)) - p_N(u_i,{\tt x}(c_i))$ as
illustrated in Fig.~\ref{fig:dNdS}-(c).
This, and a similar observation about each edge $(l_{i-1},l_i)$, motivates
the following definitions of functions $\bar{U}(x)$ and $\bar{L}(x)$ that we
will use to bound lengths of subpaths of the upper path $p,u_0, \dots,u_n,q$ 
and the lower path $p, l_0, \dots, l_n,q$ (see also Fig.~\ref{fig:UandL}-(b)). 
When ${\tt x}(c_i) \leq x < {\tt x}(c_{i+1})$ or $x = {\tt x}(c_{n+1}) = {\tt x}(q)$, we define
\begin{align*} 
\bar{U}(x) & = \sum_{j=1}^{i} (p_N(u({\tt x}(c_{j-1})), {\tt x}(c_{j})) - p_N(u({\tt x}(c_{j})),{\tt x}(c_{j}))) + p_N(x) \\
\bar{L}(x) & = \sum_{j=1}^{i} (p_S(\ell({\tt x}(c_{j-1})), {\tt x}(c_{j})) - p_S(\ell({\tt x}(c_{j})),{\tt x}(c_{j}))) + p_S(x)
\end{align*}
%\begin{align*} 
%\bar{U}(x) & = \begin{cases}
%p_N(x) & \mbox{ if } i=0 \\
%\sum_{j=0}^{i-1} (p_N(u_j, {\tt x}(c_{j+1})) - p_N(u_{j+1},{\tt x}(c_{j+1})))
%+ p_N(x) & \mbox{ if } i > 0
%\end{cases} \\
%\bar{L}(x) & = \begin{cases}
%p_S(x) & \mbox{ if } i=0 \\
%\sum_{j=0}^{i-1} (p_S(l_j, {\tt x}(c_{j+1})) - p_S(l_{j+1},{\tt x}(c_{j+1})))
%+ p_S(x) & \mbox{ if } i > 0
%\end{cases}
%\end{align*}
and let ${\bar P}(x)={\bar U}(x) + {\bar L}(x)$. We note that
${\bar P}(x)$ is an upper bound for $P(x)$ and that ${\bar P}({\tt x}(q))$
bounds the sum of the lengths of paths $p,u_0, \dots,u_n,q$ 
and $p, l_0, \dots, l_n,q$. Functions ${\bar U}(x)$ and ${\bar L}(x)$, 
just like $U(x)$ and $L(x)$, have rates of growth when
${\tt x}(c_i) < x < {\tt x}(c_{i+1})$ that are determined solely by the
last term ($p_N(x)$ or $p_S(x)$). We will show that functions 
$p_N(x)=p_N(u(x),x)$ and $p_S(x) = p_S(\ell(x),x)$ are monotonically
increasing piecewise linear functions whose rates of growth depend solely
on the sides of $H(x)$ that $u(x)$ and $\ell(x)$ lie on. In order to capture
precisely this rate of growth, we define the {\em transition
function} $t(x)$ to be {\em transition} $t_{ij}$ if $\ell(x)$ lies in the interior
of side $i$ and $u(x)$ lies in the interior of side $j$. We use the wildcard
notations $t_{\ast j}$ and $t_{i \ast}$ to refer to any
transition with $u(x)$ on side $j$ and $\ell(x)$ on side $i$, respectively,
of $H(x)$. 
%We can now define the growth rates of the various functions we will need:

\newpage
\begin{lemma}
\label{lem:growthrates}
%\label{lem:transitions}
Given the assumptions of Lemma~\ref{le:mainlemmaA}, for every 
$x \in [{\tt x}(p), {\tt x}(q)]$:
\begin{itemize}
\item $t(x)$, when defined, is one of
\begin{equation*}
t_{wn_w},t_{wn_e},t_{s_ww},  t_{s_wn_w}, t_{s_wn_e},  t_{s_we},  t_{s_ew},  t_{s_en_w},  t_{s_en_e},  t_{s_ee},  t_{en_w},  t_{en_e}.
\end{equation*}
\item Functions ${\tt y}(N(x)), {\tt y}(S(x)), {\tt r}(x), w(x), e(x)$
and, for $x \in [{\tt x}(c_i), {\tt x}(c_{i+1}))$ and \newline $i=0,\dots,n-1$,
functions $p_N(x)$, $p_S(x)$, and ${\bar P}(x)$ are all
piecewise linear functions with the following growth rates where defined:
\end{itemize}\vspace{-0.5cm}\begin{center}
{\tabulinesep=1.2mm
\begin{tabu} to \textwidth{X[2l]X[r]X[r]X[r]X[r]X[r]X[r]X[r]X[r]X[r]X[r]X[r]X[r]}
& \transitiononezero & \transitiononefive & \transitiontwoone & \transitiontwozero & \transitiontwofive & \transitiontwofour & \transitionthreeone & \transitionthreezero & \transitionthreefive & \transitionthreefour & \transitionfourzero & \transitionfourfive \\
{$\bm{t(x)}$} &  {$\bm{t_{wn_w}}$} & {$\bm{t_{wn_e}}$} &  {$\bm{t_{s_ww}}$} & {$\bm{t_{s_wn_w}}$} &  {$\bm{t_{s_wn_e}}$} & {$\bm{t_{s_we}}$} &  {$\bm{t_{s_ew}}$} & {$\bm{t_{s_en_w}}$} &  {$\bm{t_{s_en_e}}$} & {$\bm{t_{s_ee}}$} &  {$\bm{t_{en_w}}$} & {$\bm{t_{en_e}}$} \\ \hline
{$\bm{\frac{\Delta {\bar P}(x)}{\Delta x}}$ } & $\frac{6}{\sqrt{3}}$ & $\frac{8}{\sqrt{3}}$ & $\frac{6}{\sqrt{3}}$ & $\frac{4}{\sqrt{3}}$ & $\frac{4}{\sqrt{3}}$ & $\frac{8}{\sqrt{3}}$ & $\frac{8}{\sqrt{3}}$ & $\frac{4}{\sqrt{3}}$ & $\frac{4}{\sqrt{3}}$ & $\frac{6}{\sqrt{3}}$ & $\frac{8}{\sqrt{3}}$ & $\frac{6}{\sqrt{3}}$ \\ \hline
{$\bm{\frac{\Delta p_N(x)}{\Delta x}}$ } & $\frac{2}{\sqrt{3}}$ & $\frac{2}{\sqrt{3}}$ & $\frac{4}{\sqrt{3}}$ & $\frac{2}{\sqrt{3}}$ & $\frac{2}{\sqrt{3}}$ & $\frac{6}{\sqrt{3}}$ & $\frac{6}{\sqrt{3}}$ & $\frac{2}{\sqrt{3}}$ & $\frac{2}{\sqrt{3}}$ & $\frac{4}{\sqrt{3}}$ & $\frac{2}{\sqrt{3}}$ & $\frac{2}{\sqrt{3}}$ \\ \hline
{$\bm{\frac{\Delta {\tt y}(N(x))}{\Delta x}}$ } & $\frac{1}{\sqrt{3}}$ & $-\frac{1}{\sqrt{3}}$ & $\frac{3}{\sqrt{3}}$ & $\frac{1}{\sqrt{3}}$ & $-\frac{1}{\sqrt{3}}$ & $-\frac{5}{\sqrt{3}}$ & $\frac{5}{\sqrt{3}}$ & $\frac{1}{\sqrt{3}}$ & $-\frac{1}{\sqrt{3}}$ & $-\frac{3}{\sqrt{3}}$ & $\frac{1}{\sqrt{3}}$ & $-\frac{1}{\sqrt{3}}$ \\ \hline
{$\bm{\frac{\Delta p_S(x)}{\Delta x}}$ } & $\frac{4}{\sqrt{3}}$ & $\frac{6}{\sqrt{3}}$ & $\frac{2}{\sqrt{3}}$ & $\frac{2}{\sqrt{3}}$ & $\frac{2}{\sqrt{3}}$ & $\frac{2}{\sqrt{3}}$ & $\frac{2}{\sqrt{3}}$ & $\frac{2}{\sqrt{3}}$ & $\frac{2}{\sqrt{3}}$ & $\frac{2}{\sqrt{3}}$ & $\frac{6}{\sqrt{3}}$ & $\frac{4}{\sqrt{3}}$ \\ \hline
{$\bm{\frac{\Delta {\tt y}(S(x))}{\Delta x}}$ } & $-\frac{3}{\sqrt{3}}$ & $-\frac{5}{\sqrt{3}}$ & $-\frac{1}{\sqrt{3}}$ & $-\frac{1}{\sqrt{3}}$ & $-\frac{1}{\sqrt{3}}$ & $-\frac{1}{\sqrt{3}}$ & $\frac{1}{\sqrt{3}}$ & $\frac{1}{\sqrt{3}}$ & $\frac{1}{\sqrt{3}}$ & $\frac{1}{\sqrt{3}}$ & $\frac{5}{\sqrt{3}}$ & $\frac{3}{\sqrt{3}}$ \\ \hline
{$\bm{\frac{\Delta {\tt r}(x)}{\Delta x}}$ } & $1$ & $1$ & $1$ & $\frac{1}{2}$ & $0$ & $-1$ & $1$ & $0$ & $-\frac{1}{2}$ & $-1$ & $-1$ & $-1$ \\ \hline
{$\bm{\frac{\Delta w(x)}{\Delta x}}$ } & $0$ & $0$ & $0$ & $\frac{1}{2}$ & $1$ & $2$ & $0$ & $1$ & $\frac{3}{2}$ & $2$ & $2$ & $2$ \\ \hline
{$\bm{\frac{\Delta e(x)}{\Delta x}}$ } & $2$ & $2$ & $2$ & $\frac{3}{2}$ & $1$ & $0$ & $2$ & $1$ & $\frac{1}{2}$ & $0$ & $0$ & $0$ \\ \hline
\end{tabu}}
\end{center}
\end{lemma}

\begin{figure}[!b]
\growth

\hspace{2cm} (a) \hspace{3.8cm} (b) \hspace{4.2cm} (c)
\caption{Constructions demonstrating Lemma~\ref{lem:growthrates} for
transitions (a) $t_{s_wn_e}$, (b) $t_{wn_w}$, and (c) $t_{s_we}$. In all three
cases the growths shown are with respect to $\Delta x = 1$. $\Delta {\tt r}(x)$
can be obtained from $\frac{1}{2}(\Delta e(x) - \Delta w(x))$ and
$\Delta P(x)$ from $\Delta p_N(x) + \Delta p_S(x)$. Note that the growth
$\Delta p_N(x)$ when $t(x)$ is a single transition can be represented using
a piecewise linear curve of length $\Delta p_N(x)$.}
\label{fig:growth}
\end{figure}

\begin{proof}
The first part follows from the definition of a regular linear sequence of
triangles. The growth rates for transitions
$t_{s_wn_e}$, $t_{wn_w}$, and $t_{s_we}$ follow from elementary
geometric constructions illustrated in Fig.~\ref{fig:growth}. The constructions
for the remaining transitions are similar.
\end{proof}

%We can now prove Lemma~\ref{lem:rotation}:
%\begin{proof}[Proof of Lemma~\ref{lem:rotation}]
%The growth rate of $P(x)$ is the growth rate of $p_N(x)+p_S(x)$ which
%is at most $\frac{8}{\sqrt{3}}$ by Lemma~\ref{lem:growthrates}. This
%means that there is a path from $p$ to $q$ in $T_{1n}$ whose length is at
%most $\frac{4}{\sqrt{3}}d_{\tt x}(p,q) \leq (\sqrt{3}+\mbox{slope}(p,q))d_x(p,q)$.
%\end{proof}

We now consider the behavior of $U(x)$, $L(x)$, ${\bar U}(x)$, and ${\bar L}(x)$
at $x = {\tt x}(c_i)$ for $i=1,\dots,n+1$. Note that 
\begin{align*}
\bar{U}({\tt x}(c_i)) & = \sum_{j=1}^{i} (p_N(u({\tt x}(c_{j-1})), {\tt x}(c_{j})) - p_N(u({\tt x}(c_{j})),{\tt x}(c_{j}))) + p_N(u_i, {\tt x}(c_i)) \\
& = \sum_{j=1}^{i-1} (p_N(u({\tt x}(c_{j-1})), {\tt x}(c_{j})) - p_N(u({\tt x}(c_{j})),{\tt x}(c_{j}))) + p_N(u({\tt x}(c_{i-1})), {\tt x}(c_i))
%\bar{U}({\tt x}(c_i)) & = \sum_{j=0}^{i-1} (p_N(u_j, {\tt x}(c_{j+1})) - p_N(u_{j+1},{\tt x}(c_{j+1}))) + p_N(u_i, {\tt x}(c_i)) \\
%& = \sum_{j=0}^{i-2} (p_N(u_j, {\tt x}(c_{j+1})) - p_N(u_{j+1},{\tt x}(c_{j+1}))) +
%(p_N(u_{i-1}, {\tt x}(c_{i}))
\end{align*}
which is the limit for ${\bar U}(x)$ when $x \rightarrow {\tt x}(c_i)$ from 
the left.
So ${\bar U}(x)$ is continuous from ${\tt x}(c_0)$ to ${\tt x}(c_{n+1})$ and,
similarly, so is ${\bar L}(x)$ and therefore ${\bar P}(x)$ as well.
Since, by Lemma~\ref{lem:growthrates}, ${\bar P}({\tt x}(q))$ is bounded by
$\frac{8}{\sqrt{3}}d_x(p,q)$, it follows that the sum of the lengths of the upper path
$p,u_0, \dots,u_n,q$ and the lower path $p, l_0, \dots, l_n,q$ is at most 
$\frac{8}{\sqrt{3}}d_x(p,q)$. Therefore, one of the two paths has length
bounded by $\frac{4}{\sqrt{3}}d_x(p,q)$ which proves the Technical Lemma:

\begin{proof}[Proof of (Technical) Lemma~\ref{le:mainlemmaA}]
$d_{T_{1n}}(p,q) \leq \frac{1}{2}{\bar P}({\tt x}(q)) \leq \frac{4}{\sqrt{3}}d_x(p,q)$
by Lemma~\ref{lem:growthrates}.% and Lemma~\ref{lem:discontinuity}.
\end{proof}

In order to prove the stronger bound of Amortization Lemma~\ref{le:mainlemmaB},
in addition to edges of the upper and lower path we will need to include edges
$(l_i,u_i)$ in the analysis and obtain a tighter bound on $P(x)$.
For this reason, we complete the growth rate analysis of
functions $U(x)$, $L(x)$, and $P(x)$. We show that they are not necessarily 
continuous at $x = {\tt x}(c_i)$ but, when discontinuous, they do not increase:
\begin{lemma} 
\label{lem:discontinuity}
Functions $U(x)$, $L(x)$, and $P(x)$, when discontuous at $x={\tt x}(c_i)$
for some $i = 1, \dots, n$, do not increase at $x$. 
\end{lemma}
\begin{proof}
%Consider ${\tt x}(c_i)$ for some $i=1,\dots,n$. 
First note that either 
$u_i \not= u_{i-1}$ and $u({\tt x}(c_{i-1})) = u_{i-1}$, or
$u_i = u_{i-1}$ and $i > 1$ and $u({\tt x}(c_{i-1})) = u_{i-1}$, or 
$u_i = u_{i-1}$ and $i = 1$ and $u({\tt x}(c_{i-1})) = p$.  

In the second case, we note that $l_{i-1} \not= l_i$ and that point $l_i$
cannot be on the shortest path from $p$ to $u_i$ in $T_{1i}$. Therefore 
$U({\tt x}(c_i))=d_{T_{1i}}(p,u_i) + p_N(u_i, {\tt x}(c_i)) = d_{T_{1{i-1}}}(p,u_{i-1}) + p_N(u_{i-1}, {\tt x}(c_i))$. The last term is the limit for $U(x)$ when $x \rightarrow {\tt x}(c_i)$ from
the left so $U(x)$ is continuous around $x = {\tt x}(c_i)$ which completes the 
proof for this case.

For the first and third cases we set $u^* =  u({\tt x}(c_{i-1}))$. Then
\begin{align*} 
U({\tt x}(c_i)) & = d_{T_{1i}}(p,u_i) + p_N(u_i, {\tt x}(c_i)) \\ 
       & \leq d_{T_{1(i-1)}}(p,u^*) + d_2(u^*,u_i) + p_N(u_i, {\tt x}(c_i)) \\ 
       & \leq d_{T_{1(i-1)}}(p,u^*) + p_N(u^*,{\tt x}(c_i)) - p_N(u_i,{\tt x}(c_i)) + p_N(u_i, {\tt x}(c_i)) \\ \nonumber
       & = d_{T_{1(i-1)}}(p,u^*) + p_N(u^*, {\tt x}(c_i)) \nonumber
\end{align*}
%\begin{align*} 
%U({\tt x}(c_i)) & = d_{T_{1i}}(p,u_i) + p_N(u_i, {\tt x}(c_i)) \nonumber \\
%       & \leq \min \begin{cases} \nonumber 
%                 d_{T_{1(i-1)}}(p,u_{i-1}) + d_2(u_{i-1},u_i) + p_N(u_i, {\tt x}(c_i)) \\ 
%                 d_{T_{1(i-1)}}(p,l_{i-1}) + d_2(l_{i-1},u_i) + p_N(u_i, {\tt x}(c_i)) \\
%                   \end{cases} \\ %\label{eq:frombelow}
%       & \leq d_{T_{1(i-1)}}(p,u_{i-1}) + p_N(u_{i-1},{\tt x}(c_i)) - p_N(u_i,{\tt x}(c_i)) + p_N(u_i, {\tt x}(c_i)) \\ \nonumber
%       & = d_{T_{1(i-1)}}(p,u_{i-1}) + p_N(u_{i-1}, {\tt x}(c_i)) \nonumber
%\end{align*}
The last term is the limit for $U(x)$ when $x \rightarrow {\tt x}(c_i)$
from the left and so the claim holds for $U(x)$.
The claim for $L(x)$ holds using equivalent arguments, and the claim for 
$P(x)$ follows from $P(x) = L(x) + U(x)$.
\end{proof}

Before we end this section, we note that we have not defined $t(x)$ at values
of $x$ when $\ell(x)$ or $u(x)$
is a vertex of $H(x)$, which is when $p_S(x)$ or $p_N(x)$, and thus
$\bar{L}(x)$ or $\bar{U}(x)$, respectively, $L(x)$ or $U(x)$, respectively, 
and $P(x)$ are not smooth
and differentiable. In what follows, for clarity of presentation we will
sometimes abuse our definition of $t(x)$ to include such points. Since there
are only a finite number of such points, they do not affect our analysis.

}

\section{Proof of (Amortization) Lemma~\ref{le:mainlemmaB}}
\label{sec:proofB}
\iftoggle{abstract}
{The proof of the lemma builds on the framework discussed in the previous
section and on a careful analysis of the growth rates of $p_N$ and $p_S$
when $T_{1n}$ contains no gentle path.  We show that in that case
the average growth rate of $U+L$ is at most $2\left(\cT\right)$. 

Our main approach is to spread (i.e., amortize) the 
``extra'' $\frac{2}{\sqrt{3}}$ of the $\frac{8}{\sqrt{3}}$ growth rate
over wider intervals of time that, as we show, include time intervals during
which the growth rate is smaller. To achieve our tight bound of 
$2\left(\cT\right)$, however, we need to do more and also include 
``cross-edges'' $(l_i,u_i)$ as illustrated in Fig.~\ref{fig:pathcost}.

\begin{figure}[h]
\center{\anotherpathcost}

\caption{Illustrated is a situation in which the growth rate of $p_S(x)$ is 
$\frac{6}{\sqrt{3}}$ between $x=x_l$ and $x=x_r$. In that case 
the growth rate of $p_N(x)$ is $\frac{2}{\sqrt{3}}$. For large enough such
intervals $[x_l,x_r]$,  
the path $l_i,u_i,u_{i+1},\dots,u_j,l_j$ is a shortcut for 
$l_i, l_{i+1}, \dots,l_j$ and therefore $L(x_r)$ is smaller than what the growth
rate of $p_S(x)$ would indicate. The stretch factor bound
we obtain is the result of a min-max optimization between the two subpaths 
from $l_i$ to $l_j$, and it is tight as we show in Section~\ref{sec:conclusion}.
% (a) Illustration of path $P$: $l_i,u_i,u_{i+1},\dots,u_j,l_j$.
% (b) $-p_S(l_i,x_l) + |l_iu_i|$ is maximized when $l_i$ is the SW corner
% of $H(x_l)$.
}
\label{fig:pathcost}
\end{figure}

%The growth rate of $U(x)+L(x)$ at a
%particular time $x$ (we will find it useful to use the time intuition and think of
%$x$ as time going from time ${\tt x}(p)$ to time ${\tt x}(q)$) 
%is one of $\frac{4}{\sqrt{3}}$, $\frac{6}{\sqrt{3}}$, or
%$\frac{8}{\sqrt{3}}$. 
%Our goal is to spread (i.e., amortize) the 
%``extra'' $\frac{2}{\sqrt{3}}$ of the $\frac{8}{\sqrt{3}}$ growth rate
%over wider intervals of time that, as we show, include time intervals during
%which the growth rate is smaller. To achieve our tight bound of
%$2\left(\cT\right)$ 

% to show that the
%average growth rate of $P(x)$ is at  most $2\left(\cT\right)$. 
}
{\begingroup
\def\thetheorem{\ref{le:mainlemmaB}}
\begin{lemma}[The Amortization Lemma]
%\label{le:mainlemmaB}
Let $T_{1n}$ be a regular linear sequence with respect to line $st$ with slope
$m_{st}$. If $0 < m_{st} < \frac{1}{\sqrt{3}}$ and if $T_{1n}$ contains no
gentle path then there is a path in $T_{1n}$ from 
the left induction vertex $p$ of $T_1$ to the right induction vertex $q$ of
$T_n$ of length at most  $\left(\cT\right) d_x(p,q)$.
\end{lemma}
\addtocounter{theorem}{-1}
\endgroup

The proof of the lemma builds on the framework developed in the previous
section and on a careful analysis of the growth rates shown in
Lemma~\ref{lem:growthrates} when $T_{1n}$ contains no gentle path. 
We show that in that case
the average growth rate of $P(x)=U(x)+L(x)$ is at most $2\left(\cT\right)$.

As Lemma~\ref{lem:growthrates} demonstrates, the growth rate of $P(x)$ at a
particular time $x$ (we find it useful to use the time intuition and think of
$x$ as time going from time ${\tt x}(p)$ to time ${\tt x}(q)$) 
is one of $\frac{4}{\sqrt{3}}$, $\frac{6}{\sqrt{3}}$, or
$\frac{8}{\sqrt{3}}$. We will refer to transitions 
$t_{wn_e},t_{s_we},t_{s_ew},t_{en_w}$---transitions for which
the growth rate $\frac{\Delta P(x)}{\Delta x}$ is $\frac{8}{\sqrt{3}}$---as
{\em bad}. Note that when $t(x)$ is not bad, the growth rate of $P(x)$ is at
most $\frac{6}{\sqrt{3}}$, well under the desired growth rate of $2\left(\cT\right)$. 
Our goal is to, when possible, spread (i.e., amortize) the 
``extra'' $\frac{2}{\sqrt{3}}$ of the growth rate
of $P(x)$ when $t(x)$ is bad over wider intervals of time. % to show that the
%average growth rate of $P(x)$ is at  most $2\left(\cT\right)$. 
To do this we define intervals of time
during which a particular bad transition takes place %.
%monitor what intervals
%of time we are amortizing a particular extra growth over. 
%To help us with this
%we introduce the following definitions, 
(see also Fig.~\ref{fig:badintervals}):
% To aid discussion we will refer to $t_8$ and $t_9$ upper bad transitions and $t_{wn_w}$ and $t_{ww}$ as lower bad transitions.  Similarly, we will refer to $t_8$ and $t_{ww}$ as left bad transitions and $t_9$ and $t_{wn_w}$ as right bad transitions.

%From now on we will no longer refer to the original centers of Hexagons, only positions between $0$ and $p_k$.  We will borrow standard notation and take $[x_i,x_j]$ to be the interval from $x_i$ to $x_j$.  Let $\Delta P(x_i,x_j)= P(x_i)-P(x_j)$.  We must control the bad transitions by amortizing their cost over better transitions.  To this end we define bad sections to group bad transitions to be more easily handled.  To keep things simple we will think of the transition at $x=0$ to be of type $t_7$ and the transition at $x=x(p_k)$ to be $t_{wn_w}$ each being instantaneous and causing no bad growth. 

\begin{definition}
Given $x_l, x_r$ such that ${\tt x}(p) \leq x_l < x_r \leq {\tt x}(q)$, the interval
$[x_l,x_r]$ is a 
\begin{itemize}
\item $t_{wn_e}$-interval if $t(x_l)= t_{wn_e}$ and $t(x) \not= t_{\ast w}$ for all
$x_l < x < x_r$ and a strict $t_{wn_e}$-interval if, in addition,
$t(x) \not= t_{en_w}, t_{\ast e}$ when $x_l < x < x_r$.
\item $t_{s_ew}$-interval if $t(x_l)= t_{s_ew}$ and $t(x) \not= t_{w \ast}$ for all
$x_l < x < x_r$ and a strict $t_{s_ew}$-interval if, in addition,
$t(x) \not= t_{s_we}, t_{e \ast}$ when $x_l < x < x_r$.
\item $t_{s_we}$-interval if $t(x_r)= t_{s_we}$ and $t(x) \not= t_{e \ast}$ for all
$x_l < x < x_r$ and a strict $t_{s_we}$-interval if, in addition,
$t(x) \not= t_{s_ew}, t_{w \ast}$ when $x_l < x < x_r$
\item $t_{en_w}$-interval if $t(x_r)= t_{en_w}$ and $t(x) \not= t_{\ast e}$ for all
$x_l < x < x_r$ and a strict $t_{en_w}$-interval if, in addition,
$t(x) \not= t_{wn_e}, t_{\ast w}$ when $x_l < x < x_r$
\end{itemize}
\end{definition}

\begin{figure}[!b]
\begin{center}
\badintervals
\end{center}

\caption{Illustrations of (left) $t_{wn_e}$- and $t_{s_ew}$-intervals and (right)
$t_{s_we}$- and $t_{en_w}$-intervals. In a $t_{wn_e}$-interval $[x_l,x_r]$, for
example, $t(x) \not= t_{\ast w}$; if the interval is strict then 
$t(x) \not= t_{en_w}, t_{\ast e}$ as well.}
\label{fig:badintervals}
\end{figure}

Note that if $t(x) = t_{ij}$ is bad for some $x$ in $[{\tt x}(p), {\tt x}(q)]$
then a strict $t_{ij}$-interval contains $x$.
We refer to $t_{wn_e}$- and $t_{s_ew}$-intervals as {\em left intervals} and to
$t_{s_we}$- and $t_{en_w}$-intervals as {\em right intervals}. Note that left
intervals do not intersect each other and neither do the right intervals.
A maximal strict left interval can intersect at most one maximal strict right
interval and vice versa and when that is the case the left interval is to the
left of the right interval.
We will take advantage of the limited interaction between maximal
strict intervals 
to amortize the bad growth taking place within a strict interval over
intervals of time
that do not overlap or, if they do, overlap in a restricted way.

We introduce notation that will help us keep track of the relative horizontal
positions of $u(x)$ and $\ell(x)$: the {\em forward} abcsissa 
$f(x) = \max\{{\tt x}(\ell(x)), {\tt x}(u(x))\}$ and the {\em back} abcsissa
$b(x) = \min\{{\tt x}(\ell(x)), {\tt x}(u(x))\}$. The following facts will
be used heavily without reference throughout this section:
\begin{proposition}
\label{lem:bandf}
Given the assumptions of Lemma~\ref{le:mainlemmaB}, for $i = 1,\dots,n$:
\renewcommand{\labelenumi}{(\alph{enumi})}
\begin{enumerate}
%\item If $l_{i-1}=l_i$ is the base
%vertex of $T_i$ then $u_{i-1}, u_i$ lie on sides $w, n_w$ or $w, n_e$ or
%$n_w, n_e$ or $n_w,e$ or $n_e, e$ of $H_i$; if $u_{i-1}=u_i$ is the base
%vertex of $T_i$ then $l_{i-1}, l_i$ lie on sides $w, s_w$ or $w, s_e$ or
%$s_w, s_e$ or $s_w,e$ or $s_e, e$ of $H_i$. 
\item ${\tt x}(l_{i-1}) \leq {\tt x}(l_i)$ and ${\tt x}(u_{i-1}) \leq {\tt x}(u_i)$.
\end{enumerate}
and for ${\tt x}(p) \leq x_l < x_r \leq {\tt x}(q)$:
\begin{enumerate}
\addtocounter{enumi}{1}
\item $u(x_l) \leq u(x_r)$ and $\ell(x_l) \leq \ell(x_r)$.
\item $w(x_l) \leq w(x_r)$ and $e(x_l) \leq e(x_r)$.
\item $b(x_l) \leq b(x_r)$ and $f(x_l) \leq f(x_r)$.
\end{enumerate}
\end{proposition}

\begin{proof}
Note that if $l_{i-1}=l_i$ is the base vertex of $T_i$ then $u_{i-1}, u_i$ lie 
on sides $w, n_w$ or $w, n_e$ or $n_w, n_e$ or $n_w,e$ or $n_e, e$ of $H_i$; 
if $u_{i-1}=u_i$ is the base vertex of $T_i$ then $l_{i-1}, l_i$ lie on sides
$w, s_w$ or $w, s_e$ or $s_w, s_e$ or $s_w,e$ or $s_e, e$ of $H_i$.
% Part {\it (a)} follows from the definition of a linear sequence of triangles
%and Lemma~\ref{lem:props} and {\it (b)} follows from {\it (a)}. 
Part {\it (b)} follows from this and parts {\it (c)}
and {\it (e)} follow from  {\it (b)}. Part {\it (d)} follows from 
Lemma~\ref{lem:growthrates}.
\end{proof}

%
% Finally, we define $\theta_f(x) = f(x) - x$ and 
% $\theta_b(x) = x - b(x)$. 

Finally, we define key functions $\delta_f(x) = e(x) - f(x)$ and 
$\delta_b(x) = b(x) - w(x)$ that will be used when $t(x)$ is bad.
Note that when $t(x)= t_{wn_e},t_{s_ew}$ we have 
$\delta_f(x) \leq {\tt r}(x)$ and when $t(x)=t_{s_we},t_{en_w}$ 
then $\delta_b(x) \leq {\tt r}(x)$. 

We now prove several technical lemmas that we need. Unlike our bounds for
${\bar U}(x)$ and ${\bar L}(x)$ that only made use of edges of the upper and
lower path, we will sometimes make use of {\em cross-edges} $(u_i,l_i)$ in 
computing tighter bounds for $U(x)$ or $L(x)$, and therefore $P(x)$. We will,
in particular, use edges  $(\ell(x),u(x))$ for values of $x$ when $t(x)$ is bad.
To motivate the next lemma which bounds the cost of using such edges, consider,
for example, a  $t_{wn_e}$-interval starting at $x_l$ and the edge 
$(\ell(x_l), u(x_l)) = (l_{i-1}=l_i, u_i)$ between the $w$ and $n_e$ sides of 
$H(x)$. Then
\begin{align*} 
U(x_l) & = d_{T_{1i}}(p,u_i) + p_N(u_i, x_l) \nonumber \\
       & \leq \min \begin{cases}
                 d_{T_{1(i-1)}}(p,u_{i-1}) + d_2(u_{i-1},u_i) + p_N(u_i, x_l) \\ 
                 d_{T_{1(i-1)}}(p,l_{i-1}) + d_2(l_{i-1},u_i) + p_N(u_i, x_l) \\
                   \end{cases} \\ \nonumber
       & \leq L(x_l) - p_S(l_{i-1},x_l) + d_2(l_{i-1},u_i) + p_N(u_i, x_l)
\end{align*}

The term $- p_S(l_{i-1},x_l) + + d_2(l_{i-1},u_i) + p_N(u_i, x_l)$ can be seen
as the cost of using edge $(l_{i-1},u_i)$. The following lemma provides a bound
on this cost in terms of $\delta_f(x)$ as well as the cost of using other 
{\em bad cross-edges} (see also Fig.~\ref{fig:gentlepath}-(a)):
\begin{lemma}
\label{lem:switchpath}
Given the assumptions of Lemma~\ref{le:mainlemmaB}, let $t(x)$ be bad for some
$x \in [{\tt x}(p), {\tt x}(q)]$.% and let $l_i = \ell(x)$ and $u_i = u(x)$. 
%, l_j = \ell(x_r),$ and $u_j = u(x_r)$ for some $i \leq j$.
%If $[x_l,x_r]$ is
%
\begin{itemize}
\item If $t(x) = t_{wn_e}$ then
$-p_S(x) + d_2(\ell(x),u(x)) + p_N(x) \leq \left(2 - \frac{2}{\sqrt{3}}\right)\delta_f(x)$
\item If $t(x) = t_{s_ew}$ then
$p_S(x) + d_2(\ell(x),u(x)) - p_N(x) \leq \left(2 - \frac{2}{\sqrt{3}}\right)\delta_f(x)$
\item If $t(x) = t_{s_we}$ then
$p_S(x) + d_2(\ell(x),u(x)) - p_N(x) \leq \left(2 - \frac{2}{\sqrt{3}}\right)\delta_b(x)$
\item If $t(x) = t_{en_w}$ then
$-p_S(x) + d_2(\ell(x),u(x)) + p_N(x) \leq \left(2 - \frac{2}{\sqrt{3}}\right)\delta_b(x)$
\end{itemize}
\end{lemma}
\begin{proof}
Assuming that $t(x) = t_{wn_e}$ (the other cases follow
by symmetry),
%
% Also, we note that $-p_S(l_i,x_l) + d_2(l_i, u_i)$ is maximized
% when $l_i$ is the SW corner of $H(x_l)$, as illustrated in Fig.~\ref{fig:gentlepath}-(b).
% %	
% Assuming that is the case, we let $\theta = \angle u_i l_i N(x_l)$, and
%
we let $w$ to be the SW corner of $H(x)$ and let $\theta = \angle u_i w N(x)$
as illustrated in Fig.~\ref{fig:gentlepath}-(a). We then
obtain
$p_S(x)=p_S(\ell(x),x) \geq \frac{2}{\sqrt{3}}{\tt r}(x)$,
$d_2(\ell(x), u(x)) \leq \frac{2}{\cos(\theta)}{\tt r}(x)$, and
$p_N(x)=p_N(u(x),x) = -2\tan(\theta){\tt r}(x)$.
Noting that
$0 \leq \theta \leq \frac{\pi}{6}$ and
	using
$\frac{1}{\cos(\theta)} \leq 1 + (2-\sqrt{3})\tan(\theta)$ when
$\theta \in [0,\frac{\pi}{6}]$, 
we have
\begin{align*}
-p_S(x) + d_2(\ell(x),u(x)) + p_N(x) & \leq \left(-\frac{2}{\sqrt{3}} + \frac{2}{\cos(\theta)} -2\tan(\theta)\right){\tt r}(x)\\
& \leq \left(-\frac{2}{\sqrt{3}} + 2 + 2(2-\sqrt{3})\tan(\theta) -2\tan(\theta)\right){\tt r}(x) \\
& = \left(2-\frac{2}{\sqrt{3}}\right)\left(1-\sqrt{3}\tan(\theta)\right){\tt r}(x)
\end{align*}
Since ${\tt r}(x) - \delta_f(x) = \sqrt{3}\tan(\theta){\tt r}(x)$, the proof
follows from the last inequality.
\end{proof}

\iftoggle{abstract}
{\begin{figure}[!b]}
{\begin{figure}}
\gentlepath

\hspace{1.9cm}(a) \hspace{2.6in} (b)

\caption{(a) Illustration and proof of Lemma~\ref{lem:switchpath} in the case
of $t(x) = t_{wn_e}$. (b) 
Lemma~\ref{lem:gentlepath} in the case of a $t_{wn_e}$-interval 
$[x_l,x_r]$. The dotted curve from $N(x_l)$ to $N(x_r)$ consists of a
sequence of piecewise linear curves each representing the growth rate
$\Delta p_N(x)$ over an interval $\Delta x$ during which $t(x)$ is a fixed
transition. A few examples of such piecewise linear curves are shown in bold in
Fig.~\ref{fig:growth}. The dotted curve has total length 
$\bar{U}(x_r) - \bar{U}(x_l)$ and its slope is $-\frac{1}{\sqrt{3}}$ exactly
when $t(x)$ is either $t_{\ast n_e}$ (which includes bad transition $t_{wn_e}$) 
or $t_{\ast e}$. Lemma~\ref{lem:gentlepath} states that the time (shown with
thick blue segments) spent in those transitions within interval $[x_l, x_r]$
is bounded by ${\tt r}(x_l) + (\sqrt{3} - 1)\delta_f(x_l)$, if
$T_{1n}$ contains no gentle path.}
\label{fig:gentlepath}
\end{figure}

The following important lemma bounds the amount of time
certain transitions, and especially bad transition $t_{ij}$, can take place 
within a
$t_{ij}$-interval $[x_l,x_r]$ if $T_{1n}$ contains no gentle path. By {\it
``amount of time a transition $t_{ij}$ takes place in an interval $[x_l,x_r]$''}
we mean the Lebesgue measure of the set 
$\{x \in [x_l,x_r]: t(x) = t_{ij}\}$
which is the union of a finite number of disjoint open intervals inside
$[x_l,x_r]$. The lemma we state below captures 
the following insight. When $t(x)$ is a bad transition
and the growth rate of $P(x)$ is $\frac{8}{\sqrt{3}}$ then
by Lemma~\ref{lem:growthrates} one of $U(x)$ or $L(x)$ has growth rate 
$\frac{2}{\sqrt{3}}$. More specifically, if, say, $t(x) = t_{wn_e}$ then 
$\frac{\Delta U(x)}{\Delta x} = \frac{2}{\sqrt{3}}$, meaning that the
upper path fragment $u(x_l)=u_i, u_{i+1}, \dots, u(x_r)$ has a
``very gentle'' average slope. Therefore if $t(x) = t_{wn_e}$ for a sufficiently
long enough time within $t_{wn_e}$-interval $[x_l,x_r]$ then the path 
$\ell(x_l), u(x_l), \dots, u(x_r)$ would have to be a gentle path.
We actually need the contrapositive of this insight (see also Fig.~\ref{fig:gentlepath}-(b)):

\begin{lemma}
\label{lem:gentlepath}
Given the assumptions of Lemma~\ref{le:mainlemmaB},
let ${\tt x}(p) \leq x_l \leq x_r \leq {\tt x}(q)$,
and let $z_{ij}$ be the amount of time within interval $[x_l,x_r]$ spent
in transition $t_{ij}$. If $[x_l,x_r]$ is
%
% \begin{itemize}
% \item a $t_{wn_e}$-interval then
% $z_{\ast n_e} + z_{\ast e} \leq \theta_f(x_l) + \sqrt{3}({\tt r}(x_l) - \theta_f(x_l))$
% \item a $t_{s_ew}$-interval then
% $z_{s_e \ast} + z_{e \ast} \leq \theta_f(x_l) + \sqrt{3}({\tt r}(x_l) - \theta_f(x_l))$
% \item a $t_{s_we}$-interval then
% $z_{w \ast} + z_{s_w \ast} \leq \theta_b(x_r) + \sqrt{3}({\tt r}(x_r) - \theta_b(x_r))$
% \item a $t_{en_w}$-interval then
% $z_{\ast n_w} + z_{\ast w} \leq \theta_b(x_r) + \sqrt{3}({\tt r}(x_r) - \theta_b(x_r))$
% \end{itemize}
%
\begin{itemize}
\item a $t_{wn_e}$-interval then
$z_{\ast n_e} + z_{\ast e} \leq {\tt r}(x_l) + (\sqrt{3} - 1)\delta_f(x_l)$
\item a $t_{s_ew}$-interval then
$z_{s_e \ast} + z_{e \ast} \leq {\tt r}(x_l) + (\sqrt{3} - 1)\delta_f(x_l)$
\item a $t_{s_we}$-interval then
$z_{w \ast} + z_{s_w \ast} \leq {\tt r}(x_r) + (\sqrt{3} - 1)\delta_b(x_r)$
\item a $t_{en_w}$-interval then
$z_{\ast n_w} + z_{\ast w} \leq {\tt r}(x_r) + (\sqrt{3} - 1)\delta_b(x_r)$
\end{itemize}
where wildcard notations $z_{\ast j}$ and $z_{i \ast}$ refer to the amount
of time spent in transitions $t_{\ast j}$ and $t_{i \ast}$, respectively.
\end{lemma}

\begin{proof}
We assume that $[x_l,x_r]$ is a $t_{wn_e}$-interval (the other cases follow
by symmetry). We also assume that $t(x_r) \not= t_{\ast n_w}$ implying that
$u(x_r)$ lies either on the N vertex,
side $n_e$, the NE vertex, or side $e$ of $H(x_r)$ (or, with some abuse of
the definition of $t(x)$, $t(x_r) = t_{\ast n_e},t_{\ast e})$.  If that is not the
case, we simply consider the
shorter interval $[x_l,x_q]$ where $x_q$ is such that $u(x_{q})$ is the N
vertex of $H(x_q)$ and $t(x) = t_{\ast n_w}$ for $x_q < x < x_r$.

Consider the path in $T_{1n}$ from $\ell(x_l)$ to $u(x_r)$ that starts with
the edge $(\ell(x_l), u(x_l)) = (l_i,u_i)$ and then visits the vertices
$u_i, u_{i+1}, \dots, u_{i+s} = u(x_r)$ in order
(refer to Fig.~\ref{fig:gentlepath}-(b)). 
Since $T_{1n}$ contains no gentle paths, the length of this path is at least
$\sqrt{3}d_x(l_i,u_{i+s}) - ({\tt y}(u_{i+s})-{\tt y}(l_i))$ and so:
\begin{align}
\label{eq:nogentle}
d_2(l_i,u_i)+ \sum_{t=i}^{i+s-1} d_2(u_t,u_{t+1}) + ({\tt y}(u_{i+s})-{\tt y}(l_i)) & \geq \sqrt{3}d_x(l_i,u_{i+s}) 
\end{align}
%
%
% The length of edge $(l_i, u_i)$ is at most the length of
% $[wu_i]$ where $w$ is the SW vertex of $H(x_l)$, and the length of
% $[wu_i]$ is at most $\frac{2{\tt r}(x_l)}{\cos(\Theta)}$, where $\Theta$ is the
% angle $\angle N(x_l) w u_i$. So:
% \begin{equation}
% \label{eq:crossedge}
% d_2(l_i,u_i) \leq \frac{2{\tt r}(x_l)}{\cos(\Theta)}  
% \end{equation}
%  Note that $0 \leq \Theta \leq \frac{\pi}{6}$. Note also that $d_y(l_i, N(x_l)) \leq \sqrt{3} {\tt r}(x_l)$ and so, by
% Lemma~\ref{lem:growthrates},
% \begin{align}
% \label{eq:crossvertical}
% {\tt y}(u_{i+s})-{\tt y}(l_i) & = ({\tt y}(N(x_l)) - {\tt y}(l_i)) + ({\tt y}(N(x_r)) - {\tt y}(N(x_l))) %\notag \\
%                    %& \quad 
% - ({\tt y}(N(x_r) - {\tt y}(u_{i+s}))  \notag \\
%                    & \leq \sqrt{3} {\tt r}(x_l) + 
%                           \left(\frac{1}{\sqrt{3}}z_{\ast n_w} - \frac{1}{\sqrt{3}}z_{\ast n_e} -
%                         \frac{3}{\sqrt{3}}z_{s_ee} - \frac{5}{\sqrt{3}}z_{s_we}\right) \notag \\ 
%                    & \quad     - \left(\frac{1}{\sqrt{3}}({\tt x}(u_{i+s})-x_r) + \max\{0,{\tt y}(v) - {\tt y}(u_{i+s})\}\right)
% \end{align}
%%%%%%
Note that ${\tt y}(N(x_l)) - {\tt y}(l_i) + p_S(l_i,x_l) = \frac{5}{\sqrt{3}} {\tt r}(x_l)$ and so, by
Lemma~\ref{lem:growthrates},
\begin{align}
\label{eq:crossvertical}
{\tt y}(u_{i+s})-{\tt y}(l_i) & = ({\tt y}(N(x_l)) - {\tt y}(l_i)) + ({\tt y}(N(x_r)) - {\tt y}(N(x_l))) %\notag \\
                   %& \quad 
- ({\tt y}(N(x_r) - {\tt y}(u_{i+s}))  \notag \\
		   & \leq \frac{5}{\sqrt{3}} {\tt r}(x_l) - p_S(l_i,x_l) + 
                          \left(\frac{1}{\sqrt{3}}z_{\ast n_w} - \frac{1}{\sqrt{3}}z_{\ast n_e} -
                        \frac{3}{\sqrt{3}}z_{s_ee} - \frac{5}{\sqrt{3}}z_{s_we}\right) \notag \\ 
                   & \quad     - \left(\frac{1}{\sqrt{3}}({\tt x}(u_{i+s})-x_r) + \max\{0,{\tt y}(v) - {\tt y}(u_{i+s})\}\right)
\end{align}
%%%%%%
where $v$ is the NE vertex of $H(x_r)$. The $\max$ term in (\ref{eq:crossvertical}) is 0 or
positive depending on whether $u_{i+s}$ is on
side $n_e$ or $e$, respectively, of $H(x_r)$. 
Each edge $(u_t,u_{t+1})$ on the path $u_i, u_{i+1}, \dots, u_{i+s}$
can be bounded by $p_N(u_t, {\tt x}(c_{t+1})) - p_N(u_{t+1},{\tt x}(c_{t+1}))$ and so:
\begin{align}
\label{eq:upperpath}
\sum_{t=i}^{i+s-1} d_2(u_t,u_{t+1}) & \leq \sum_{t=i}^{i+s-1} (p_N(u_t, {\tt x}(c_{t+1})) - p_N(u_{t+1},{\tt x}(c_{t+1}))) \notag \\
	& = (\bar{U}(x_r) - p_N(u_{i+s},x_r)) - (\bar{U}(x_l) - p_N(u_i,x_l)) \notag \\
	& = p_N(u_i,x_l) + (\bar{U}(x_r)- \bar{U}(x_l)) - p_N(u_{i+s},x_r)  \notag \\
	& = p_N(u_i,x_l) + \left(\frac{2}{\sqrt{3}}(z_{\ast n_w} + z_{\ast n_e}) + \frac{4}{\sqrt{3}}z_{s_ee} + \frac{6}{\sqrt{3}}z_{s_we}\right) \notag \\
   & \quad  + \left(\frac{2}{\sqrt{3}}({\tt x}(u_{i+s})-x_r) + \max\{0,{\tt y}(v) - {\tt y}(u_{i+s})\}\right)
\end{align}
Substituting the left-hand side of
(\ref{eq:nogentle}) with  (\ref{eq:crossvertical}) and 
(\ref{eq:upperpath}) gives us 
\begin{multline*}
d_2(l_i,u_i) + \frac{5}{\sqrt{3}} {\tt r}(x_l) - p_S(l_i,x_l) + p_N(u_i,x_l) +
\frac{3}{\sqrt{3}}z_{\ast n_w}+\frac{1}{\sqrt{3}}(z_{\ast n_e}+z_{\ast e}+{\tt x}(u_{i+s})-x_r)\\
\geq \frac{3}{\sqrt{3}}({\tt x}(u_{i+s})-{\tt x}(l_i)) 
= \frac{3}{\sqrt{3}}({\tt x}(u_{i+s})- x_l) + \frac{3}{\sqrt{3}}{\tt r}(x_l) 
\end{multline*}
where we use ${\tt x}(l_i) = x_l - {\tt r}(x_l)$. Using
$z_{\ast n_w} + z_{\ast e} + z_{\ast n_e} = x_r - x_l$ and ${\tt x}(u_{i+s}) \geq x_r$, and
applying Lemma~\ref{lem:switchpath},
we get 
\[
\frac{2}{\sqrt{3}} {\tt r}(x_l) + \left(2 - \frac{2}{\sqrt{3}}\right)\delta_f(x_l) \geq
\frac{2}{\sqrt{3}}(z_{\ast n_e}+z_{\ast e})
\]
which is equivalent to the lemma statement.
%We complete our proof by multiplying both sides by $\frac{\sqrt{3}}{2}$.
%
%To prove our
%claim, it therefore suffices to show the following inequality:
%\begin{align*}
%	0 & \leq \left(2 - \frac{\sqrt{3}}{\cos(\Theta)}\right){\tt r}(x_l) - (2 - \sqrt{3}) \delta_f(x_l) \\
%	& = \frac{\cos(\Theta) - 1}{\cos(\Theta)}\sqrt{3}{\tt r}(x_l) + (2 - \sqrt{3}) ({\tt r}(x_l) - \delta_f(x_l) ) \\
%	& = \frac{\cos(\Theta) - 1}{\cos(\Theta)}\frac{\sqrt{3}}{2}\cos(\Theta)|wu_i| + (2 - \sqrt{3}) \frac{\sqrt{3}}{2}\sin(\Theta)|wu_i|
%\end{align*}
%where the last equality follows from the right triangle $w N(x_l) u_i$ from
%Figure~\ref{fig:gentlepath}-(b).
%%
%Thus, it suffices to show $1 - \cos(\Theta) \leq (2 - \sqrt{3})\sin(\Theta)$
%for all $\Theta$ for $0 \leq \Theta \leq \frac{\pi}{6}$.  Since both sides are
%nonnegative, squaring both sides and using $\sin^2(\Theta) = 1 -\cos^2(\Theta) = (1 - \cos(\Theta)) (1 + \cos(\Theta))$
%this inequality boils down to $1 - \cos(\Theta) \leq (2 - \sqrt{3})^2(1 + \cos(\Theta))$.
%We know that the last inequality and thus our claim holds since $\cos(\Theta) \geq \frac{\sqrt{3}}{2}$
%for all $\Theta$ for $0 \leq \Theta \leq \frac{\pi}{6}$.
\end{proof}

The following two lemmas build on Lemmma~\ref{lem:gentlepath}. They show
different ways that the ``extra'' $\frac{2}{\sqrt{3}}$ growth (i.e., the
growth above $\frac{6}{\sqrt{3}}$) of a bad transition $t_{ij}$ within a
$t_{ij}$-interval can be spread (amortized) over a wider time interval (refer
also to Fig.~\ref{fig:linearworstcase}).

\begin{lemma}  
\label{lem:linearworstcase}
Given the assumptions of Lemma~\ref{lem:gentlepath}, if $[x_l,x_r]$ is a 
$t_{wn_e}$-, $t_{s_ew}$-, $t_{en_w}$-, or $t_{s_we}$-interval, if
$t(x) \not= t_{en_w}$, $t_{s_we}$, $t_{wn_e}$, or $t_{s_ew}$, respectively, when
$x_l \leq x \leq x_r$, and if 
$t(x_r)=t_{\ast e}$, $t_{e \ast}$, $t_{\ast w}$, or $t_{w \ast}$, respectively, then
%
% \begin{align}
% \label{eq:LWC}
% \frac{2}{\sqrt{3}}z_{wn_e}, \frac{2}{\sqrt{3}}z_{s_ew} & \leq (\frac{2}{\sqrt{3}}-1)(e(x_r)-w(x_l)-2\theta_f(x_l)) \\ \label{eq:LWC2}
% & \leq (\frac{4}{\sqrt{3}}-2)(x_r-w(x_l)-\theta_f(x_l))                      
% \end{align}
% %\begin{align*}
% \[\frac{2}{\sqrt{3}}z_{s_we}, \frac{2}{\sqrt{3}}z_{en_w} %& 
% \leq (\frac{2}{\sqrt{3}}-1)(e(x_r)-w(x_l)-2\theta_b(x_r)) 
% %\\ & 
% \leq (\frac{4}{\sqrt{3}}-2)(e(x_r)-x_l-\theta_b(x_r))                      
% %\end{align*}
% \]
%
\begin{align}
\label{eq:LWC}
\frac{2}{\sqrt{3}}z_{wn_e}, \frac{2}{\sqrt{3}}z_{s_ew} & \leq \left(\frac{2}{\sqrt{3}}-1\right)(e(x_r)-e(x_l)+2\delta_f(x_l)) \\
\label{eq:LWC2}
& \leq \left(\frac{4}{\sqrt{3}}-2\right)(x_r-x_l+\delta_f(x_l)) \\
%\end{align}
%\begin{align*}
%\[
\frac{2}{\sqrt{3}}z_{s_we}, \frac{2}{\sqrt{3}}z_{en_w} & 
\leq \left(\frac{2}{\sqrt{3}}-1\right)(w(x_r)-w(x_l)+2\delta_b(x_r)) \nonumber \\ 
&  \leq \left(\frac{4}{\sqrt{3}}-2\right)(x_r-x_l+\delta_b(x_r)), \nonumber 
\end{align}
respectively.
\end{lemma}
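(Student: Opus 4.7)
The plan is to prove the first inequality; the second then follows from the bound ${\tt r}'(x)\le 1$ of Lemma~\ref{lem:growthrates}, which gives ${\tt r}(x_r)-{\tt r}(x_l)\le x_r-x_l$ and hence $e(x_r)-w(x_l)\le 2(x_r-w(x_l))$, so that the factor $\bigl(\tfrac{2}{\sqrt{3}}-1\bigr)$ upgrades to $\bigl(\tfrac{4}{\sqrt{3}}-2\bigr)$ against $x_r-w(x_l)-\theta_f(x_l)$. By the rotational and reflective symmetries of the hexagon, and the symmetric way the four bad-transition intervals are defined, it suffices to treat the $t_{15}$-interval and bound $\tfrac{2}{\sqrt{3}}z_{15}$.

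First I would apply Lemma~\ref{lem:gentlepath} together with the trivial inclusion $z_{15}\le z_{*5}$, obtaining
$$z_{15}\;\le\;z_{*5}+z_{*4}\;\le\;\sqrt{3}\,{\tt r}(x_l)-(\sqrt{3}-1)\theta_f(x_l).$$
Second I would expand the target right-hand side via $e(x_l)-w(x_l)=2{\tt r}(x_l)$,
$$e(x_r)-w(x_l)-2\theta_f(x_l)\;=\;2\bigl({\tt r}(x_l)-\theta_f(x_l)\bigr)+\bigl(e(x_r)-e(x_l)\bigr),$$
and estimate $e(x_r)-e(x_l)$ via Lemma~\ref{lem:growthrates}. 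The defining conditions of a $t_{15}$-interval together with the hypothesis $t(x)\ne t_{40}$ exclude transitions $t_{21}$, $t_{31}$, $t_{40}$ from $(x_l,x_r)$; on every remaining transition $e'_{ij}\ge 0$, with $e'(t_{15})=2$, so $e(x_r)-e(x_l)\ge 2z_{15}$. Note also $\theta_f(x_l)\le {\tt r}(x_l)$ because $u(x_l)$ lies on the NE side of $H(x_l)$.

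Third I would invoke the endpoint condition $t(x_r)=t_{*4}$: since $u(x_r)$ sits on the E side of $H(x_r)$ at $x$-coordinate $e(x_r)$, while $u(x_l)$ sits on the NE side of $H(x_l)$ at $x$-coordinate $x_l+\theta_f(x_l)$, tracking the horizontal displacement of the path along the $u_i$'s (using the $d_N$-arc-length bounds in the proof of Lemma~\ref{lem:growthrates}) sharpens the lower bound on $e(x_r)-e(x_l)$ by an amount proportional to ${\tt r}(x_l)-\theta_f(x_l)$. Assembling this with the Lemma~\ref{lem:gentlepath} bound and simplifying (the key identity is $(2-\sqrt{3})(\sqrt{3}+1)=\sqrt{3}-1$, equivalently $\tfrac{2-\sqrt{3}}{\sqrt{3}-1}=\tfrac{\sqrt{3}-1}{2}$) yields the desired inequality.

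The main obstacle is the third step. Lemma~\ref{lem:gentlepath} alone only tells us $z_{15}\le\sqrt{3}\,{\tt r}(x_l)$ up to a $\theta_f$ correction, which is substantially weaker than the target bound of essentially $\tfrac{\sqrt{3}-1}{2}$ times $\bigl({\tt r}(x_l)-\theta_f(x_l)\bigr)$ plus an $e$-growth correction; the gap has to be closed by eastward extension of $H(x)$. The no-$t_{40}$ hypothesis is critical here, since $t_{40}$ has $e'=0$ but $w'=2$ and would let $H(x)$ grow westward without providing any corresponding eastward growth; ruling it out is exactly what guarantees that the endpoint condition $t(x_r)=t_{*4}$ converts into enough extra $e$-growth for the constants to match.
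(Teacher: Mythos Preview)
Your first two steps are fine but, as you yourself note, far too weak: after Step~2 the target inequality reduces to showing
\[
z_{15}\;\le\;\frac{\sqrt{3}-1}{2}\bigl({\tt r}(x_l)-\theta_f(x_l)\bigr)
\]
(plus an $e$-growth correction), whereas Lemma~\ref{lem:gentlepath} applied to $[x_l,x_r]$ only gives $z_{15}\le\sqrt{3}\,{\tt r}(x_l)-(\sqrt{3}-1)\theta_f(x_l)$. The ratio of constants is roughly $0.37$ versus $1.73$, so the gap is large.

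Your Step~3 is where the argument breaks down. The claim that ``tracking the horizontal displacement of the $u_i$'s\dots sharpens the lower bound on $e(x_r)-e(x_l)$ by an amount proportional to ${\tt r}(x_l)-\theta_f(x_l)$'' is not substantiated, and in fact the natural bookkeeping goes the wrong way: since $x(u(x_r))=e(x_r)$ and $x(u(x_l))=e(x_l)-({\tt r}(x_l)-\theta_f(x_l))$, one gets $e(x_r)-e(x_l)=\bigl(x(u(x_r))-x(u(x_l))\bigr)-\bigl({\tt r}(x_l)-\theta_f(x_l)\bigr)$, a \emph{subtraction}. There is no lower bound on the horizontal travel of the $u_i$'s that would compensate. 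Your remark that excluding $t_{40}$ matters ``because $e'(t_{40})=0$'' is also off target: several allowed transitions ($t_{24},t_{34},t_{45}$) already have $e'=0$, so that cannot be the mechanism.

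The paper's proof closes the gap with two ingredients you do not use. First, it writes down the radius balance coming from the ${\tt r}'$ column of Lemma~\ref{lem:growthrates},
\[
{\tt r}(x_l)+z_{15}+z_{10}+\tfrac12 z_{20}\;=\;{\tt r}(x_r)+\tfrac12 z_{35}+z_{45}+z_{24}+z_{34},
\]
which lets one trade ${\tt r}(x_l)$ for ${\tt r}(x_r)$ and transition times. Second --- and this is where the hypothesis $t(x_r)=t_{\ast 4}$ is actually used --- it applies Lemma~\ref{lem:gentlepath} not to $[x_l,x_r]$ but to the \emph{extended} interval $[x_l,e(x_r)]$: since $u(x_r)$ lies on side~4, as $x$ runs from $x_r$ to $e(x_r)$ the point $u(x)$ stays on side~4 or~5, contributing an extra ${\tt r}(x_r)$ worth of $z_{\ast 4}+z_{\ast 5}$. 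This yields
\[
{\tt r}(x_r)+z_{\ast 5}+z_{\ast 4}+(\sqrt{3}-1)\theta_f(x_l)\;\le\;\sqrt{3}\,{\tt r}(x_l).
\]
Substituting the radius balance into this and simplifying gives $(\sqrt{3}+1)z_{15}\le(\sqrt{3}-1)\bigl({\tt r}(x_r)+z_{45}+z_{24}+z_{34}-\theta_f(x_l)\bigr)$; the radius balance again shows the bracket is at most $\tfrac12(e(x_r)-w(x_l))$, and the claimed bound follows. Without the extended-interval trick and the radius equation the constants simply do not close.
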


\begin{figure}[!b]
\linearworstcase

\caption{Illustration of Lemmas~\ref{lem:linearworstcase}
and~\ref{lem:softrecovery} for the case of a
$t_{wn_e}$-interval $[x_l,x_r]$. (a) Lemma~\ref{lem:linearworstcase}: the
$\frac{2}{\sqrt{3}}$ extra cost of the bad transition $t_{wn_e}$ taking place
within the interval $[x_l,x_r]$ can be amortized in one of two ways:
either as cost $\frac{1}{2}(\frac{4}{\sqrt{3}}-2)$ spread over the interval
$[e(x_l)-2\delta_f(x_l),e(x_r)]$ (shown in blue) or as cost
$\frac{4}{\sqrt{3}}-2$ spread over the interval $[x_l-\delta_f(x_l),x_r]$
(shown in red).
(b) 
Lemma~\ref{lem:softrecovery}: the $\frac{2}{\sqrt{3}}$ extra cost of the
bad transition $t_{wn_e}$ within the interval $[x_l,x_r]$ can be amortized as
cost $\frac{1}{2}(\frac{4}{\sqrt{3}}-2)$ spread over the interval
$[e(x_l)-2\delta_f(x_l),w(x_r)]$ (shown in blue) {\em plus} cost
$\frac{1}{\sqrt{3}}$ spread over intervals of time $z_{s_wn_w}$ and $z_{s_en_w}$
(contained within the dashed red interval).}
\label{fig:linearworstcase}
\end{figure}

\begin{proof}
We prove the lemma for the case of a $t_{wn_e}$-interval; the other 3 cases
follow by symmetry. By Lemma~\ref{lem:growthrates}, if
$[x_l,x_r]$ is a $t_{wn_e}$-interval satisfying the conditions of the lemma then
${\tt r}(x_l)+z_{wn_e}+z_{wn_w}+\frac{1}{2}z_{s_wn_w} = {\tt r}(x_r) + \frac{1}{2}z_{s_en_e}+z_{en_e}+z_{s_we} + z_{s_ee}$
which implies
\begin{equation}
\label{eq:radii}
\sqrt{3} z_{wn_e} - \frac{\sqrt{3}}{2}z_{s_en_e} \leq \sqrt{3}({\tt r}(x_r) - {\tt r}(x_l) + z_{en_e}+z_{s_we} + z_{s_ee})
\end{equation}
Since $t(x_r) = t_{\ast e}$, we have 
$t(x) \not= t_{\ast w}, t_{\ast n_w}$ for $x_r \leq x \leq e(x_r)$, and thus, the interval
$[x_l,e(x_r)]$ is a $t_{wn_e}$-interval such that $t(x) = t_{\ast n_e}$ or
$t(x) = t_{\ast e}$ when $x_r \leq x \leq e(x_r)$.  Therefore, 
by Lemma~\ref{lem:gentlepath} we have 
%
% ${\tt r}(x_r) + z_{\ast n_e} + z_{\ast e}+(\sqrt{3}-1)\theta_f(x_l) \leq \sqrt{3}{\tt r}(x_l)$
%
${\tt r}(x_r) + z_{\ast n_e} + z_{\ast e} \leq {\tt r}(x_l) + (\sqrt{3}-1)\delta_f(x_l)$
which implies
\[
z_{wn_e} + z_{s_en_e} \leq (\sqrt{3}-1)\delta_f(x_l) - ({\tt r}(x_r) - {\tt r}(x_l) + z_{en_e}+z_{s_we} + z_{s_ee})
	\]
together with inequality~(\ref{eq:radii}), we get
\[(\sqrt{3}+1)z_{wn_e} \leq (\sqrt{3}-1)({\tt r}(x_r) - {\tt r}(x_l) +z_{en_e}+z_{s_we}+z_{s_ee}+\delta_f(x_l))\]
and multiplying both sides by $(\sqrt{3} - 1)/\sqrt{3}$
\begin{equation}
\label{eq:radii2}
\frac{2}{\sqrt{3}}z_{wn_e} \leq \left(\frac{4}{\sqrt{3}}-2\right)({\tt r}(x_r) - {\tt r}(x_l) +z_{en_e}+z_{s_we}+z_{s_ee}+\delta_f(x_l))
\end{equation}
%
% The fact that $e(x_r) - w(x_l) = {\tt r}(x_l) + z_{\ast n_w}+z_{\ast e}+z_{\ast n_e} + {\tt r}(x_r)$ (refer to Fig.~\ref{fig:linearworstcase}-(a))
% together with~(\ref{eq:radii}) gives us
% ${\tt r}(x_r)+z_{en_e}+z_{s_we}+z_{s_ee} \leq \frac{1}{2}(e(x_r) - w(x_l))$ and
% so~(\ref{eq:radii2}) implies inequality~(\ref{eq:LWC}). Because
% $e(x_r)-w(x_l) \leq 2(x_r-w(x_l))$, inequality~(\ref{eq:LWC2}) follows. 
%
By Lemma~\ref{lem:growthrates}, we have
$z_{en_e}+z_{s_we}+z_{s_ee} \leq \frac{1}{2}(w(x_r) - w(x_l))$ and 
so~(\ref{eq:radii2}) implies inequality~(\ref{eq:LWC}).
Because $0 \leq w(x_r) - w(x_l)$ inequality~(\ref{eq:LWC}) implies
inequality~(\ref{eq:LWC2}). 
\end{proof}

\begin{lemma}
\label{lem:softrecovery}
(Refer to Fig.~\ref{fig:linearworstcase}-(b)) 
Given the assumptions of Lemma~\ref{lem:gentlepath}, 
if $[x_l,x_r]$ is a $t_{wn_e}$-, $t_{s_ew}$-, $t_{en_w}$-, or $t_{s_we}$-interval,
if $t(x) \not= t_{en_w}$, $t_{s_we}$, $t_{wn_e}$, or $t_{s_ew}$, respectively, when
$x_l \leq x \leq x_r$, and if $t(x_r) = t_{\ast w}$, $t_{w \ast}$, $t_{\ast e}$,
or $t_{e \ast}$, respectively, then
\begin{align}
& \quad \frac{2}{\sqrt{3}}z_{wn_e} - \frac{1}{\sqrt{3}}(z_{s_wn_w}+z_{s_en_w}), 
\frac{2}{\sqrt{3}}z_{s_ew} - \frac{1}{\sqrt{3}}(z_{s_wn_w}+z_{s_wn_e})  \nonumber \\
\leq & \quad \left(\frac{2}{\sqrt{3}}-1\right)(w(x_r) - e(x_l) + 2\delta_f(x_l)) \label{eq:softrecovery} \\
& \quad \frac{2}{\sqrt{3}}z_{en_w} - \frac{1}{\sqrt{3}}(z_{s_wn_e}+z_{s_en_e}),
\frac{2}{\sqrt{3}}z_{s_we} - \frac{1}{\sqrt{3}}(z_{s_en_w}+z_{s_en_e}) \nonumber \\
\leq & \quad \left(\frac{2}{\sqrt{3}}-1\right)(w(x_r) - e(x_l) + 2\delta_b(x_r)), \nonumber
\end{align}
respectively.
\end{lemma}

\begin{proof}
We prove the lemma for the case of a $t_{wn_e}$-interval only; the other 3 cases
can be seen to follow by symmetry. Then, 
since $[x_l,x_r]$ is a $t_{wn_e}$-interval and $t(x_r)=t_{\ast w}$, point $u(x_r)$
must be the NW point of $H(x_r)$ as shown in
Fig.~\ref{fig:linearworstcase}-(b). The point $u(x_r)$ must lie on side $n_w$ of 
$H(x)$ for $w(x_r) \leq x \leq x_r$ and so  $t(x) = t_{\ast n_w}$ when
$w(x_r) \leq x \leq x_r$. This, together with Lemma~\ref{lem:growthrates} on the growth of $w(\cdot)$
and the fact that $t(x) \not= t_{en_w}$ when $w(x_r) \leq x \leq x_r$,
implies that $w(x_r) - w(w(x_r)) = {\tt r}(w(x_r)) \leq \frac{1}{2}z_{s_wn_w}+z_{s_en_w} \leq z_{s_wn_w}+z_{s_en_w}$.
It also implies that all $t_{\ast n_e}$ and $t_{\ast e}$ transitions
in interval $[x_l,x_r]$ take place within interval $[x_l,w(x_r)]$ and so by
Lemma~\ref{lem:growthrates} on the growth of $e(\cdot)$,  $2z_{wn_e} \leq e(w(x_r)) - e(x_l)$.
Therefore, $2z_{wn_e}-(z_{s_wn_w}+z_{s_en_w}) \leq e(w(x_r))-{\tt r}(w(x_r))-e(x_l) = w(x_r)-e(x_l)$ and:
\begin{equation}
\label{eq:softrecovery2}
\left(\frac{4}{\sqrt{3}} - 2\right)z_{wn_e}-\left(\frac{2}{\sqrt{3}} - 1\right)(z_{s_wn_w}+z_{s_en_w}) \leq\
\left(\frac{2}{\sqrt{3}}-1\right)(w(x_r)-e(x_l))
\end{equation}

By Lemma~\ref{lem:gentlepath} we have
$z_{\ast e}+z_{\ast n_e} \leq {\tt r}(x_l)+(\sqrt{3}-1)\delta_f(x_l)$ and by
Lemma~\ref{lem:growthrates} on the growth rate of ${\tt r}(\cdot)$, we have 
${\tt r}(x_l)+z_{wn_e}-(z_{s_wn_e}+z_{s_en_e}+z_{en_e}+z_{s_we}+z_{s_ee}) \leq {\tt r}(w(x_r))$.
Using ${\tt r}(w(x_r)) \leq z_{s_wn_w}+z_{s_en_w}$ from above, we get
$2z_{wn_e} - (z_{s_wn_w}+z_{s_en_w}) \leq (\sqrt{3}-1)\delta_f(x_l)$
and multiplying both sides by $1 - \frac{1}{\sqrt{3}}$
\[\left(2-\frac{2}{\sqrt{3}}\right)z_{wn_e} - \left(1-\frac{1}{\sqrt{3}}\right)(z_{s_wn_w}+z_{s_en_w}) \leq
	\left(\frac{4}{\sqrt{3}}-2\right)\delta_f(x_l)\]
Summing~(\ref{eq:softrecovery2}) with this inequality,
we obtain~(\ref{eq:softrecovery}).
\end{proof}

We define a time $x$ in the interval $[{\tt x}(p), {\tt x}(q)]$ to be
{\em left critical} 
or {\em right critical} if $x$ is the left boundary of a maximal strict 
left interval or the right boundary of a maximal strict 
right interval, respectively. We also define the start and end
coordinates ${\tt x}(p)$ and ${\tt x}(q)$ to be both left and right critical. A
time $x$ is said  to be critical if it is left or right critical.
This lemma, illustrated in Fig.~\ref{fig:induction}, 
 bounds $P(x)$ by making use of amortization
Lemmas~\ref{lem:linearworstcase} and~\ref{lem:softrecovery}:
%inductively
%prove that the growth rate of $P(x)$ is at most $\frac{1n_w}{\sqrt{3}}-2$ on
%average:
%Let $P'(x)=P(x)-\frac{6}{\sqrt{3}}x$ and $\Delta P'(x_i,x_j)= P(x_i,x_j)-\frac{6}{\sqrt{3}}(x_j-x_i)$.  Then:

\begin{figure}
\induction

\caption{Illustration of Lemma~\ref{lem:criticalpoints}. (a) If $x$ is a left
critical point then $P(x)$ is no greater than
$(\frac{10}{\sqrt{3}}-2)x - (\frac{4}{\sqrt{3}} - 2)\delta_f(x)$.
In the case of a $t_{wn_e}$-interval starting at $x$, this allows some of the
extra $\frac{2}{\sqrt{3}}$ growth of bad transition $t_{wn_e}$ within the interval
to be charged to the ``past'' as cost $\frac{4}{\sqrt{3}}-2$ over the interval
$[x-\delta_f(x),x]$. 
(b) If $x$ is a right critical point then $P(x)$ is no greater than
$(\frac{10}{\sqrt{3}}-2)x + (\frac{4}{\sqrt{3}} - 2)\delta_b(x)$.
In the case of a $t_{en_w}$-interval ending at $x$, this means that some
of the extra $\frac{2}{\sqrt{3}}$ growth of bad transition $t_{en_w}$ within the
interval has been charged to the ``future'' as cost $\frac{4}{\sqrt{3}}-2$
over the interval $[x, x+\delta_b(x)]$.}
\label{fig:induction}
\end{figure}

\begin{lemma}
\label{lem:criticalpoints}
Given the assumptions of Lemma~\ref{le:mainlemmaB}:
\begin{itemize}
\item If $x$ is a left critical point then $P(x) \leq \left(\frac{10}{\sqrt{3}}-2\right)x - \left(\frac{4}{\sqrt{3}}-2\right)\delta_f(x)$
\item If $x$ is a right critical point then $P(x) \leq \left(\frac{10}{\sqrt{3}}-2\right)x + \left(\frac{4}{\sqrt{3}}-2\right)\delta_b(x)$
\end{itemize}
\end{lemma}

\begin{proof}
We proceed by induction, using  the left to right ordering of critical points.
The first critical point (i.e., the base case) is ${\tt x}(p)$. Since
$P(x) = 0 = {\tt x}(p) = \delta_f({\tt x}(p)) = \delta_b({\tt x}(p))$,
the claim holds. We assume now that the claim holds
for critical point $x_l$ and prove that it holds for the next critical point
$x_r$ (and so $x_l < x_r$ and no critical point exists between $x_l$ and $x_r$).

\noindent{\bf Case LL:} We first consider the case when $x_l$ and $x_r$ are both left critical points. We will show that in that case:
\begin{align}
\label{eq:leftleft}
\Delta P(x_l,x_r) \leq \left(\frac{10}{\sqrt{3}}-2\right)(x_r - x_l) - \left(\frac{4}{\sqrt{3}}-2\right)
({\tt r}(x_r) - \delta_f(x_l))
\end{align}
Combining the above inequality with the inductive hypothesis for $P(x_l)$,
we obtain $P(x_r) = P(x_l) + \Delta P(x_l,x_r) \leq (\frac{10}{\sqrt{3}}-2)x_r
- (\frac{4}{\sqrt{3}}-2){\tt r}(x_r) \leq (\frac{10}{\sqrt{3}}-2)x_r
- (\frac{4}{\sqrt{3}}-2)\delta_f(x_r)$, as illustrated in Fig.~\ref{fig:proofA},
and complete the proof for this case.

We now show that~(\ref{eq:leftleft}) holds. 	
Both $x_l$ and $x_r$ being left critical implies that transitions $t_{s_we}$ and $t_{en_w}$ do not take place
within interval $[x_l, x_r]$. W.l.o.g. we assume that $x_l$ is the left
boundary of a maximal strict $t_{wn_e}$-interval. If $x_q$ is the right boundary
of this interval then $x_q \leq x_r$, $t(x_q) = t_{\ast w}$ or $t(x_q) = t_{s_ee}$,
and $t(x)$ is not bad when $x_q < x < x_r$. Let $z_{ij}$ be the time within
interval $[x_l,x_q]$ spent in transition $t_{ij}$, for every transition $t_{ij}$,
and let $z = \sum z_{ij} = x_q-x_l$.

If $t(x_q) = t_{\ast w}$ then by Lemma~\ref{lem:growthrates} we have:
\begin{align*}
\Delta P(x_l,x_r) & \leq \frac{8}{\sqrt{3}}z_{wn_e} + \frac{4}{\sqrt{3}}(z_{s_wn_w}+z_{s_en_w}) + \frac{6}{\sqrt{3}}(z - z_{wn_e} - z_{s_wn_w} - z_{s_en_w} + x_r - x_q) \\
                  & \leq \frac{6}{\sqrt{3}}(x_r - x_l) + \frac{2}{\sqrt{3}}z_{wn_e} - \frac{2}{\sqrt{3}}(z_{s_wn_w}+z_{s_en_w}) \\
                  & \leq \frac{6}{\sqrt{3}}(x_r - x_l) + \left(\frac{4}{\sqrt{3}}-2\right)(w(x_q)-x_l+\delta_f(x_l))
\end{align*}
The last inequality follows from Lemma~\ref{lem:softrecovery} and from 
$x_l < e(x_l)$ and $w(x_q)-x_l > {\tt x}(u(x_q)) - {\tt x}(u(x_l)) \geq 0$. 
Using $w(x_q) \leq w(x_r) = x_r - {\tt r}(x_r)$, we prove (\ref{eq:leftleft}).
\begin{figure}
\proofA
\caption{The proof of Lemma~\ref{lem:criticalpoints} in the case when $x_l$ and $x_r$ are both left critical and
$x_l$ is the left boundary of a $t_{wn_e}$-interval. By induction,
$P(x_l)$ is amortized as illustrated in Fig.~\ref{fig:induction}-(a) and shown
above in blue. We show that
$\Delta P(x_l,x_r)$ can be amortized as cost $\frac{10}{\sqrt{3}}-2$ over
interval $[x_l,w(x_r)]$ {\em plus} cost
$\frac{6}{\sqrt{3}}$ over interval $[w(x_r), x_r]$, shown in red.
This implies that $P(x_r)$ can be amortized as shown in Fig.~\ref{fig:induction}-(a).}
\label{fig:proofA}
\end{figure}

If $t(x_q) = t_{s_ee}$ then by Lemma~\ref{lem:growthrates} we have:

\begin{align*}
\Delta P(x_l,x_r) & \leq \frac{8}{\sqrt{3}}z_{wn_e} + \frac{6}{\sqrt{3}}(z - z_{wn_e} + x_r - x_q) \leq \frac{6}{\sqrt{3}}(x_r - x_l) + \frac{2}{\sqrt{3}}z_{wn_e} \\
                  & \leq \frac{6}{\sqrt{3}}(x_r - x_l) + \left(\frac{4}{\sqrt{3}}-2\right)(x_q-x_l+\delta_f(x_l))
\end{align*}
The last inequality follows from Lemma~\ref{lem:linearworstcase}.
Now, since $t(x_q) = t_{s_ee}$, then $x_q \leq b(x_q) \leq b(x_r) = w(x_r) = x_r - {\tt r}(x_r)$, we prove (\ref{eq:leftleft}) and complete the proof in this case as well.

\noindent{\bf Case RR:} The case when $x_l$ and $x_r$ are both right critical
points can be handled using an argument that is symmetric to the one we used for
case LL.
 
\noindent{\bf Case RL:} If $x_l$ is right critical and $x_r$ is left critical then no bad transitions
take place within the interval $[x_l,x_r]$ and
$\Delta P(x_l,x_r) \leq \frac{6}{\sqrt{3}}(x_r-x_l)$. If $x_l+\delta_b(x_l) \leq x_r-\delta_f(x_r)$ then by induction:
\begin{align*}
	P(x_r) & = P(x_l) + \Delta P(x_l,x_r) \\
               & \leq \left(\frac{10}{\sqrt{3}}-2\right)x_l+\left(\frac{4}{\sqrt{3}}-2\right)\delta_b(x_l)
	       + \frac{6}{\sqrt{3}}(x_r-x_l) \\
               & \leq \left(\frac{10}{\sqrt{3}}-2\right)x_r-\left(\frac{4}{\sqrt{3}}-2\right)\delta_f(x_r)
\end{align*}

If $x_l+\delta_b(x_l) > x_r-\delta_f(x_r)$, consider the interval
$I = [x_r-\delta_f(x_r), x_l+\delta_b(x_l)]$.
We argue that this interval
is contained within interval $[x_l, x_r]$ and that $\Delta P$ has a growth
rate of just $\frac{4}{\sqrt{3}}$ in interval $I$. To do
this, we assume w.l.o.g. that $t(x_l) = t_{en_w}$ with $u(x_l) = u_s$ and
$\ell(x_l) = l_s$, as illustrated in Fig.~\ref{fig:proofB}-(a). %Suppose that
%$u(x_l)$ is not the point on the W side of $H(x_r)$ (i.e., 
%$b(x_l) \not= b(x_r) = w(x_r)$). 
%Then either the point on the W side of $H(x_r)$
Since $x_r$ is left critical, one of $u(x_r)$ or $\ell(x_r)$ must lie on the
$w$ side of $H(x_r)$. However, the point on the $w$ side of $H(x_r)$ and 
having abscissa $w(x_r)$ cannot be a point $\ell(x_r) = l_t$ in $L$ 
(with $t \geq s$) because %, using Lemma~\ref{lem:bandf}-(b), 
we would have
$w(x_r) = {\tt x}(l_t) \geq {\tt x}(l_s) = e(x_l)$ which contradicts 
$x_l+\delta_b(x_l) > x_r-\delta_f(x_r)$.
Therefore, the point on the $w$ side of $H(x_r)$ must be 
a point $u_t \in U$ for some $t \geq s$ and $t(x) = t_{\ast n_w}$ for $x \in
[{\tt x}(u_t),x_r]$.  Furthermore, letting $x' = \max\{x_l, {\tt x}(u_t)\}$,
$t(x)$ must be $t_{wn_w}$, $t_{s_wn_w}$, or $t_{s_en_w}$ for $x$ in interval
$[x',x_r]$, and by
Lemma~\ref{lem:growthrates}, $\frac{\Delta {\tt r}(x)}{\Delta x} \geq 0$ for
$x$ in that interval. Thus, $x_r = w(x_r) + {\tt r}(x_r) \geq {\tt x}(u_t) + {\tt r }(x')$.
If $x' = x_l$ then $u_t = u_s$ as illustrated in Fig.~\ref{fig:proofB}-(a), and
then $x_r \geq {\tt x}(u_s) + {\tt r}(x_l) = x_l + \delta_b(x_l)$. 
If, on the other hand, $x' = {\tt x}(u_t)$ then $x_r \geq {\tt x}(u_t) + {\tt r}({\tt x}(u_t)) = e({\tt x}(u_t)) \geq e(x_l)$ %by Lemma~\ref{lem:bandf}-(d), 
and we still get $x_r \geq e(x_l) = x_l + {\tt r}(x_l) \geq x_l + \delta_b(x_l)$. 
%
%If $b(x_l) = b(x_r)$ (see Fig.~\ref{fig:proofB}) then $t(x) = t_{*n_w}$ for all
%$x$ in $[x_l,x_r]$ which, using 
%Lemma~\ref{lem:growthrates}, implies that $x_r - x_l > {\tt r}(x_l) - \theta_b(x_l)$,
%or $x_r > e(x_l) - \theta_b(x_l)$. 
A symmetric argument can be used to show
that $x_l \leq x_r-\delta_f(x_r)$ and therefore interval I is contained within
$[x_l,x_r]$.

\begin{figure}[!b]
\proofB

\caption{The proof of Lemma~\ref{lem:criticalpoints}. (a) The case when
$x_l$ is right critical, $x_r$ is left critical,
$x_l+\delta_b(x_l) > x_r-\delta_f(x_r)$, $t(x_l) = t_{en_w}$, and
$x' = \max\{x_l, {\tt x}(u_t)\} = x_l$. By induction,
$P(x_l)$ is amortized as illustrated in Fig.~\ref{fig:induction}-(b) and shown
above in blue. $\Delta P(x_l,x_r)$ can be amortized as cost 
$\frac{4}{\sqrt{3}}$ over interval 
$I=[x_r-\delta_f(x_r),x_l+\delta_b(x_l)]$ (shown in red) {\em plus} cost 
$\frac{6}{\sqrt{3}}$
over the interval $[x_l,x_r]$ not including $I$ (empty in the example). 
Thus $P(x_r)$ can be amortized as shown in Fig.~\ref{fig:induction}-(a).
(b) The case when $x_l$ is left critical and $x_r$ is right critical. By
induction, $P(x_l)$ is amortized as illustrated in Fig.~\ref{fig:induction}-(a)
and shown above in blue. $\Delta P(x_l,x_r)$ can be amortized as cost 
$\frac{10}{\sqrt{3}}-2$ over interval $[x_l,x_r]$ {\em plus} cost 
$\frac{4}{\sqrt{3}}-2$ over intervals $[x_l-\delta_f(x_l),x_l]$ and 
$[x_r,x_r+\delta_b(x_r)]$ (shown in red). Thus $P(x_r)$ can be amortized
as shown in Fig.~\ref{fig:induction}-(b).}
\label{fig:proofB}
\end{figure}

For every $x$ in interval $I$, %using Lemma~\ref{lem:bandf}-(e) 
we have
$x \geq w(x_r) = b(x_r) \geq b(x)$ and $x \leq e(x_l) \leq f(x_l) \leq f(x)$.
Given that no bad transitions take place
within $[x_l,x_r]$, $t(x)$ must be $t_{s_en_w}$ for $x \in I$ and therefore
$\Delta P$ has a growth rate bounded by $\frac{4}{\sqrt{3}}$ in $I$. Then,
as illustrated in Fig.~\ref{fig:proofB}-(a),
\begin{align*}
	P(x_r) & = P(x_l) + \Delta P(x_l,x_r) \\
	       & \leq \left(\frac{10}{\sqrt{3}}-2\right)x_l + \left(\frac{4}{\sqrt{3}}-2\right)\delta_b(x_l)
	       + \frac{4}{\sqrt{3}}|I| + \frac{6}{\sqrt{3}}(x_r-x_l-|I|) \\
	       & = \frac{6}{\sqrt{3}}x_r + \left(\frac{4}{\sqrt{3}}-2\right)(x_l + \delta_b(x_l))
	       - \frac{2}{\sqrt{3}}(x_l + \delta_b(x_l) - (x_r -\delta_f(x_r))) \\
	       & \leq \frac{6}{\sqrt{3}}x_r + \left(\frac{4}{\sqrt{3}}-2\right)(x_l + \delta_b(x_l))
	       - \left(\frac{4}{\sqrt{3}}-2\right)(x_l + \delta_b(x_l) - (x_r -\delta_f(x_r))) \\
               & = \left(\frac{10}{\sqrt{3}}-2\right)x_r - \left(\frac{4}{\sqrt{3}}-2\right)\delta_f(x_r)
\end{align*}

%Otherwise $f(x_i)> b(x_j)$.  Here it is still the case that there are no bad transitions in $[x_i,x_j]$ but now there are also good transitions.  Let $m=\max\{b(x_j),x_i\}$ and $M=\min\{f(x_i),x_j\}$.  Then $[m,M]$ is the largest region contained in both $[x_i,x_j]$ and in $[b(x_j),f(x_i)]$. For $x \in [b(x_j),f(x_i)]$,  $b(x)\leq b(f_j)<x$ and, $f(x) \geq f(x_i)>x$.  Since $[m,M]$ is contained in $[b(x_j),f(x_i)]$ and $[x_i,x_j]$, and since $[x_i,x_j]$ contains no bad transitions, every transition in $[m,M]$ is either $t_{s_en_w}$ or $t_{s_wn_e}$.  Since both of these transitions change the perimeter at a rate of $\frac{4}{\sqrt{3}}$  we have $\Delta P'(x_i,x_j) \leq -\frac{2}{\sqrt{3}} (M-m)$.  Using this we have:
%\[\begin{split}
% P'(x_j) \leq P'(x_i)+\Delta P'(x_i,x_j) &\leq (\frac{4}{\sqrt{3}}-2)(x_i+({\tt r}(x_i)-\theta_b(x_i))) -\frac{2}{\sqrt{3}} (M-m) \\
% &\leq (\frac{4}{\sqrt{3}}-2)(x_i+({\tt r}(x_i)-\theta_b(x_i))-(M-m)) \\
%							&= (\frac{4}{\sqrt{3}}-2)((m-\theta_b(x_i))+(f(x_i)-M))
%\end{split}\]

%Note that $\max\{b(x_j),x_i\}-\theta_b(x_i) \leq b(x_j)$ and that $f(x_i)-\min\{f(x_i),x_j\} \leq \theta_f(x_j)$.  Thus we have:
%\[\begin{split}
% P'(x_i)+\Delta P'(x_i,x_j) &\leq (\frac{4}{\sqrt{3}}-2)(b(x_j)+\theta_f(x_j))\\
% 							&= (\frac{4}{\sqrt{3}}-2)(x_j-({\tt r}(x_j)-\theta_f(x_j)))
%\end{split}\]

% % % % % % % % % % % % % % % % % % % % %ABOVE LATER % % % % % % % % %

\noindent{\bf Case LR:} 
Finally, we consider the case when $x_l$ is left critical and $x_r$ is right
critical. We will show that in this case:
\begin{equation}
\label{eq:keyineq0}
\Delta P(x_l,x_r) \leq \left(\frac{10}{\sqrt{3}}-2\right)(x_r - x_l) + \left(\frac{4}{\sqrt{3}}-2\right)(\delta_f(x_l) + \delta_b(x_r))
\end{equation}
Note that if this inequality holds then (refer to Fig.~\ref{fig:proofB}-(b)):
\begin{align*}
P(x_r) & = P(x_l) + \Delta P(x_l,x_r) \leq \left(\frac{10}{\sqrt{3}}-2\right)x_r + \left(\frac{4}{\sqrt{3}}-2\right)\delta_b(x_r)
\end{align*}

% % % % % % % % % % % % % % % % Start of Full section Lemma % % % % % % % %
Let $z_{ij}$ be the time within
interval $[x_l,x_r]$ spent in transition $t_{ij}$, for every transition $t_{ij}$,
and let $z = \sum z_{ij} = x_r-x_l$.
We assume w.l.o.g. that $t(x_l) = t_{wn_e}$. Note that either
$t(x_r) = t_{s_we}$ or $t(x_r) = t_{en_w}$.

\noindent{\bf Subcase LR.1:} We consider the case when
$t(x_r) = t_{s_we}$ first; this means that $t(x) \not= t_{s_ew}, t_{en_w}$
when $x \in [x_l, x_r]$. 
Either $[x_l,x_r]$ is a $t_{wn_e}$-interval with $t(x_r) = t_{s_we}$, and so
Lemma~\ref{lem:linearworstcase} applies to interval $[x_l,x_r]$, or
$[x_l,x_q]$ is a $t_{wn_e}$-interval for some $x_q < x_r$ with 
$t(x_q) = t_{s_ww}$,
and thus Lemma~\ref{lem:softrecovery} applies to interval $[x_l,x_q]$. In the
second case note that $t(x) \not= t_{wn_e}$ when $x \in [x_q, x_r]$ because, 
otherwise, there would be a left critical point between $x_l$ and $x_r$, 
a contradiction. This, together with $w(x_q) \leq w(x_r) < e(x_r)$,
%(by Lemma~\ref{lem:bandf}-(d)), 
gives:
\begin{equation}
\label{eq:keyineq}
\frac{2}{\sqrt{3}}z_{wn_e}-\frac{1}{\sqrt{3}}(z_{s_wn_w}+z_{s_en_w}) \leq \left(\frac{2}{\sqrt{3}}-1\right)(e(x_r)-e(x_l)+2\delta_f(x_l))
\end{equation}
Note that this inequality holds for the first case as well
(using Lemma~\ref{lem:linearworstcase}). Via symmetric arguments either $[x_l,x_r]$ is a
$t_{s_we}$-interval with $t(x_l) = t_{wn_e}$, and so Lemma~\ref{lem:linearworstcase}
applies to interval $[x_l,x_r]$, 
or $[x_{q'},x_r]$ is a $t_{s_we}$-interval for some $x_{q'} > x_l$ with
$t(x_{q'}) = t_{en_e}$, and thus Lemma~\ref{lem:softrecovery} applies to interval
$[x_{q'},x_r]$. Either way, the inequality
\[\frac{2}{\sqrt{3}}z_{s_we}-\frac{1}{\sqrt{3}}(z_{s_en_w}+z_{s_en_e})\leq \left(\frac{2}{\sqrt{3}}-1\right)(w(x_r)-w(x_l)+2\delta_b(x_r))\]
holds and it, together with inequality~(\ref{eq:keyineq}), gives:
\[\frac{2}{\sqrt{3}}(z_{wn_e} +z_{s_we})-\frac{2}{\sqrt{3}}(z_{s_wn_w}+z_{s_en_w}+z_{s_wn_e}+z_{s_en_e})\leq \left(\frac{4}{\sqrt{3}}-2\right)(x_r-x_l+\delta_b(x_r)+\delta_f(x_l))\]
We can now show that (\ref{eq:keyineq0}) holds:
\begin{align*} \Delta P(x_l, x_r) & \leq \frac{6}{\sqrt{3}}z + \frac{2}{\sqrt{3}}(z_{wn_e} +z_{s_we}) - \frac{2}{\sqrt{3}}(z_{s_wn_w}+z_{s_en_w}+z_{s_wn_e}+z_{s_en_e}) \\ 
	& \leq \frac{6}{\sqrt{3}}(x_r - x_l) + \left(\frac{4}{\sqrt{3}}-2\right)(x_r-x_l+\delta_b(x_r)+\delta_f(x_l)) \\
        & = \left(\frac{10}{\sqrt{3}}-2\right)(x_r - x_l) + \left(\frac{4}{\sqrt{3}}-2\right)(\delta_b(x_r) + \delta_f(x_l))
\end{align*}

\noindent{\bf Subcase LR.2:} We now consider the case when $t(x_l) = t_{wn_e}$ and $t(x_r) = t_{en_w}$. 
Note that this means that $t(x) \not= t_{s_ew}, t_{s_we}$ when $x \in [x_l, x_r]$. 
If 
\begin{equation}
\label{eq:keyineq2}
\frac{2}{\sqrt{3}}(z_{wn_e} +z_{en_w})-\frac{2}{\sqrt{3}}(z_{s_wn_w}+z_{s_en_w}+z_{s_wn_e}+z_{s_en_e})\leq
	\left(\frac{4}{\sqrt{3}}-2\right)(x_r-x_l+\delta_b(x_r)+\delta_f(x_l))
\end{equation}
then the above argument can be applied to obtain~(\ref{eq:keyineq0}). We will show next that if $t(x) = t_{\ast w}$ or $t(x) = t_{\ast e}$ for 
some $x \in [x_l, x_r]$ then inequality~(\ref{eq:keyineq2}) holds.

Consider the maximal strict $t_{wn_e}$-interval $[x_l,x_q]$ and the maximal
strict $t_{en_w}$-interval $[x_{q'},x_r]$. Recall that $t(x) \not= t_{en_w},t_{\ast w},t_{\ast e}$ for $x \in [x_l,x_q]$ and that $t(x) \not= t_{wn_e},t_{\ast w},t_{\ast e}$ for $x \in [x_{q'},x_r]$ and thus $x_l < x_q,x_{q'} < x_r$. 
If $t(x) = t_{\ast w}$ or $t(x) = t_{\ast e}$ 
for  some $x$ such that $x_l < x < x_r$, and using the assumption that there
are no critical points between $x_l$ and $x_r$, we must have that 
$x_l < x_q \leq x_{q'} < x_r$ and that $t(x) \not= t_{wn_e},t_{en_w}$ when
$x \in [x_q,x_{q'}]$. This also means that $t(x_q)$ is either $t_{s_ww}$ or 
$t_{s_ee}$ and that $t(x_{q'})$ is either $t_{s_ww}$ or $t_{s_ee}$. 

If $t(x_q)=t_{s_ww}$ then Lemma~\ref{lem:softrecovery} applies to interval
$[x_l,x_q]$. If $t(x_q) = t_{s_ee}$ then Lemma~\ref{lem:linearworstcase} applies
to interval $[x_l,x_q]$. Since $t(x) \not= t_{wn_e}$ when $x \in [x_q, x_r]$ 
and $w(x_q) \leq w(x_r) \leq e(x_r)$ and just as in case LR.1, it follows in 
both cases that
\begin{equation}
\label{eq:16}
\frac{2}{\sqrt{3}} z_{wn_e} - \frac{1}{\sqrt{3}}(z_{s_wn_w} + z_{s_en_w}) \leq \left(\frac{2}{\sqrt{3}} - 1\right) (e(x_r) - e(x_l) + 2 \delta_f(x_l))
\end{equation}
If $t(x_{q'})=t_{s_ee}$ then Lemma~\ref{lem:softrecovery}
applies to interval $[x_{q'},x_r]$. If $t(x_{q'}) = t_{s_ww}$ then Lemma~\ref{lem:linearworstcase} applies to interval $[x_{q'},x_r]$. Since $t(x) \not= t_{en_w}$ when $x \in [x_l, x_{q'}]$ and $w(x_l) \leq w(x_{q'}) \leq e(x_{q'})$ it follows 
in both cases that
\begin{equation}
\label{eq:17} 
\frac{2}{\sqrt{3}} z_{en_w} - \frac{1}{\sqrt{3}}(z_{s_wn_e} + z_{s_en_e}) \leq \left(\frac{2}{\sqrt{3}} - 1\right) (w(x_r) - w(x_l) + 2 \delta_b(x_r))
\end{equation}
By combining inequalities (\ref{eq:16}) and (\ref{eq:17}) we obtain inequality 
(\ref{eq:keyineq2}).

We assume now that $t(x) \not= t_{*w}, t_{*e}$ for all $x \in [x_l,x_r]$.
We can also assume that
inequality~(\ref{eq:keyineq2}) does not hold which implies
\[\frac{2}{\sqrt{3}}(z_{wn_e} +z_{en_w}) > \left(\frac{4}{\sqrt{3}}-2\right)(x_r - x_l + \delta_b(x_r)+\delta_f(x_l))
\]
which, using $z_{wn_e} +z_{en_w} \leq x_r - x_l$, implies:
\begin{equation}
\label{eq:keyineq3}
\left(4 - \frac{6}{\sqrt{3}}\right)(\delta_b(x_r)+\delta_f(x_l)) < \left(\frac{6}{\sqrt{3}}-2\right)(x_r - x_l)
\end{equation}

Now, $\Delta P(x_l,x_r) = U(x_r) - U(x_l) + L(x_r) - L(x_l)$. For all
$x \in [x_l,x_r]$, $t(x) = t_{*n_w}, t_{*n_e}$  and so, by
Lemma~\ref{lem:growthrates},
$\frac{\Delta p_N(x)}{\Delta x} = \frac{2}{\sqrt{3}}$. It follows that
$U(x_r) - U(x_l) \leq \frac{2}{\sqrt{3}}(x_r-x_l)$. In order to bound
$L(x_r) - L(x_l)$, we consider the following path $\mathcal{P}$ from point $\ell(x_l)$,
say $l_i$, that lies on side $w$ of $H(x_l)$ to point $\ell(x_r)$, say $l_j$
(where $i < j$), that lies on side $e$ of $H(x_r)$: 
$\ell(x_l)=l_i, u_i,u_{i+1},\dots,u_j,l_j=\ell(x_r)$ (refer to Fig.~\ref{fig:pathcost}).
Then, if $|\mathcal{P}|$ is the length of $\mathcal{P}$:
\begin{figure}
\center{\pathcost}

\caption{
The proof of Lemma~\ref{lem:criticalpoints}. Shown is the case when $x_l$ is
left critical, $x_r$ is right critical, $t(x_l) = t_{wn_e}$, $t(x_r)=t_{en_w}$,
and $t(x) \not= t_{\ast w}, t_{\ast e}$ for $x \in [x_l,x_r]$. The path
$\mathcal{P}$, defined as $l_i,u_i,u_{i+1},\dots,u_j,l_j$, is effectively
a shortcut for $l_i, l_{i+1}, \dots,l_j$.
% (a) Illustration of path $P$: $l_i,u_i,u_{i+1},\dots,u_j,l_j$.
% (b) $-p_S(l_i,x_l) + |l_iu_i|$ is maximized when $l_i$ is the SW corner
% of $H(x_l)$.
}
\label{fig:pathcost}
\end{figure}
\begin{align*}
L(x_r) - L(x_l) & = d_{T_{1j}}(p,l_j) + p_S(l_j,x_r) - d_{T_{1i}}(p,l_i) - p_S(l_i,x_l) \\
& \leq d_{T_{1i}}(p,l_i) + |\mathcal{P}| + p_S(l_j,x_r) - d_{T_{1i}}(p,l_i) -p_S(l_i,x_l) \\
& = -p_S(l_i,x_l) + |\mathcal{P}| + p_S(l_j,x_r) \\
& \leq -p_S(l_i,x_l) + d_2(l_i,u_i) + p_N(u_i,x_l) + U(x_r) - U(x_l) \\
& \quad - p_N(u_j,x_r) + d_2(l_j,u_j) + p_S(l_j,x_r)
\end{align*}
using the fact that the length of the path $u_i,u_{i+1},\dots,u_j$ is at most
$p_N(u_i,x_l) + U(x_r) - U(x_l) - p_N(u_j,x_r)$. 
%
% We note that $-p_S(l_i,x_l) + |l_iu_i|$ is maximized when $l_i$ is the SW corner
% of $H(x_l)$, as illustrated in Fig.~\ref{fig:pathcost}-(b). Assuming that is
% the case, let $\theta = \angle u_i l_i N(x_l)$ and note that
% $0 \leq \theta \leq \frac{\pi}{6}$. Then
% $p_S(l_i,x_l) = \frac{2}{\sqrt{3}}r(x)$,
% $|l_iu_i| = \frac{2}{\cos(\theta)}r(x)$, and
% $p_N(u_i,x_l) = -2\tan(\theta)r(x)$. Using
% $\frac{1}{\cos(\theta)} \leq 1 + (2-\sqrt{3})\tan(\theta)$ when
% $\theta \in [0,\frac{\pi}{6}]$, 
% we have
% % we can bound $-p_S(l_i,x_l) + |l_iu_i| + p_N(u_i,x_l)$ with
% \begin{align*}
% -p_S(l_i,x_l) + |l_iu_i| + p_N(u_i,x_l) & = \left(-\frac{2}{\sqrt{3}} + \frac{2}{\cos(\theta)} -2\tan(\theta)\right)r(x)\\
% & \leq \left(-\frac{2}{\sqrt{3}} + 2 + 2(2-\sqrt{3})\tan(\theta) -2\tan(\theta)\right)r(x) \\
% & = \left(2-\frac{2}{\sqrt{3}}\right)\left({\tt r}(x_l)-\sqrt{3}\tan(\theta){\tt r}(x_l)\right) \\
% & = \left(2-\frac{2}{\sqrt{3}}\right)\delta_f(x_l)
% \end{align*}
% Similarly, $-p_N(u_j,x_r) + |u_jl_j| + p_S(l_j,x_r) \leq \left(2-\frac{2}{\sqrt{3}}\right)\delta_b(x_r)$. Combining the bounds on
%
By Lemma~\ref{lem:switchpath},
$-p_S(l_i,x_l) + d_2(l_i,u_i) + p_N(u_i,x_l) \leq
 \left(2-\frac{2}{\sqrt{3}}\right)\delta_f(x_l)$ and
$-p_N(u_j,x_r) + d_2(l_j,u_j) + p_S(l_j,x_r) \leq \left(2-\frac{2}{\sqrt{3}}\right)\delta_b(x_r)$. Combining the bounds on
$U(x_r) - U(x_l)$ and $L(x_r) - L(x_l)$, we get
\[
\Delta P(x_l,x_r) \leq \left(2-\frac{2}{\sqrt{3}}\right)(\delta_b(x_r)+\delta_f(x_l))+\frac{4}{\sqrt{3}}(x_r-x_l)
\]
Summing the above inequality with inequality~(\ref{eq:keyineq3}) yields 
(\ref{eq:keyineq0}) and completes the inductive proof in this case as well.
\end{proof}

We can now provide a proof of (Amortization) Lemma~\ref{le:mainlemmaB}.
\begin{proof}[Proof of Lemma~\ref{le:mainlemmaB}]
Note that ${\tt x}(q)$ is a left critical point and that
$\delta_f({\tt x}(q)) = 0$.
	Therefore, by Lemma \ref{lem:criticalpoints},
$P({\tt x}(q)) \leq (\frac{10}{\sqrt{3}}-2){\tt x}(q)$. 
Since $2d_{T_{1n}}(p, q) = P({\tt x}(q))$, the lemma follows.
\end{proof}

}

\section{Conclusion}
\label{sec:conclusion}

The approach we use to bound the length of the shortest path in a Delaunay
triangulation $T$ between points $s$ and $t$ is to consider the linear
sequence $T_{1n}$ of triangles of $T$ that segment $[st]$ intersects.
We show that, in general, 
$T_{1n}$ can be split into 1) disjoint linear sequences of triangles
$T_{i_1j_1}, T_{i_2j_2}, \dots, T_{i_k,j_k}$ that contain no gentle path
and 2) $k-1$ gentle paths with a gentle path connecting the right vertex of
$T_{j_l}$ with the left vertex of $T_{i_{l+1}}$ for $l = 1, \dots, k-1$. 
The worst case stretch factor for the Delaunay triangulation is then the
maximum between the worst case stretch factors for 1) a path connecting the
leftmost and rightmost points in a linear sequence $T_{ij}$ that contains no
gentle path and 2) a gentle path.

(Main) Lemma~\ref{le:divide} and Lemma~\ref{le:lower_bound} show
that the worst case stretch factor for {\Large\varhexagon}-Delaunay
triangulations comes from gentle path constructions. It turns out that
similar conclusions can also be made regarding $\triangle$- and
$\square$-Delaunay triangulations.

For $\ocircle$-Delaunay triangulations, the situation seems to be different.
The lower bound construction by Bose et al.~\cite{BDLSV11} corresponds to
a gentle path construction and has stretch factor $1.5846$. The lower
bound construction by Xia and Zhang~\cite{XZ11} corresponds to a linear
sequence that contains no gentle path and has stretch factor $1.5932$.
We think that the worst case stretch factor for $\ocircle$-Delaunay 
triangulations will come from a construction similar to the one by Xia and 
Zhang~\cite{XZ11}. Therefore, to get a tight bound on the stretch factor of a
$\ocircle$-Delaunay triangulation one needs to develop techniques that
give tight bounds on the stretch factor of a linear sequence that contains
no gentle path.

%\begin{wrapfigure}{r}{6cm}
%\begin{center}
\iftoggle{abstract}
{\begin{figure}[!b]}
{\begin{figure}}
\begin{center}
%\vspace{-0.545cm}
\mickey
\caption{The Mickey Mouse {\Large\varhexagon}-Delaunay triangulation. The
inradii of $H_1$ and $H_n$ are both set to $1$. Edges that belong to a
shortest path from $s$ to $t$ are in bold.}
\label{fig:mickey}
\end{center}\end{figure}
%\end{center}
%\end{wrapfigure}

We have done so for {\Large\varhexagon}-Delaunay triangulations. 
Our (Amortization) Lemma~\ref{le:mainlemmaB} implies that for 
{\Large\varhexagon}-Delaunay triangulations the worst case
stretch factor for a linear sequence $T_{ij}$ with no gentle paths is $(\cT)$. 
It turns out that our analysis is tight: Figure~\ref{fig:mickey}
shows a construction--which we name the Mickey Mouse 
{\Large\varhexagon}-Delaunay triangulation--that, for any $\epsilon > 0$, 
can be extended to a
{\Large\varhexagon}-Delaunay triangulation whose shortest path between $s$ and
$t$ is at least $(\cT) d_x(s,t)-\epsilon$. Unsurprisingly, the construction
corresponds to the lower bound construction by Xia and Xhang~\cite{XZ11} for 
$\ocircle$-Delaunay triangulations.

Based on this we think that the techniques we developed for obtaining the 
tight bound in Lemma~\ref{le:mainlemmaB} will be useful in obtaining better
upper bounds for the stretch factor of other kinds of Delaunay triangulations.

\bibliography{paper}

\begin{thebibliography}{10}

\bibitem{Bonichon}
Nicolas Bonichon.
\newblock Personal communication, 2011.

\bibitem{BGHP15}
Nicolas Bonichon, Cyril Gavoille, Nicolas Hanusse, and Ljubomir Perkovi\'{c}.
\newblock Tight stretch factors for ${L}_1$- and ${L}_\infty$-{D}elaunay
  triangulations.
\newblock {\em Computational Geometry}, 48(3):237--250, 2015.

\bibitem{BCCS08}
Prosenjit Bose, Paz Carmi, Sebastien Collette, and Michiel Smid.
\newblock On the stretch factor of convex {D}elaunay graphs.
\newblock {\em Journal of Computational Geometry}, 1(1):41--56, 2010.

\bibitem{BDLSV11}
Prosenjit Bose, Luc Devroye, Maarten L\"offler, Jack Snoeyink, and Vishal
  Verma.
\newblock Almost all {D}elaunay triangulations have stretch factor greater than
  $\pi/2$.
\newblock {\em Computational Geometry}, 44(2):121--127, 2011.

\bibitem{Chew86}
L.~Paul Chew.
\newblock There is a planar graph almost as good as the complete graph.
\newblock In {\em Proceedings of the 2nd Annual ACM Symposium on Computational
  Geometry (SoCG)}, pages 169--177, 1986.

\bibitem{Chew89}
L.~Paul Chew.
\newblock There are planar graphs almost as good as the complete graph.
\newblock {\em Journal of Computer and System Sciences}, 39(2):205--219, 1989.

\bibitem{DFS90}
David~P. Dobkin, Steven~J. Friedman, and Kenneth~J. Supowit.
\newblock Delaunay graphs are almost as good as complete graphs.
\newblock {\em Discrete \& Computational Geometry}, 5(4):399--407, 1990.

\bibitem{KG92}
J.~Mark Keil and Carl~A. Gutwin.
\newblock Classes of graphs which approximate the complete {E}uclidean graph.
\newblock {\em Discrete \& Computational Geometry}, 7(1):13--28, 1992.

\bibitem{Xia13}
Ge~Xia.
\newblock The stretch factor of the {D}elaunay triangulation is less than
  1.998.
\newblock {\em SIAM Journal on Computing}, 42(4):1620--1659, 2013.

\bibitem{XZ11}
Ge~Xia and Liang Zhang.
\newblock Toward the tight bound of the stretch factor of {D}elaunay
  triangulations.
\newblock In {\em Proceedings of the 23rd Annual Canadian Conference on
  Computational Geometry (CCCG)}, 2011.

\end{thebibliography}

\end{document}